\documentclass{article}

\usepackage{amsmath,amssymb,amsthm}
\usepackage{bm}
\usepackage{xcolor}
\usepackage{hyperref}
\usepackage{graphicx}
\usepackage{mathtools}
\usepackage{caption}
\usepackage{subcaption}
\usepackage{placeins}
\usepackage[english]{babel}
\usepackage{algorithm}
\usepackage{algpseudocode}
\usepackage[utf8]{inputenc}

\usepackage{parskip}

\usepackage{booktabs}
\usepackage{multirow}

\usepackage[
  top=2.5cm,
  bottom=3.5cm,
  left=3.5cm,
  right=3cm
]{geometry}

\theoremstyle{plain}

\newtheorem{lemma}{Lemma}
\newtheorem{proposition}{Proposition}

\newtheorem*{remark}{Remark}

\title{Geometry-Aware Diffusion Approximate Posterior Sampling for Sparse-View and Limited-Angle CT}

\author{\normalsize{Honglei Brinkmann, Carola-Bibiane Schönlieb, Ander Biguri} \\
\footnotesize{Department of Applied Mathematics and Theoretical Physics, University of Cambridge, Cambridge, UK.}
}
\begin{document}
\maketitle

%%%%%%%%%%%%%%%%%%%%%%%%%%%%%%%%%%%%%%%%%%%%%%%%%%%%%%%%%%%%%%%%%%%%%%%%%%%%%%%% Abstract %%%%%%%%%%%%%%%%%%%%%%%%%%

\begin{abstract}
Sparse-view computed tomography (CT) reduces radiation dose and acquisition time and may mitigate motion artifacts. However, angular undersampling provides insufficient information to determine the image uniquely and stably. Limited-angle CT, arising from restricted angular coverage, produces strongly directional information loss associated with the missing angular range. In both settings, image directions may be strongly observed, weakly constrained, or unobservable, leading to severe ill-posedness and reconstruction ambiguity. Existing diffusion-based approaches incorporate measurement information through likelihood guidance, data-consistency operations, or range--null-space corrections. However, they do not generally use the continuously varying measurement sensitivity of the acquisition to jointly shape both reconstruction updates and stochastic exploration.

We propose a geometry-aware diffusion-guided stochastic reconstruction framework for sparse-view and limited-angle CT. Its central component is a regularized noise-weighted pullback metric constructed from the CT forward operator and measurement-noise covariance. This metric continuously adapts both the measurement-aware update and stochastic exploration according to directional measurement sensitivity, suppressing changes along strongly constrained directions while permitting greater exploration along weakly constrained and unobservable directions. We complement this geometry-aware update with a regularized data-consistency correction and approximately null-space-restricted stochastic perturbations, implemented matrix-free using forward and backprojection operations together with conjugate-gradient solves. Experiments on sparse-view, noisy, and limited-angle CT demonstrate competitive reconstruction quality, strong measurement consistency, and spatially resolved empirical uncertainty estimates.

\end{abstract}

%%%%%%%%%%%%%%%%%%%%%%%%%%%%%%%%%%%%%%%%%%%%%%%%%%%%%%%%%%%%%%%%%%%%%%%%%%%% Introduction %%%%%%%%%%%%%%%%%%%%%%%%%%

\section{Introduction}

%%% General setting for sparse view and limited angle CT reconstruction %%%

 Reducing radiation dose and acquisition time remains a central objective in CT and motivates sparse-view imaging, in which projections are acquired at only a small number of angles. Limited-angle CT arises from a different acquisition constraint: projections are unavailable over a contiguous angular range because of physical or geometric restrictions. Although the two settings produce different patterns of information loss, both leave many image directions unobservable or weakly constrained. Depending on the acquisition geometry and discretization, the corresponding forward operator may be rank deficient or severely ill-conditioned, resulting in an exact null space or a collection of directions that have only a small effect on the measurements. 

From an inverse-problem perspective, this loss of information creates intrinsic ambiguity. Different images may produce identical noiseless measurements or measurement differences too small to distinguish from noise. The measured sinogram therefore cannot uniquely and stably determine the unknown image, and prior information or regularization is required to obtain a plausible reconstruction \cite{Tikhonov1977IllPosed,Bertero1998Inverse}. 

%%% Classical CT reconstrution methods %%%

Classical CT reconstruction methods include analytical methods such as filtered back projection (FBP)~\cite{Kak1988CT} and iterative methods such as ART~\cite{Gordon1970ART}, SART~\cite{Andersen1984SART}, and statistical model-based iterative reconstruction~\cite{Fessler2000Statistical,Thibault2007StatisticalCT}. Under incomplete acquisition, iterative methods commonly combine a data-fidelity term with regularization based on assumptions such as image smoothness, sparsity, or piecewise constancy. Representative examples include total-variation and other sparsity-promoting formulations~\cite{Sidky2008TVCT,Donoho2006CS}. Although these methods improve stability, their performance depends on the selected regularizer, and overly restrictive regularization may remove fine anatomical structures, particularly under severe angular undersampling~\cite{Zhang2018Regularization,Chen2014EdgePreservingTV}. Moreover, these methods usually produce a single reconstruction and do not quantify the uncertainty in image components that are weakly constrained or unobservable from the measurements~\cite{Kaipio2005Statistical,Stuart2010BayesianInverse}.

\iffalse
Although these methods improve reconstruction stability and reduce
artifacts, their performance depends strongly on the selected regularizer. Handcrafted regularizers may oversmooth anatomical structures or remove fine details, particularly under severe angular undersampling~\cite{Zhang2018Regularization,Chen2014EdgePreservingTV}. Moreover, these methods usually produce a single reconstruction and do not quantify the uncertainty in image components that are weakly constrained or unobservable from the measurements~\cite{Kaipio2005Statistical,Stuart2010BayesianInverse}.
\fi

%%% Learning-based CT reconstrution methods %%%

\iffalse
Related model-based frameworks include plug-and-play
priors~\cite{Venkatakrishnan2013PnP} and regularization by
denoising~\cite{Romano2017RED}, which incorporate an external denoiser into the reconstruction algorithm. These frameworks are not inherently learning-based, but they can use pretrained neural denoisers as learned image priors.
\fi

Between classical regularization and fully learned reconstruction, plug-and-play priors~\cite{Venkatakrishnan2013PnP} and regularization by denoising~\cite{Romano2017RED} incorporate an external denoiser into an iterative reconstruction algorithm. These frameworks are not inherently learning-based, because the denoiser may be either handcrafted or learned.

More recently, learning-based methods have introduced data-driven image models into CT reconstruction. Representative approaches include supervised reconstruction networks~\cite{Jin2017FBPConvNet}, unrolled reconstruction networks~\cite{Adler2018LearnedPrimalDual}, deep image priors~\cite{Ulyanov2018DIP,Baguer2020DIPCT}, and implicit neural representations~\cite{Sitzmann2020SIREN,Song2023PINER}. The latter two
approaches optimize a network for the observed data and therefore do not necessarily require paired training images.

These methods can achieve strong reconstruction quality
\cite{DM4CT2024}. However, supervised methods require representative training data and may be sensitive to changes in anatomy, acquisition geometry, or noise level. Methods such as deep image priors and implicit neural representations do not require training on an external dataset. Instead, they represent the unknown image with a neural network whose parameters are fitted directly to each input sinogram through the CT forward model. Moreover, many of these approaches return a single reconstruction and do not directly characterize ambiguity caused by missing or noisy measurements.

%%% Diffusion-based reconstrution methods for General case %%%
% i moved some methods to ralated works. I shall just talk about the famous and general ones here
% I changed the perspective of linear and non linear to the way of update in diffusion model
% I think I should menstion but not make it too long

Diffusion and score-based models learn distributions of clean images and can therefore provide learned prior information for inverse problems. At each diffusion timestep, a pretrained diffusion model estimates the score of the noisy-image distribution obtained from clean CT images. This score provides a time-dependent learned prior direction that guides reconstruction toward plausible clean images. Measurement information can then be incorporated through conditional reverse transitions, likelihood-gradient guidance, or explicit data-consistency operations~\cite{Song2021ScoreSDE,Kawar2022DDRM,Chung2023DPS}. For example, DDRM constructs operator-conditioned reverse transitions for linear inverse problems~\cite{Kawar2022DDRM}; DPS incorporates the
measurement likelihood through gradient guidance~\cite{Chung2023DPS}; DDS incorporates conjugate-gradient data-consistency updates into diffusion sampling~\cite{Chung2024DDS}; and ReSample applies an optimization-based hard data-consistency step followed by stochastic resampling~\cite{Song2024ReSample}.

Diffusion models have also been applied to incomplete-view CT through sinogram restoration, image-domain degradation models, likelihood-guided sampling, and iterative data-consistency corrections~\cite{Yang2025CTSDM,Chen2025CvGDiff,Liu2024FourierDiffusion,Guan2024GMSD}.
A related family of methods exploits range--null-space structure, separating image components that affect the measured sinogram from components that are invisible to the acquisition. DDNM~\cite{Wang2023DDNM} uses a pseudoinverse to impose data consistency on the measurement-sensitive range-space component of each
diffusion estimate while retaining null-space content predicted by the diffusion model. RN-SDE~\cite{Guo2024RNSDE} and its noise-aware extension RNSD$^+$~\cite{Guo2025RNSDPlus} use a trained diffusion process to recover information missing from limited-angle measurements and repeatedly correct
the component that is visible in the acquired data. The conditional score-based null-space (CSN) model~\cite{Zhang2025CSN} similarly uses a score model to estimate missing image information while using an iterative least-squares step to enforce agreement with the measured projections. Like DDNM, these methods use range--null-space-based corrections that treat the data-visible and measurement-invisible components separately: measurement information corrects the range-space component, whereas the learned diffusion model estimates or retains the complementary null-space component.

In contrast, our method does not make a binary range--null-space separation its primary sampling mechanism. It uses a regularized noise-weighted pullback metric to continuously adapt both guided updates and stochastic exploration across image directions according to their measurement sensitivity. Approximate null-space exploration and regularized data-consistency correction are complementary components of this geometry-aware framework.

%%% What is still not fully explored %%%
Although these methods improve incomplete-view CT reconstruction, they do
not fully account for the fact that incomplete measurements constrain some
image directions strongly, other directions only weakly, and exact
null-space directions not at all. Existing methods use the forward operator
through likelihood guidance, data-consistency optimization, or separate
range--null-space corrections. However, they do not generally use the same continuously varying measurement sensitivity to govern both the measurement-aware update and
stochastic exploration.

%%% Our work %%%

In this work, we develop a geometry-aware diffusion-guided stochastic
reconstruction framework for sparse-view and limited-angle CT. Its central
component is a regularized noise-weighted pullback metric constructed from
the CT forward operator and measurement-noise covariance. The metric
continuously adapts the measurement-aware update and stochastic exploration
according to the sensitivity of each image direction to the acquired data:
directions that strongly affect the measured sinogram receive smaller
updates and less stochastic variation, whereas weakly constrained directions
receive relatively greater exploration. The framework additionally includes
approximate null-space exploration and regularized data-consistency
correction, both evaluated matrix-free through forward and backprojection
operations together with conjugate-gradient solves.

%%% Contribution %%%

\paragraph{Contributions.}
The main contributions of this work are as follows:
\begin{itemize}
\item We propose a geometry-aware diffusion-guided stochastic
reconstruction framework for sparse-view and limited-angle CT, combining a
learned image prior with a noise-weighted measurement likelihood. Its central
component is a regularized noise-weighted pullback metric that continuously
adapts both the measurement-aware update and stochastic exploration
according to directional measurement sensitivity.

\item We complement the whole-space metric update with approximate
null-space exploration and a regularized data-consistency correction.
All required inverse actions are evaluated matrix-free using forward and
backprojection operations together with conjugate-gradient solves, without
explicitly forming a Moore--Penrose pseudoinverse.

\item We demonstrate that the proposed framework achieves competitive
reconstruction quality, strong agreement with the measured data, and
spatially resolved empirical uncertainty estimates under sparse-view, noisy,
and limited-angle CT acquisition.
\end{itemize}

The remainder of this paper is organized as follows.
Section~\ref{related work} reviews diffusion-based inverse-problem methods
and incomplete-view CT approaches, positioning the proposed framework within
the existing literature. Section~\ref{method} first formulates the
incomplete-view CT posterior and its directional measurement geometry, and
then develops the proposed pullback-metric sampling framework, its
data-consistency correction, and approximate null-space exploration.
Section~\ref{expreiments} evaluates reconstruction quality, measurement
consistency, empirical sample variability, and the individual roles of the
geometry-aware components. Section~\ref{sec:lConclusion} concludes the paper.
The appendices~\ref{appendix} provide supporting derivations, implementation details, and
supplementary experimental results.

%%%%%%%%%%%%%%%%%%%%%%%%%%%%%%%%%%%%%%%%%%%%%%%%%%%%%%%%%%%%%%%%%%%%% Realted Work %%%%%%%%%%%%%%%%%%%%%%%%%%%%%%%%%

\section{Related Work}
\label{related work}

%%% 1. Diffusion and Score Based Priors for Inverse Problems %%%
% I think about the detailed methods for other works I shall move to the comparison of direct related methods rather than generaly case
% I think maybe I should listen to Claude to use a paragragh to seperate different categories;)

\paragraph{Diffusion priors for inverse problems.}
Diffusion and score-based generative models provide learned image priors for ill-posed inverse problems. The score-based SDE framework formulates generation as a reverse-time stochastic process and provides a general basis for incorporating measurement information during sampling~\cite{Song2021ScoreSDE}. DDRM constructs operator-conditioned reverse transitions for linear inverse problems~\cite{Kawar2022DDRM}, whereas DPS
uses approximate likelihood-gradient guidance for noisy linear and nonlinear inverse problems~\cite{Chung2023DPS}. MCG and $\Pi$GDM refine conditional-score guidance through manifold-constrained and pseudoinverse-inspired constructions, respectively~\cite{Chung2022MCG,Song2023PiGDM}. Other approaches combine pretrained diffusion models with optimization or consistency operations, including DDS,
ReSample, PSLD, and DMPlug~\cite{Chung2024DDS,Song2024ReSample,Rout2023PSLD,Wang2024DMPlug}.

%% 2. one geometry aware using mcmc 
\paragraph{Riemannian MCMC methods.}
Riemannian Langevin methods form a class of Markov chain Monte Carlo (MCMC) methods in which a metric determines how far the sampler moves and how much random perturbation it adds in different parameter-space directions~\cite{Girolami2011Riemannian}. Thus, they provide a general principle for using geometry to shape both deterministic and stochastic sampling components. Our framework uses this principle in a different setting: diffusion-guided reconstruction from incomplete CT measurements. Rather than constructing a general state-dependent posterior metric, we derive a noise-weighted pullback geometry from the CT forward operator and measurement-noise covariance. This geometry represents the directional sensitivity of the particular CT acquisition and is used to adapt the measurement-aware update and stochastic exploration during reconstruction.

%%% 2. Diffusion Models for incomplete CT %%%

\paragraph{Diffusion models for incomplete-view CT.}
Several diffusion models have been developed specifically for sparse-view and limited-angle CT. CT-SDM models view reduction through a deterministic sinogram-domain degradation process, whereas CvG-Diff learns to reverse image-domain artifacts caused by angular undersampling
~\cite{Yang2025CTSDM,Chen2025CvGDiff}. Other methods use sinogram-domain generation, spatial-frequency representations, projection conditioning, or physics-informed score models
~\cite{Liu2024FourierDiffusion,Guan2024GMSD,Peng2025ProjConsistent,Han2025PSDM}. These approaches show that diffusion priors can effectively reduce incomplete-view artifacts. They typically encode the acquisition geometry through a learned degradation, restoration, or conditioning process. In
contrast, our framework incorporates the CT forward operator and measurement-noise model directly into the reconstruction dynamics. The resulting noise-weighted pullback geometry continuously adapts both the measurement-aware update and stochastic exploration according to directional measurement sensitivity.

%%% 2. Anisotropic Diffusion Processes (Georgios.) I am not sure this would be very proper, since there are other works for the anisotropic diffusion processes, this part I have to check later %%%
\iffalse
A related but distinct line of work by Batzolis et al.~\cite{Batzolis2022NonUniform, Batzolis2021Conditional} introduces non-uniform diffusion models, in which different pixels or wavelet bands evolve under different forward diffusion SDEs. This yields multi-scale generative models with faster sampling and a principled multi-speed estimator for the conditional score. Our setting differs substantially: their non-uniformity acts on the forward diffusion process at training time, with a diagonal diffusion matrix of per-coordinate speeds, and is motivated by multi-scale efficiency for unconditional and conditional generation; we modify the posterior Langevin sampler at inference time, with a full non-diagonal noise covariance induced by the forward operator via the pullback metric, motivated by the geometry of ill-posed inverse problems. Both works share the broader observation that breaking the isotropic-Gaussian assumption can be beneficial in diffusion-style sampling, but the mechanisms, structures, and theoretical guarantees differ.
\fi

%%% 4. Hard Data Consistency and Range--null-space decompositon %%%
% I think I should talk about them in the same paragraph, because they are usually used jointly
% Most important, these two are two extra things we enhance our results but not main focus

\paragraph{Data consistency and range--null-space methods.}
Explicit data-consistency operations are widely used in diffusion-guided reconstruction. DDS combines diffusion sampling with conjugate-gradient data-consistency updates, while ReSample alternates diffusion generation, optimization-based consistency correction, and resampling
~\cite{Chung2024DDS,Song2024ReSample}. These operations can reduce disagreement with the acquired data. However, by themselves, they do not provide a common continuously varying noise-weighted
geometry for both the measurement-aware reconstruction update and stochastic exploration across strongly and weakly observed image directions.

A related family of methods uses range--null-space structure. For each diffusion estimate, DDNM uses the acquired measurements and a pseudoinverse to correct the image content that affects the measurements. It retains the remaining measurement-invisible content from the diffusion model~\cite{Wang2023DDNM}. RN-SDE uses a conditional mean-reverting diffusion process to reconstruct limited-angle CT images and repeatedly applies a range--null-space correction: the data-visible content is adjusted using the acquired measurements, while the missing content is inferred by the learned model~\cite{Guo2024RNSDE}. Its extension RNSD$^+$ addresses noisy limited-angle CT by treating missing views as a sinogram-inpainting problem and introducing a noise-aware
rectification mechanism~\cite{Guo2025RNSDPlus}. Similarly, CSN combines conditional score-based reconstruction with iterative least-squares data consistency so that missing information is estimated principally in the null space~\cite{Zhang2025CSN}. These methods conceptually treat data-visible and measurement-invisible components separately.

In contrast, our framework does not first decompose each diffusion estimate into range-space and null-space components in order to construct its primary sampling update. Instead, the main diffusion-guided update is defined over the whole image space and is continuously adapted by a regularized noise-weighted pullback metric. We then use complementary range--null-space-inspired operations: a regularized data-consistency
correction modifies the measurement-sensitive component, whereas an approximately null-space-restricted perturbation promotes additional exploration in weakly constrained and unobservable directions. Thus, range--null-space structure is used in supporting data-consistency and null-space-exploration steps rather than as a binary decomposition underlying the main update.

% and the resulting uncertainty remains unstructured and implicit.

%The measurement-aware or sinogram-conditioned diffusion methods, such as CT-SDM, condition diffusion model directly on measurement data,

%limiting the ability to model structured uncertainty modeling.

% under severe undersampling. Nevertheless, while data consistency constrains the solution in the row space, variability in the the null-space is implicitly governed by the diffusion prior. As a result, the posterior uncertainty is not explicitly modeled, but instead induced by the learned prior. 

%Despite these advances, existing diffusion-based CT methods primarily focus on improving reconstruction quality and enforcing data consistency, without explicitly modeling the geometric structure induced by the forward operator. In particular, the decomposition of the solution space into data-consistent (range-space) and null-space components is not incorporated into the sampling dynamics. As a result, the inherently anisotropic uncertainty arising from incomplete measurements in sparse-view and limited-angle CT is not explicitly characterized or exploited. In contrast, our work explicitly aligns diffusion sampling with the geometry of the measurement manifold, restricting stochastic exploration to null-space directions while enforcing strict data consistency throughout sampling.

%%%%%%%%%%%%%%%%%%%%%%%%%%%%%%%%%%%%%%%%%%%%%%%%%%%%%%%%%%%%%%%%%%%%%%% Method %%%%%%%%%%%%%%%%%%%%%
\section{Method}
\label{method}

In sparse-view and limited-angle CT, the measurements constrain some image directions more, while other directions are less constrained or unobservable. We therefore propose a diffusion-score-guided stochastic reconstruction framework that uses the CT forward operator and noise covariance to adapt both the direction and scale of each sampling update according to noise-weighted measurement sensitivity. The framework reduces the measurement residual through a regularized data-consistency correction. At the same time it concentrates stochastic exploration along weakly constrained and unobservable directions, where reconstruction ambiguity remains.

The proposed method is presented in six parts. We first introduce the forward model, the noise-weighted posterior, and the resulting anisotropy between strongly observed, weakly constrained, and unobservable image directions in Section~\ref{sub3.1}. Section~\ref{sub:pullbacl} uses this geometry to construct pullback-metric-preconditioned Langevin dynamics in place of standard isotropic sampling. Section~\ref{sub:direct_sum} describes the orthogonal decomposition of the image space into range-space (containing directions that affect the measurements) and null-space (containing directions that are invisible to the forward operator). Section~\ref{sub:project_consistency} introduces a regularized Riemannian data-consistency correction that avoids requiring an explicitly available generalized inverse. Section~\ref{sub:null_space_sampling} introduces approximately null-space-restricted stochastic perturbations that promote exploration along weakly constrained and unobservable image directions while suppressing changes in the measured sinogram. 

Finally, Section~\ref{sec:algorithm} combines all components into an iterative sampling procedure that (i) preconditions diffusion updates using the measurement-induced pullback metric, (ii) reduces the measurement residual through a regularized data-consistency correction, and (iii) controls stochastic exploration according to the measurement sensitivity of different image directions, thus supporting spatial uncertainty estimation.

%%%%%%%%%%%%%%%%%%%%% Problem Formulation %%%%%%%%%%%%%%%%%%%%%
%%%%%%%%%%%%%%%%%%%%%%%%%%%%%%%%%%%%%%%%%%%%%%%%%%%%%%%%%%%%%%%%
\subsection{Problem Formulation and Posterior Geometry}
\label{sub3.1}

In the context of CT reconstruction, our goal is to recover images from projection measurements acquired through a forward operator. The CT measurements constrain different image directions to different degrees: some directions strongly affect the measured sinogram, whereas others are weakly constrained or completely unobservable. We formulate sparse-view and limited-angle CT reconstruction as Bayesian inference. A posterior formulation combines the information provided by the measurements with a learned image prior and allows the remaining ambiguity to be represented by multiple reconstruction samples.

We consider sparse-view and limited-angle CT reconstruction under the linear forward model:
\begin{equation}
y \;=\; A_\Omega x + \varepsilon,
\end{equation}
where $x \in \mathbb{R}^n$ denotes the unknown image, $A_\Omega \in \mathbb{R}^{m \times n}$ is the forward operator restricted to a subset of view angles $\Omega$,
$y \in \mathbb{R}^m$ is the measured log-sinogram. The measurement error is modeled as
\begin{equation*}
    \varepsilon\sim\mathcal N(0,\Sigma),
\end{equation*}
where $\Sigma\in\mathbb R^{m\times m}$ is the measurement-noise covariance matrix and $\Sigma\succ0$.

\vspace{0.5em}
\begin{remark}{Origin of the Gaussian noise model.}\label{para:noise}
Under the idealized Beer--Lambert model, the expected photon count at detector bin $i$ is~\cite{taubmann2018ct,Whiting2006Sinogram}
\begin{equation}
    \lambda_i \;=\;I_0\, e^{-(A_\Omega x)_i},
\end{equation}
where $I_0$ is the incident photon count. The detected count is modeled as
\begin{equation*}
    Y_i\sim\operatorname{Poisson}(\lambda_i).
\end{equation*}

For sufficiently large transmitted counts, applying the logarithmic transform to the photon measurements gives the approximate post-log Gaussian model
\begin{equation}\label{eq:gaussian-noise-model}
    y \;\approx\; A_\Omega x + \varepsilon, 
    \qquad 
    \varepsilon \sim \mathcal{N}(0,\Sigma),
    \qquad
    \Sigma \;=\; \mathrm{diag}\!\left(\tfrac{1}{\lambda_i}\right)_{i=1}^m.
\end{equation}
Thus, highly attenuated rays have lower transmitted photon counts and larger log-domain noise variance. The ideal count-based plug-in precision is
$\widehat{\lambda}_i=I_0e^{-y_i}$. In practice, we use the stabilized, normalized, and clipped effective precision described in Appendix~\ref{app:gaussian-approx}. 
\end{remark}

Sparse-view CT acquires projections at fewer angular views while typically retaining the full angular coverage, whereas limited-angle CT restricts the angular range over which projections are acquired. Let $A_\Omega\in\mathbb R^{m\times n}$, where $n$ is the dimension of the image space and $m$ is the dimension of the measurement space. Depending on the acquisition geometry and discretization, $A_\Omega$ may be rank deficient or severely ill-conditioned.

The null-space of the forward operator is
\begin{equation}
    \mathcal N(A_\Omega)
    \;=\;
    \left\{
        v\in\mathbb R^n:A_\Omega v=0
    \right\}.
\end{equation}
By the rank--nullity theorem,
\begin{equation}
    \dim\mathcal N(A_\Omega)
    \;=\;
    n-\operatorname{rank}(A_\Omega).
\end{equation}
When $m<n$, we have $\operatorname{rank}(A_\Omega)\leq m<n$, so a nontrivial null-space is guaranteed. For any $v\in\mathcal N(A_\Omega)$,
\begin{equation}
    A_\Omega(x+v)
    \;=\;
    A_\Omega x.
\end{equation}
Thus, an image change along $v$ is completely invisible to the forward operator.

In the presence of measurement noise, however, directions outside the exact null space may produce sinogram changes that are difficult to distinguish from noise. Under the Gaussian measurement model, the likelihood is
\begin{equation}
    p(y\mid x)
    \propto
    \exp\left(
        -\frac{1}{2}
        \left\|
            \Sigma^{-1/2}(A_\Omega x-y)
        \right\|_2^2
    \right).
    \label{eq:likelihood}
\end{equation}
The covariance $\Sigma$ accounts for the different noise levels of the sinogram measurements.

Let $p_0(x)$ denote the learned distribution of clean CT images. Bayes' rule gives
\begin{equation}
    p(x\mid y)
    \propto
    p_0(x)
    \exp\left(
        -\frac{1}{2}
        \left\|
            \Sigma^{-1/2}(A_\Omega x-y)
        \right\|_2^2
    \right).
    \label{eq:posterior}
\end{equation}
The posterior combines the learned image prior with noise-weighted measurement information~\cite{Stuart2010BayesianInverse}. Posterior
sampling therefore uses the prior to explore image components that are unobservable or only weakly constrained by the measurements.

To quantify the measurement sensitivity along a unit-norm direction
$v\in\mathbb R^n$, $\|v\|_2=1$, consider the perturbed image $x+av$,
where $a\in\mathbb R$ controls the perturbation magnitude. The second derivative of the negative log-likelihood $ \Phi(x)
    :=
    \frac{1}{2}
    \left\|
        \Sigma^{-1/2}(A_\Omega x-y)
    \right\|_2^2$ with respect to $a$ is
\begin{equation}
    \frac{d^2}{da^2}\Phi(x+av)
     \;=\;
    v^\top A_\Omega^\top\Sigma^{-1}A_\Omega v
     \;=\;
    \left\|
        \Sigma^{-1/2}A_\Omega v
    \right\|_2^2.
    \label{eq:directional-data-curvature}
\end{equation}
This quantity is the likelihood curvature along $v$ and measures how
strongly the direction affects the measurements relative to the noise.

Directions with
$A_\Omega v\neq0$ but small
$\|\Sigma^{-1/2}A_\Omega v\|_2$ are mathematically observable but difficult to estimate reliably from noisy measurements. We refer to these directions as effective null-space directions. Therefore, the exact null-space describes completely unobservable directions, whereas the effective null-space additionally captures directions that are only weakly
observable relative to the noise.

Because the quantity in Equation~\eqref{eq:directional-data-curvature}
depends on the perturbation direction $v$, we can see that the likelihood induces an anisotropic geometry: strongly observed, weakly constrained, and unobservable directions receive different amounts of measurement information. The total posterior uncertainty is determined jointly by this measurement information and the learned prior, since the prior may still constrain directions that are invisible to the forward operator. 
Under a Laplace approximation around a posterior mode $x^\star$, the posterior is locally approximated by a Gaussian whose precision is the Hessian of the negative log-posterior:
\begin{equation}
    H_{\mathrm{post}}(x^\star)
    \;=\;
    H_{\mathrm{prior}}(x^\star)
    +
    A_\Omega^\top\Sigma^{-1}A_\Omega,
\end{equation}
where
$H_{\mathrm{prior}}(x^\star)
\;=\;-\nabla_x^2\log p_0(x^\star)$
is the local curvature contributed by the prior. For any
$v\in\mathcal N(A_\Omega)$,
\begin{equation}
    v^\top H_{\mathrm{post}}(x^\star)v
    \;=\;
    v^\top H_{\mathrm{prior}}(x^\star)v,
\end{equation}
because the measurement contribution vanishes along $v$. For a weakly constrained direction, the measurement contribution in Equation~\eqref{eq:directional-data-curvature} is nonzero but small. A formal statement and proof are provided in Appendix~\ref{app:posterior-geometry},
Proposition~\ref{prop:anisotropic-curvature}. This measurement-induced geometry motivates the geometry-aware sampling framework developed in the following subsection, which adapts the sampling dynamics to the noise-weighted sensitivity of different image directions.

%%%%%%%%%%%%% Geometry-Aware Diffusion Sampling via Pullback metric%%%%%%%%%%%%%%%%%%%%%%%%%%%%%%%%%%%%%%%%%%%%%%%%%

\subsection{Geometry-Aware Diffusion Sampling via a Pullback Metric}
\label{sub:pullbacl}

As discussed in Section~\ref{sub3.1}, the likelihood constrains different image directions to different degrees. It provides strong information along well-observed directions, limited information along weakly constrained directions, and no direct information along exact null-space directions. Standard isotropic Langevin dynamics~\cite{Langevin1908}, which combine gradient-based drift with Gaussian noise for posterior sampling~\cite{Roberts1996}, do not account for these differences because their stochastic term has the same variance in every image direction. We therefore construct a Riemannian metric from the CT forward operator and effective noise precision to adapt both the deterministic and stochastic updates to measurement sensitivity. 

%%%%%%%%%%%%%%%%%%%%% rewrite %%%%%%%%%%%%%%%%%%%%%%%%%%%%

To construct the metric, consider a small image perturbation
$\delta x\in\mathbb R^n$. Under the linear forward model, it produces the measurement perturbation
\begin{equation*}
    \delta y\;=\;A_\Omega\delta x.
\end{equation*}
The Gaussian likelihood in~\eqref{eq:likelihood} is
\begin{equation*}
    p(y\mid x)
    \;\propto \;
    \exp\left(
        -\frac{1}{2}
        (A_\Omega x-y)^\top
        \Sigma^{-1}
        (A_\Omega x-y)
    \right).
\end{equation*}
Here, $\Sigma$ denotes the measurement-noise covariance used during
reconstruction, estimated as described in Appendix~\ref{app:gaussian-approx}. Its inverse $\Sigma^{-1}\succ0$ is called the precision matrix. Because $\Sigma^{-1}$ is symmetric positive definite, it defines an inner product on the measurement space $\mathcal{Y}=(\mathbb R^m,\langle \cdot,\cdot\rangle_{\mathcal Y})$
\begin{equation}
    \langle u,v\rangle_{\mathcal Y}
    \;:=\;
    u^\top\Sigma^{-1}v,
    \qquad
    u,v\in\mathbb R^m.
\end{equation}
Under the conditionally independent noise model used in this work, $\Sigma^{-1}$ is diagonal, so the inner product can be written as
\begin{equation}
    \langle u,v\rangle_{\mathcal Y}
    \;=\;
    \sum_{i=1}^m
    (\Sigma^{-1})_{ii}u_iv_i.
\end{equation}
Each diagonal entry $(\Sigma^{-1})_{ii}$ represents the effective precision of measurement bin $i$. Measurements with higher precision receive greater weight, whereas noisier measurements with lower precision receive less weight. Thus, $\|\cdot\|_{\mathcal Y}$ measures a sinogram perturbation relative to the reliability of the measurements.

\paragraph{Pullback metric on image space.} 

The inner product $\langle \cdot, \cdot \rangle_\mathcal{Y}$ defines a geometry on the measurement space $\mathbb{R}^m$, but the unknown image $x$ lives in the image space $\mathbb{R}^n$. To transfer this geometry  to the image space, we use the concept of a \emph{pullback metric}~\cite{Lee2013,doCarmo1992},  a standard construction in differential geometry that transports a metric from one space to another via a smooth map. Given a small image 
perturbation $\delta x \in \mathbb{R}^n$, the forward operator $A_\Omega$  maps it to the corresponding measurement perturbation $\delta y = A_\Omega\,\delta x \in \mathbb{R}^m$. The significance of $\delta x$ in image space can then be assessed by measuring the significance of its image $\delta y$ in measurement space under $\langle \cdot, \cdot \rangle_\mathcal{Y}$. Pulling the measurement-space inner product back through $A_\Omega$ gives
\begin{equation}
    \langle\delta x_1,\delta x_2\rangle_G
    \;:=\;
    \langle A_\Omega\delta x_1,A_\Omega\delta x_2\rangle_{\mathcal Y}
    \;=\;
    \delta x_1^\top G\delta x_2,
    \qquad
    G=A_\Omega^\top\Sigma^{-1}A_\Omega.
    \label{eq:pullback_form}
\end{equation}
This bilinear form measures image perturbations according to how strongly and reliably they affect the measured sinogram. The matrix $G$ is symmetric positive semi-definite (a pullback pseudo-metric on the image space), not a genuine Riemannian metric. It is singular whenever
$\mathcal N(A_\Omega)$ is nontrivial, because
\begin{equation*}
    Gv\;=\;0
    \qquad
    \text{for every }v\in\mathcal N(A_\Omega).
\end{equation*}
We therefore define the regularized metric
\begin{equation}
\label{eq:Ggamma}
G_\gamma \;=\; A_\Omega^\top \Sigma^{-1} A_\Omega \;+\; \gamma I, \qquad \gamma > 0,
\end{equation}
The regularization shifts every eigenvalue of $G$ by $\gamma$, making $G_\gamma$ symmetric positive definite. Directions with data-dependent eigenvalues much larger than $\gamma$ are only slightly affected and retain their measurement-dependent weighting. Along exact null-space directions, the measurement-induced term vanishes because the measurements contribute no information. Only the isotropic Euclidean regularization $\gamma I$ remains. By contrast, weakly constrained
directions retain a small measurement-dependent contribution together with this regularization.

Having defined the measurement-induced metric $G_\gamma$, we now
incorporate it into posterior sampling. Langevin dynamics provide a natural framework because they combine gradient-based guidance toward a target density with stochastic exploration. We first consider an ideal continuous-time process with access to the exact posterior score and then derive the practical diffusion-guided approximation used in our algorithm.
%%%%%%%%%%%%%%%%%%%%%%%%%%%%%%%%%%%%%%%%%%%%%%%%%%%%%%%%%%%

\paragraph{Pullback-Preconditioned Langevin Dynamics.}

Let $X_\tau\in\mathbb R^n$ denote the random image state at continuous
sampling time $\tau$, and let $\pi(x)=p(x\mid y)$ be the target posterior
density. Standard isotropic overdamped Langevin dynamics combine
deterministic drift along the gradient of the log-target density with
isotropic Brownian noise~\cite{Roberts1996}:
\begin{equation}
    dX_\tau
    \; = \;
    \nabla\log\pi(X_\tau)\,d\tau
    +
    \sqrt{2}\,dW_\tau,
    \label{eq:standard_langevin}
\end{equation}
where $W_\tau$ is an $n$-dimensional standard Brownian motion. The first term is the deterministic drift, while the second is the stochastic term. Over a small time interval $\Delta\tau$, the stochastic term contributes a Gaussian increment with covariance $2\Delta\tau I_n$, where $I_n\in\mathbb R^{n\times n}$ is the identity matrix on the image space. It therefore applies the same variance in every image direction.

The process is constructed to have $\pi$ as an invariant density: if
$X_0$ has density $\pi$, then $X_\tau$ has the same density for every
$\tau\geq0$, as verified below.

To incorporate the measurement geometry, we precondition both the drift and
the stochastic term using $G_\gamma^{-1}$:
\begin{equation}
    dX_\tau
   \; = \;
    G_\gamma^{-1}\nabla\log\pi(X_\tau)\,d\tau
    +
    \sqrt{2G_\gamma^{-1}}\,dW_\tau.
    \label{eq:riemannian_langevin}
\end{equation}
The first term is the
geometry-preconditioned drift, while the second is the geometry-shaped
stochastic term. These dynamics suppress variation along directions with
high noise-weighted measurement sensitivity and permit larger updates along
weakly constrained and unobservable directions.

To ensure that the geometry-aware preconditioning changes the sampling
dynamics without changing the target posterior, we verify that $\pi$
remains an invariant density of the ideal continuous process.

Let $\rho_\tau(x)$ denote the probability density of $X_\tau$. For a
stochastic differential equation of the form
\begin{equation*}
    dX_\tau
   \; = \;
    b(X_\tau)\,d\tau
    +
    B\,dW_\tau,
\end{equation*}
where $b:\mathbb R^n\to\mathbb R^n$ is the drift function and
$B\in\mathbb R^{n\times n}$ is a constant diffusion matrix, the
corresponding Fokker--Planck equation is~\cite{Pavliotis2014Stochastic}
\begin{equation*}
    \partial_\tau\rho_\tau
    \; = \;
    -\nabla\!\cdot
    \left(
        b\,\rho_\tau
    \right)
    +
    \frac{1}{2}
    \nabla\!\cdot
    \left(
        BB^\top\nabla\rho_\tau
    \right).
\end{equation*}
For Equation~\eqref{eq:riemannian_langevin},
\begin{equation*}
    b(x)
    \; = \;
    G_\gamma^{-1}\nabla\log\pi(x),
    \qquad
    BB^\top
    \; = \;
    2G_\gamma^{-1}.
\end{equation*}
Substituting these expressions gives
\begin{align*}
    \partial_\tau\rho_\tau
    &\; = \;
    -\nabla\!\cdot
    \left(
        G_\gamma^{-1}\rho_\tau\nabla\log\pi
    \right)
    +
    \nabla\!\cdot
    \left(
        G_\gamma^{-1}\nabla\rho_\tau
    \right).
\end{align*}
To test stationarity, set $\rho_\tau=\pi$. Since
\begin{equation*}
    \nabla\log\pi
    \; = \;
    \frac{\nabla\pi}{\pi},
    \qquad\text{and}\qquad
    \pi\nabla\log\pi
    \; = \;
    \nabla\pi,
\end{equation*}
we obtain
\begin{align*}
    \left.
    \partial_\tau\rho_\tau
    \right|_{\rho_\tau=\pi}
    \; = \;
    -\nabla\!\cdot
    \left(
        G_\gamma^{-1}\nabla\pi
    \right)
    +
    \nabla\!\cdot
    \left(
        G_\gamma^{-1}\nabla\pi
    \right)
    \; = \;0.
\end{align*}
Therefore, if $X_0$ has density $\pi$, then $X_\tau$ has density $\pi$
for every $\tau\geq0$. Hence, $\pi$ is an invariant density of the ideal
continuous dynamics. A formal statement and the corresponding conditions are provided in Appendix~\ref{app:fp-proof}.

\paragraph{Diffusion-score approximation.}

The ideal dynamics in Equation~\eqref{eq:riemannian_langevin} require the exact posterior score. From the posterior factorization, this score has the
form
\begin{equation*}
    \nabla\log\pi(x)
   \; = \;
    \nabla\log p_0(x)
    -
    \nabla\Phi(x),
\end{equation*}
where the data-fidelity gradient $\nabla\Phi(x)$ can be evaluated from the known forward operator, measured sinogram, and fixed noise precision, whereas the clean-image prior score $\nabla\log p_0(x)$ is not available analytically. We use a denoising diffusion probabilistic model (DDPM)~\cite{Ho2020DDPM} pretrained on clean CT images. During pretraining, Gaussian noise is added to the images at different diffusion timesteps, and the model learns to predict the added noise. During reconstruction, the model parameters remain fixed, and the predicted noise is used to estimate the image-prior score that guides the reconstruction toward plausible CT
images.

Let $x^{(k)}$ denote the current reconstruction at sampling iteration $k$, and let $t_k$ be the corresponding DDPM timestep. For simplicity, all image variables below are expressed in the normalized intensity range used during diffusion-model training.

At each iteration, we treat $x^{(k)}$ as the current estimate of the clean image and apply the standard DDPM forward-noising equation:
\begin{equation}
    x_{t_k}^{(k)}
    \;=\;
    \sqrt{\bar{\alpha}_{t_k}}\,x^{(k)}
    +
    \sqrt{1-\bar{\alpha}_{t_k}}\,\epsilon^{(k)},
    \qquad
    \epsilon^{(k)}
    \sim
    \mathcal N(0,I_n),
    \label{eq:model-noised-state}
\end{equation}
where $\bar{\alpha}_{t_k}$ is the cumulative product of the DDPM noise schedule up to timestep $t_k$. The variable $x_{t_k}^{(k)}$ is the noisy image supplied to the pretrained diffusion model.

For an $\epsilon$-prediction DDPM, the network $\epsilon_\theta(x_{t_k}^{(k)},t_k)$ predicts the Gaussian noise contained in $x_{t_k}^{(k)}$. The corresponding score estimate is~\cite{Ho2020DDPM,Song2021ScoreSDE}
\begin{equation}
    s^{(k)}
    \;=\;
    -
    \frac{
        \epsilon_\theta(x_{t_k}^{(k)},t_k)
    }{
        \sqrt{1-\bar{\alpha}_{t_k}}
    },
    \label{eq:ddpm-score}
\end{equation}
which approximates
$\nabla_{x_{t_k}}\log p_{t_k}(x_{t_k}^{(k)})$. This score supplies the learned image-prior direction at noise level $t_k$.

The data-fidelity gradient is evaluated at the current reconstruction
$x^{(k)}$:
\begin{equation}
    g^{(k)}
    \;:=\;
    \nabla\Phi(x^{(k)})
    \;=\;
    A_\Omega^\top\Sigma^{-1}
    \left(
        A_\Omega x^{(k)}-y
    \right),
    \label{eq:data-gradient-iteration}
\end{equation}
where
\begin{equation}
    \Phi(x)
    \;=\;
    \frac{1}{2}
    (A_\Omega x-y)^\top
    \Sigma^{-1}
    (A_\Omega x-y).
\end{equation}
We combine the DDPM score and data-fidelity gradient as
\begin{equation}
    r^{(k)}
    \;=\;
    \lambda_k s^{(k)}
    -
    \mu_k g^{(k)},
    \label{eq:posterior-score}
\end{equation}
where $\lambda_k$ and $\mu_k$ control the relative strengths of the learned prior and measurement guidance. The DDPM score is evaluated at the noisy image $x_{t_k}^{(k)}$, whereas the data-fidelity gradient is evaluated at the current reconstruction $x^{(k)}$. Therefore, $r^{(k)}$ is used as a practical combined guidance direction rather than an exact diffusion-time posterior score.

\paragraph{Geometry-preconditioned update.}
The pullback-metric-preconditioned direction $u^{(k)}$ is obtained by solving
\begin{equation}
    G_\gamma u^{(k)}
    \;=\;
    r^{(k)},
    \qquad
    G_\gamma
    \;=\;
    A_\Omega^\top\Sigma^{-1}A_\Omega+\gamma I_n.
    \label{eq:drift-cg-system}
\end{equation}
The current reconstruction is then updated according to
\begin{equation}
    \tilde{x}^{(k)}
    \;=\;
    x^{(k)} 
    + \alpha_k u^{(k)},
    \qquad
    \tilde{x}^{(k)}
    \leftarrow
    \tilde{x}^{(k)}
    +
    \nu_G^{(k)},
    \qquad
    \nu_G^{(k)}
    \sim
    \mathcal N\!\left(0,2\alpha_kG_\gamma^{-1}\right),
    \label{eq:update-correct}
\end{equation}
where $\alpha_k>0$ is the sampling step size. The deterministic term $\alpha_k u^{(k)}$ combines the learned DDPM prior with the
noise-weighted data-fidelity gradient, while $\nu_G^{(k)}$ provides metric-shaped stochastic exploration.

Because the evolving reconstruction $x^{(k)}$ is not itself maintained as a DDPM noisy state, we apply the forward-noising equation before each score evaluation. This produces an input whose noise level is consistent with the assigned DDPM timestep $t_k$. The resulting score provides annealed prior guidance, ranging from coarse structure at high noise levels to finer
details at low noise levels. This procedure differs from a standard DDPM reverse trajectory, since a newly noised state is constructed from the current reconstruction at each iteration.

\paragraph{Matrix-free implementation via conjugate gradient.}
The preconditioned direction in~\eqref{eq:drift-cg-system} is computed by
solving a linear system of the general form
\begin{equation}
    G_\gamma u\;=\;r,
\end{equation}
where $r\in\mathbb R^n$ is known and
$u=G_\gamma^{-1}r\in\mathbb R^n$ is the unknown. We solve this via the conjugate gradient (CG) method~\cite{Hestenes1952CG,Hansen2018AIRToolsII} (Appendix~\ref{app:CG}), which avoids forming
$G_\gamma \in \mathbb{R}^{n \times n}$ explicitly. CG requires only matrix-vector products with $G_\gamma$. Expanding the definition $G_\gamma = A_\Omega^\top\Sigma^{-1}A_\Omega + \gamma I$
and applying it to a search direction $p$ gives
\begin{align}
  G_\gamma p
  &\;=\; \bigl(A_\Omega^\top\Sigma^{-1}A_\Omega + \gamma I\bigr)p \notag\\
  &\;=\; A_\Omega^\top\Sigma^{-1}A_\Omega p + \gamma p \notag\\
  &\;=\; A_\Omega^\top\bigl(\Sigma^{-1}(A_\Omega p)\bigr) + \gamma p.
  \label{eq:matvec}
\end{align}
Each CG iteration requires one forward projection and one backprojection, together with diagonal precision weighting. Since $G_\gamma\succ0$, the linear system has a unique solution and is suitable for conjugate gradient. In exact arithmetic, CG recovers this solution in at most $n$ iterations. In practice, it is terminated when the residual falls below a prescribed tolerance. The regularization parameter $\gamma$ bounds the smallest eigenvalue away from zero and improves numerical conditioning.

To sample
$\nu_G^{(k)}\sim
\mathcal N(0,2\alpha_kG_\gamma^{-1})$
matrix-free, draw independent standard Gaussian vectors
$\xi^{(k)}\sim\mathcal N(0,I_m)$ in measurement space and
$c^{(k)}\sim\mathcal N(0,I_n)$ in image space, and form
\begin{equation}
    w^{(k)}
    \;=\;
    A_\Omega^\top\Sigma^{-1/2}\xi^{(k)}
    +
    \sqrt{\gamma}\,c^{(k)}
    \sim
    \mathcal N(0,G_\gamma).
\end{equation}
Solving
$G_\gamma z^{(k)}=w^{(k)}$
and setting
$\nu_G^{(k)}=\sqrt{2\alpha_k}\,z^{(k)}$
gives
$\nu_G^{(k)}
\sim\mathcal N(0,2\alpha_kG_\gamma^{-1})$. This covariance is exact when the linear system is solved exactly. With a finite CG tolerance, the generated covariance approximates $2\alpha_kG_\gamma^{-1}$, with accuracy determined by the linear-solve residual. The full correctness proof is given in Appendix~\ref{app:fp-proof}, Lemma~\ref{lem:matrix-free-sampler}.

The update~\eqref{eq:update-correct} is anisotropic because the
measurement-induced metric preconditions both the deterministic direction and the stochastic covariance. It suppresses stochastic variation and prior-driven updates along strongly observed directions while allowing greater variation along weakly constrained and unobservable directions. The resulting variability is determined jointly by the measurement
geometry and the learned prior. The regularized data-consistency correction and additional approximately null-space-restricted perturbations are introduced in Sections~\ref{sub:project_consistency} and
\ref{sub:null_space_sampling}, respectively.

%%%%%% Orthogonal Direct Sum Decomposition %%%%%%%%%%%%%%
%%%%%%%%%%%%%%%%%%%%%%%%%%%%%%%%%%%%%%%%%%%%%%%%%%%%%%%%%

\subsection{Orthogonal Direct Sum Decomposition}
\label{sub:direct_sum}

To exploit the geometric structure of $A_\Omega$ explicitly, we need a precise definition of how image space splits into measurement-constrained and measurement-invisible directions. The orthogonal direct sum decomposition\cite{Schwab2019NullSpace,Chen2020DeepDecomposition,Chen2021Equivariant} provides this definition, and the two subspaces identified by it will correspond directly to the two components of the sampling process in Sections~\ref{sub:project_consistency} 
and~\ref{sub:null_space_sampling}.

The forward operator maps the image space to the measurement space:
\begin{equation}
    A_\Omega:\mathbb R^n\rightarrow\mathbb R^m.
\end{equation}
Accordingly, $\mathcal R(A_\Omega)\subseteq\mathbb R^m$ lies in the measurement space and describes sinograms that can be produced by the forward model. To describe measurement-sensitive directions in the image
space, we instead use $\mathcal R(A_\Omega^\top)\subseteq\mathbb R^n$.

\vspace{0.5em}
\begin{proposition}[Orthogonal decomposition]
\label{thm:orthogonal-decomposition}
For any $A_\Omega\in\mathbb R^{m\times n}$, the image space admits the orthogonal direct-sum decomposition
\begin{equation}
    \mathbb R^n
    \;=\;
    \mathcal R(A_\Omega^\top)
    \oplus
    \mathcal N(A_\Omega).
    \label{eq:orthogonal-direct-sum}
\end{equation}
Equivalently,
\begin{equation}
    \mathcal R(A_\Omega^\top)
    \;=\;
    \mathcal N(A_\Omega)^\perp.
\end{equation}
Here, $\mathcal R(A_\Omega^\top)$ contains image directions that can affect the measurements, whereas $\mathcal N(A_\Omega)$ contains directions that are invisible to the forward operator. A proof is provided in Appendix~\ref{app:orthogonal-decomposition}.
\end{proposition}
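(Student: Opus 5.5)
The plan is to prove the equivalent identity $\mathcal R(A_\Omega^\top)=\mathcal N(A_\Omega)^\perp$ first. The direct-sum decomposition \eqref{eq:orthogonal-direct-sum} then follows from the standard fact that every subspace $V\subseteq\mathbb R^n$ satisfies $\mathbb R^n=V\oplus V^\perp$. That fact holds because $V\cap V^\perp=\{0\}$, since $v\perp v$ forces $\|v\|_2=0$, and because every $x$ splits as $x=P_Vx+(x-P_Vx)$, where $P_V$ is the orthogonal projection obtained from an orthonormal basis of $V$. Applying this with $V=\mathcal N(A_\Omega)$ gives $\mathbb R^n=\mathcal N(A_\Omega)^\perp\oplus\mathcal N(A_\Omega)$. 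Substituting the identity then yields the claim.

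The first inclusion is $\mathcal R(A_\Omega^\top)\subseteq\mathcal N(A_\Omega)^\perp$. Take $A_\Omega^\top z$ with $z\in\mathbb R^m$ and $v\in\mathcal N(A_\Omega)$. Then
\begin{equation*}
\langle A_\Omega^\top z,v\rangle=\langle z,A_\Omega v\rangle=0.
\end{equation*}

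For the reverse inclusion, I would first show $\mathcal R(A_\Omega^\top)^\perp\subseteq\mathcal N(A_\Omega)$. If $w\perp A_\Omega^\top z$ for all $z$, then $\langle A_\Omega w,z\rangle=0$ for all $z$. Choosing $z=A_\Omega w$ gives $\|A_\Omega w\|_2^2=0$, so $w\in\mathcal N(A_\Omega)$. Taking orthogonal complements, which reverses inclusions, gives $\mathcal N(A_\Omega)^\perp\subseteq(\mathcal R(A_\Omega^\top)^\perp)^\perp$. In finite dimensions, the double complement of a subspace is the subspace itself, so this is $\mathcal R(A_\Omega^\top)$, and the two sets are equal.

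An alternative to the double-complement step is a dimension count. The first inclusion gives $\operatorname{rank}(A_\Omega)=\operatorname{rank}(A_\Omega^\top)\le \dim\mathcal N(A_\Omega)^\perp$, and rank--nullity gives $\dim\mathcal N(A_\Omega)^\perp=n-\dim\mathcal N(A_\Omega)=\operatorname{rank}(A_\Omega)$. The dimensions therefore agree, and the inclusion is an equality.

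The only step needing care is the double-complement identity $(V^\perp)^\perp=V$, equivalently the equality of row and column rank used in the dimension count. Both are standard linear algebra facts, which I would cite rather than reprove. This identity is exactly where finite dimensionality is used: for a general bounded operator, only the closure $\overline{\mathcal R(A^\top)}$ equals $\mathcal N(A)^\perp$.
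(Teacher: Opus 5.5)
Your proof is correct. The fallback dimension count you sketch is essentially the paper's own proof. The paper proves $\mathcal R(A_\Omega^\top)\perp\mathcal N(A_\Omega)$ exactly as you do. It then uses $\dim\mathcal R(A_\Omega^\top)=\operatorname{rank}(A_\Omega)$ and rank--nullity to get $\dim\mathcal R(A_\Omega^\top)+\dim\mathcal N(A_\Omega)=n$, concludes that the two orthogonal subspaces span $\mathbb R^n$, and only then reads off $\mathcal R(A_\Omega^\top)=\mathcal N(A_\Omega)^\perp$.

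Your main route runs in the opposite order. It closes the reverse inclusion with the annihilator argument $\mathcal R(A_\Omega^\top)^\perp\subseteq\mathcal N(A_\Omega)$ (taking $z=A_\Omega w$) plus the double complement, instead of counting dimensions. The paper's version is shorter, but it takes the equality of row and column rank as an input. Yours does not need that equality; it recovers $\operatorname{rank}(A_\Omega^\top)=\operatorname{rank}(A_\Omega)$ as a corollary.

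You also need not cite $(V^\perp)^\perp=V$ separately. If $x\in(V^\perp)^\perp$, then $x-P_Vx$ lies in both $V^\perp$ and $(V^\perp)^\perp$, so it is zero. This uses only the $V\oplus V^\perp$ splitting you already establish.

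Finally, your route is the one that carries over to bounded operators on Hilbert spaces. There $\mathcal R(A^*)^\perp=\mathcal N(A)$ still holds verbatim, and only the last step weakens to $\mathcal N(A)^\perp=\overline{\mathcal R(A^*)}$. The dimension count has no analogue in that setting.
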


Consequently, every image $x\in\mathbb R^n$ has a unique decomposition
\begin{equation}
    x
    \;=\;
    x_R+x_N,
    \qquad
    x_R\in\mathcal R(A_\Omega^\top),
    \qquad
    x_N\in\mathcal N(A_\Omega).
    \label{eq:range-null-decomposition}
\end{equation}
Because $A_\Omega x_N=0$, the predicted sinogram depends only on the range-space component:
\begin{equation}
    A_\Omega x
    \;=\;
    A_\Omega x_R.
\end{equation}
If the equation $A_\Omega x=y$ is feasible, then every exactly measurement-consistent image can be written as
\begin{equation}
    x
    \;=\;
    x_{\mathrm{dc}}+v,
    \qquad
    x_{\mathrm{dc}}\in\mathcal R(A_\Omega^\top),
    \qquad
    v\in\mathcal N(A_\Omega),
    \label{eq:consistent-range-null}
\end{equation}
where $x_{\mathrm{dc}}$ is the unique range-space component satisfying $A_\Omega x_{\mathrm{dc}}=y$. The measurements therefore determine the range-space component under the exact feasible constraint, while the null-space component remains unidentifiable from the measurements alone.

Under the noisy likelihood used in this work, the distinction is not strictly binary. The likelihood provides no information along exact null-space directions, but it may provide only weak information along directions associated with small noise-weighted singular values. Moreover, the learned prior can influence both range-space and null-space components. The decomposition in~\eqref{eq:range-null-decomposition} therefore
describes exact algebraic observability, while the pullback metric also accounts for continuously varying measurement sensitivity.

This structure motivates the two subsequent components of the method. The regularized data-consistency correction in
Section~\ref{sub:project_consistency} acts along
$\mathcal R(A_\Omega^\top)$, while the stochastic perturbation in Section~\ref{sub:null_space_sampling} concentrates variation along weakly constrained and unobservable directions.
%%%%%%%%%%%%%%%%%%%%%%%%%%%%%%%%%%%%%%%%%%%%%%%%%%%%%%%%
%%%%%%%%%%%% Projection-Based Data Consistency %%%%%%%%%

\subsection{Data Consistency via Riemannian Gradient Projection}
\label{sub:project_consistency}

Although the geometry-aware update incorporates the measurements through the data-fidelity gradient, one Langevin update can move it away from measurement consistency. We therefore seek a correction $x^+$ that remains close to the intermediate estimate $\tilde{x}$ while reducing its mismatch with the measured sinogram $y$.

Define the data-fidelity functional and its Euclidean gradient:
\begin{equation}
  \Phi(x) \;=\; \tfrac{1}{2}(A_\Omega x - y)^{\top}\Sigma^{-1}(A_\Omega x - y),
  \qquad
  \nabla\Phi(x) \;=\; A_\Omega^{\top}\Sigma^{-1}(A_\Omega x - y).
  \label{eq:phi}
\end{equation}
Under the pullback metric $G_{\gamma_{\mathrm{proj}}} = A_\Omega^{\top}\Sigma^{-1}A_\Omega + \gamma_{\mathrm{proj}} I$,
the Riemannian gradient of $\Phi$ is
\begin{equation}
  \operatorname{grad}_{G_{\gamma_{\mathrm{proj}}}}\Phi(x)
  \;=\; {G^{-1}_{\gamma_{\mathrm{proj}}}}\nabla\Phi(x)
  \;=\; {G^{-1}_{\gamma_{\mathrm{proj}}}}A_\Omega^{\top}\Sigma^{-1}(A_\Omega x - y).
\label{eq:riemannian_grad}
\end{equation}
We use a separate regularization parameter $\gamma_{\mathrm{proj}}>0$ for the data-consistency correction.
This parameter controls the trade-off between measurement agreement and numerical stability. It is distinct from the pullback-metric parameter $\gamma$ used in the diffusion update and the null-space parameter $\gamma_{\mathrm{null}}$ used in Section~\ref{sub:null_space_sampling}.

For sparse-view and limited-angle CT, $A_\Omega$ may be rank deficient or
severely ill-conditioned. We therefore formulate data consistency directly as the regularized proximity problem
\begin{equation}
  x^+
  \;=\;\arg\min_x
  \left\{
  \tfrac{1}{2}\|x-\tilde{x}\|_2^2
  +\tfrac{\tau}{2}\|A_\Omega x-y\|_\Sigma^2
  \right\},
  \qquad
  \|v\|_\Sigma^2:=v^\top\Sigma^{-1}v,
  \label{eq:proximal_problem}
\end{equation}
where $\tau>0$ controls the strength of data consistency. 

To obtain the minimizer, we define
\begin{equation}
J(x)
\;=\;\tfrac{1}{2}\|x-\tilde{x}\|_2^2
+\tfrac{\tau}{2}
(A_\Omega x-y)^\top\Sigma^{-1}(A_\Omega x-y).
\end{equation}
Differentiating $J(x)$ with respect to $x$ gives
\begin{equation}
\nabla J(x)
\;=\;x-\tilde{x}+\tau A_\Omega^\top\Sigma^{-1}(A_\Omega x-y).
\end{equation}
At the minimum $x=x^+$, this gradient is zero. Therefore,
\begin{equation}
x^+-\tilde{x}+\tau A_\Omega^\top\Sigma^{-1}(A_\Omega x^+-y)
\;=\;0.
\end{equation}
Rearranging the terms gives
\begin{equation}
\left(
I_n+\tau A_\Omega^\top\Sigma^{-1}A_\Omega
\right)x^+\;=\;
\tilde{x}+
\tau A_\Omega^\top\Sigma^{-1}y.\end{equation}
Setting $\gamma_{\mathrm{proj}}=1/\tau$ and rearranging yields
\begin{equation}
x^+\;=\;\tilde{x}-
{G^{-1}_{\gamma_{\mathrm{proj}}}}A_\Omega^\top\Sigma^{-1}
(A_\Omega\tilde{x}-y),
\qquad
{G_{\gamma_{\mathrm{proj}}}}=A_\Omega^\top\Sigma^{-1}A_\Omega+\gamma_{\mathrm{proj}} I_n.
\label{eq:proj-closed-form_n}
\end{equation}
Equation~\eqref{eq:proj-closed-form_n} is also one unit
Riemannian-gradient step, 
\begin{equation}
x^+\;=\;\tilde{x}-\operatorname{grad}_{G_{\gamma_{\mathrm{proj}}}}\Phi(\tilde{x}),
\end{equation}
so the data-consistency correction and the Langevin drift use the same pullback geometry. Since ${G_{\gamma_{\mathrm{proj}}}}$ is symmetric positive definite for every $\gamma_{\mathrm{proj}}>0$, the correction is well-defined regardless of the rank of $A_\Omega$. For finite $\gamma_{\mathrm{proj}}$, it improves measurement consistency but does not generally enforce $A_\Omega x^+=y$ exactly. 

The term ${G^{-1}_{\gamma_{\mathrm{proj}}}}A_\Omega^{\top}\Sigma^{-1}(A_{\Omega} \tilde{x} - y)$ in the update~\eqref{eq:proj-closed-form_n} lies in $\mathcal{R}(A_\Omega^{\top})$ by construction, so the update modifies only the measurement-sensitive component of $\tilde{x}$ while leaving its null-space component intact. This is complementary to the geometry-aware diffusion update, in which the projection enforces near-hard measurement fidelity, while the diffusion dynamics govern prior-driven reconstruction and uncertainty exploration in the null-space.

\vspace{1em}
\begin{remark}[Regularized data consistency under noisy measurements]
For a measured sinogram $y$, define the exact consistency set
\begin{equation}
    \mathcal C_y
    \;:=\;
    \left\{
        x\in\mathbb R^n:A_\Omega x=y
    \right\}.
\end{equation}
This set is nonempty if and only if
$y\in\mathcal R(A_\Omega)$. When it is nonempty,
$\mathcal C_y$ is a closed affine subspace of $\mathbb R^n$,
rather than a curved manifold. If $y\notin\mathcal R(A_\Omega)$,
then no image can reproduce the measured sinogram exactly.

Under the noisy measurement model $y=A_\Omega x^\star+\varepsilon$, exact consistency may be infeasible
because the noise realization $\varepsilon$ need not lie in
$\mathcal R(A_\Omega)$. Even when the constraint is feasible, enforcing $A_\Omega x^+=y$ exactly fits the noisy measurement and may introduce noise-related features into the reconstruction.

The update in~\eqref{eq:proj-closed-form_n} is therefore implemented as a regularized data-consistency correction rather than an exact projection. It is well-defined for every
$\gamma_{\mathrm{proj}}>0$, regardless of the rank of $A_\Omega$ or the feasibility of the exact constraint. Smaller values of
$\gamma_{\mathrm{proj}}$ produce stronger agreement with the measurements along observable directions, whereas larger values give a weaker and more strongly regularized correction with improved numerical conditioning.
\end{remark}

%803

%%%%%%%%%%%%%%%%%%%%%%%%%%%%%%%%%%%%%%%%%%%%%%%%%%%%%

\subsection{Approximately Null-Space-Restricted Stochastic Sampling}
\label{sub:null_space_sampling}

The regularized data-consistency correction in Section~\ref{sub:project_consistency} acts along measurement-sensitive directions in $\mathcal R(A_\Omega^\top)$ and does not directly modify the exact null-space component. Reconstruction ambiguity may therefore remain along exact null-space directions and along directions that are only weakly constrained relative to the measurement noise. To explore this ambiguity, we introduce stochastic perturbations that retain more variation along weakly constrained and unobservable directions while suppressing measurement-sensitive components.

\paragraph{Exact null-space invariance.}
Suppose that $x$ satisfies $A_\Omega x=y$. For any
$v\in\mathcal N(A_\Omega)$,
\begin{equation}
    A_\Omega(x+v)
   \; = \;
    A_\Omega x+A_\Omega v
    \; = \;
    y.
\end{equation}
Thus, an exact null-space perturbation does not change the predicted sinogram. An exact orthogonal projection onto
$\mathcal N(A_\Omega)$ can be expressed using the Moore--Penrose
pseudoinverse of the discretized forward operator. Although this
pseudoinverse exists mathematically, accurately applying it can be
computationally demanding for large-scale and ill-conditioned CT systems. We therefore use a regularized approximation based on forward and backprojection operations.

\paragraph{Approximate null-space perturbation.}
At reconstruction iteration $k$, we draw
\begin{equation}
    \zeta^{(k)}
    \sim
    \mathcal N(0,I_n),
\end{equation}
where $I_n$ is the $n\times n$ identity matrix. The isotropic random vector $\zeta^{(k)}$ generally contains components along both measurement-sensitive and null-space directions.

We represent its measurement-sensitive component approximately as
$A_\Omega^\top q^{(k)}$, where $q^{(k)}\in\mathbb R^m$ is a measurement-space coefficient vector. We therefore seek a perturbation of the form
\begin{equation}
    v^{(k)}
    \; = \;
    \zeta^{(k)}-A_\Omega^\top q^{(k)}.
    \label{eq:null-perturb}
\end{equation}
Requiring exact null-space membership would give
\begin{equation}
\begin{aligned}
    A_\Omega v^{(k)}
    &\; = \;
    A_\Omega\zeta^{(k)}
    -
    A_\Omega A_\Omega^\top q^{(k)}\\
    &\; = \;
    0,
\end{aligned}
\end{equation}
and hence
\begin{equation}
    A_\Omega A_\Omega^\top q^{(k)}
    \; = \;
    A_\Omega\zeta^{(k)}.
\end{equation}
Because $A_\Omega A_\Omega^\top$ may be singular or severely
ill-conditioned, we replace this system with the noise-weighted
regularized system
\begin{equation}
    \left(
        A_\Omega A_\Omega^\top
        +
        \gamma_{\mathrm{null}}\Sigma
    \right)q^{(k)}
    \; = \;
    A_\Omega\zeta^{(k)},
    \qquad
    \gamma_{\mathrm{null}}>0.
    \label{eq:null-cg-system}
\end{equation}
Because $\Sigma\succ0$ and $\gamma_{\mathrm{null}}>0$, the matrix in \eqref{eq:null-cg-system} is symmetric positive definite, even when $A_\Omega$ is rank deficient. It can therefore be solved using conjugate gradient.

Combining~\eqref{eq:null-perturb} and~\eqref{eq:null-cg-system} gives
\begin{equation}
    v^{(k)}
    \; = \;
    \left[
        I_n
        -
        A_\Omega^\top
        \left(
            A_\Omega A_\Omega^\top
            +
            \gamma_{\mathrm{null}}\Sigma
        \right)^{-1}
        A_\Omega
    \right]\zeta^{(k)}.
    \label{eq:null-filter}
\end{equation}
The same perturbation can equivalently be written in image space as
\begin{equation}
    v^{(k)}
    \; = \;
    \gamma_{\mathrm{null}}
    \left(
        A_\Omega^\top\Sigma^{-1}A_\Omega
        +
        \gamma_{\mathrm{null}}I_n
    \right)^{-1}
    \zeta^{(k)}.
    \label{eq:null-filter-image}
\end{equation}

To explain the filtering effect, let $\phi_i$ be an eigenvector of
$A_\Omega^\top\Sigma^{-1}A_\Omega$ with eigenvalue $\mu_i\geq0$. Along
this direction,
\begin{equation}
    \gamma_{\mathrm{null}}
    \left(
        A_\Omega^\top\Sigma^{-1}A_\Omega
        +
        \gamma_{\mathrm{null}}I_n
    \right)^{-1}\phi_i
    \; = \;
    \frac{\gamma_{\mathrm{null}}}
    {\mu_i+\gamma_{\mathrm{null}}}\phi_i.
\end{equation}
If $\mu_i=0$, then $A_\Omega\phi_i=0$, so $\phi_i$ is an exact
null-space direction and its component is retained. If $\mu_i$ is large, the direction is strongly constrained by the noise-weighted measurements and its random component is strongly suppressed. If $\mu_i$ is small but nonzero, the direction is weakly constrained and its component is partially retained.

For finite $\gamma_{\mathrm{null}}>0$, the perturbation is not generally an exact null-space vector. Indeed, using
\eqref{eq:null-cg-system},
\begin{equation}
\begin{aligned}
    A_\Omega v^{(k)}
    &\; = \;
    A_\Omega\zeta^{(k)}
    -
    A_\Omega A_\Omega^\top q^{(k)}\\
    &\; = \;
    \gamma_{\mathrm{null}}\Sigma q^{(k)}.
\end{aligned}
\end{equation}
Thus, $v^{(k)}$ may change the predicted sinogram. The parameter
$\gamma_{\mathrm{null}}$ controls the regularized filtering of
measurement-sensitive components. As
$\gamma_{\mathrm{null}}\to0$, the spectral factors approach one for exact null-space directions and zero for directions with $\mu_i>0$. Therefore, the operator in~\eqref{eq:null-filter} approaches the orthogonal projector onto $\mathcal N(A_\Omega)$.
\vspace{0.5em}
\begin{remark}
The eigenvalue analysis is used only to explain the filtering effect. No eigenvectors or eigenvalues are computed in practice. Instead, $q^{(k)}$ is obtained by solving~\eqref{eq:null-cg-system} using conjugate gradient. Each iteration requires applications of the forward operator, the backprojection operator, and the diagonal noise covariance. No pseudoinverse or dense matrix is formed explicitly.
\end{remark}

\paragraph{Sampling update.}
The approximately null-space-restricted perturbation is added to the intermediate reconstruction according to
\begin{equation}
    \tilde{x}^{(k)}
    \leftarrow
    \tilde{x}^{(k)}
    +
    \beta_kv^{(k)},
    \label{eq:null-sampling-update}
\end{equation}
where $\beta_k\geq0$ controls the perturbation magnitude. Because
$v^{(k)}$ is only approximately null-space restricted, this update may change the predicted sinogram. The subsequent regularized data-consistency correction reduces the resulting measurement residual before producing $x^{(k+1)}$.
%%%%%%%%%%%%%%%%%%%%%%%%%%%%%%%%%%%%%%%%%%%%%%%%%%%%%%%%%%%%%%%%%%%%%% Overall Algorithm

\subsection{Overall Algorithm}
\label{sec:algorithm}

The components introduced above are combined into the iterative reconstruction procedure summarized in
Algorithm~\ref{alg:geo_diffusion}. Algorithm~\ref{alg:geo_diffusion} begins from an initial reconstruction $x^{(0)}=\operatorname{Initialize}(y)$ (Line~1). The initialization may be obtained using FBP, an iterative reconstruction method, an approximate
inverse, or another suitable procedure. The algorithm then performs $T$ reconstruction iterations (Lines~2--28), with one DDPM timestep $t_k$ assigned to each iteration $k$.

\textbf{DDPM score and geometry-preconditioned update
(Lines~3--9).} At iteration $k$, the algorithm first draws
$\epsilon^{(k)}\sim\mathcal N(0,I_n)$ (Line~3) and constructs a noisy version of the current reconstruction using the DDPM forward-noising equation (Line~4). The pretrained $\epsilon$-prediction network then provides the DDPM score estimate $s^{(k)}$ at timestep $t_k$ (Line~5). 

The data-fidelity gradient $g^{(k)}$ is evaluated at the current reconstruction $x^{(k)}$ (Line~6). The DDPM score and data-fidelity gradient are combined to form the guidance direction $r^{(k)}$ (Line~7). Because these two terms are evaluated at different states, $r^{(k)}$ is a practical combined guidance direction rather than an exact
diffusion-time posterior score. 

The pullback-metric-preconditioned direction $u^{(k)}$ is obtained using
a conjugate-gradient solve (Line~8). The resulting direction is applied with step size $\alpha_k$ to produce the intermediate estimate $\tilde{x}^{(k)}$ (Line~9).

\textbf{Metric-shaped Langevin noise (Lines~10--15).}
When Langevin noise is enabled (Line~10), the algorithm draws independent standard Gaussian vectors in the measurement and image spaces (Line~11).
These vectors are combined to form
$w^{(k)}\sim\mathcal N(0,G_\gamma)$ (Line~12). Solving
$G_\gamma z^{(k)}=w^{(k)}$ by conjugate gradient (Line~13) gives $z^{(k)}\sim\mathcal N(0,G_\gamma^{-1})$ when the system is solved exactly. The scaled perturbation
$\sqrt{2\alpha_k}\,z^{(k)}$ is then added to the intermediate estimate (Line~14). This produces stochastic variation whose covariance is adapted to the measurement-induced geometry.

\textbf{Approximately null-space-restricted perturbation
(Lines~16--21).} When null-space perturbation is enabled (Line~16), the algorithm draws an isotropic image-space vector $\zeta^{(k)}$ (Line~17). It then solves the
noise-weighted regularized measurement-space system for $q^{(k)}$ (Line~18). Subtracting the estimated measurement-sensitive component gives
\begin{equation}
    v^{(k)}
    \;=\;
    \zeta^{(k)}-A_\Omega^\top q^{(k)}
\end{equation}
(Line~19). The perturbation is added with magnitude $\beta_k$ (Line~20).

For finite $\gamma_{\mathrm{null}}>0$, $v^{(k)}$ is only approximately in $\mathcal N(A_\Omega)$. It suppresses components that strongly affect the measurements while retaining more variation along weakly constrained and
unobservable directions.

\textbf{Regularized data-consistency correction
(Lines~22--27).} The correction is applied according to the projection schedule (Line~22). At a scheduled iteration, the algorithm solves the regularized
measurement-space system for $\eta^{(k)}$ (Line~23) and backprojects this vector to update the reconstruction (Line~24). The correction acts in $\mathcal R(A_\Omega^\top)$ and reduces the measurement residual without directly modifying the exact null-space component.

Because $\gamma_{\mathrm{proj}}>0$, this operation is a regularized near-hard data-consistency correction rather than an exact orthogonal projection. When no correction is scheduled, the intermediate estimate is passed directly to the next iteration (Lines~25--26).

After completing all $T$ iterations, the algorithm returns the final reconstruction $x^{(T)}$ (Line~29).

The ideal continuous Langevin process in Section~\ref{sub:pullbacl} has the target posterior as an invariant distribution under the stated assumptions. Algorithm~\ref{alg:geo_diffusion} additionally uses a DDPM-based approximate guidance direction, finite step sizes, finite-tolerance CG solves, approximately null-space-restricted perturbations, and scheduled data-consistency corrections. It should therefore be interpreted as a practical geometry-aware stochastic reconstruction procedure rather than an exact discretization of the ideal
posterior dynamics.

\begin{algorithm}[htp]
\centering
\caption{Geometry-Aware Diffusion-Score-Guided Reconstruction}
\label{alg:geo_diffusion}
\begin{algorithmic}[1]
\Require Forward operator $A_\Omega$, measurement $y$, noise covariance
         $\Sigma$, pretrained DDPM noise predictor $\epsilon_\theta$,
         DDPM timesteps $\{t_k\}_{k=0}^{T-1}$ and cumulative coefficients
         $\{\bar{\alpha}_{t_k}\}_{k=0}^{T-1}$, regularisation parameters
         $\gamma$, $\gamma_{\mathrm{null}}$, $\gamma_{\mathrm{proj}}$,
         step sizes $\alpha_k$, weights $\lambda_k$, $\mu_k$, $\beta_k$,
         projection interval $k_{\mathrm{proj}}$, CG tolerance $\tau$

\State $x^{(0)}\gets\operatorname{Initialize}(y)$
       \Comment{initial reconstruction}

\For{$k=0,1,\ldots,T-1$}

    \State Sample $\epsilon^{(k)}\sim\mathcal N(0,I_n)$

    \State $x_{t_k}^{(k)}
    \gets
    \sqrt{\bar{\alpha}_{t_k}}\,x^{(k)}
    +
    \sqrt{1-\bar{\alpha}_{t_k}}\,\epsilon^{(k)}$
    \Comment{DDPM forward noising}

    \State $s^{(k)}
    \gets
    -\epsilon_\theta(x_{t_k}^{(k)},t_k)/
    \sqrt{1-\bar{\alpha}_{t_k}}$
    \Comment{DDPM score estimate}

    \State $g^{(k)}
    \gets
    A_\Omega^\top\Sigma^{-1}
    (A_\Omega x^{(k)}-y)$
    \Comment{data-fidelity gradient}

    \State $r^{(k)}
    \gets
    \lambda_k s^{(k)}-\mu_k g^{(k)}$
    \Comment{combined guidance direction}

    \State Solve
    $(A_\Omega^\top\Sigma^{-1}A_\Omega+\gamma I_n)
    u^{(k)}=r^{(k)}$
    \Comment{CG, tol.\ $\tau$}

    \State $\tilde{x}^{(k)}
    \gets
    x^{(k)}+\alpha_k u^{(k)}$
    \Comment{geometry-preconditioned update}

    \If{Langevin noise is enabled}

        \State Sample
        $\xi^{(k)}\sim\mathcal N(0,I_m)$ and
        $c^{(k)}\sim\mathcal N(0,I_n)$

        \State $w^{(k)}
        \gets
        A_\Omega^\top\Sigma^{-1/2}\xi^{(k)}
        +
        \sqrt{\gamma}\,c^{(k)}$
        \Comment{$w^{(k)}\sim\mathcal N(0,G_\gamma)$}

        \State Solve
        $(A_\Omega^\top\Sigma^{-1}A_\Omega+\gamma I_n)
        z^{(k)}=w^{(k)}$
        \Comment{CG, tol.\ $\tau$}

        \State $\tilde{x}^{(k)}
        \gets
        \tilde{x}^{(k)}
        +
        \sqrt{2\alpha_k}\,z^{(k)}$
        \Comment{metric-shaped noise}

    \EndIf

    \If{null-space perturbation is enabled}

        \State Sample
        $\zeta^{(k)}\sim\mathcal N(0,I_n)$

        \State Solve
        $(A_\Omega A_\Omega^\top
        +\gamma_{\mathrm{null}}\Sigma)
        q^{(k)}
        =
        A_\Omega\zeta^{(k)}$
        \Comment{CG, tol.\ $\tau$}

        \State $v^{(k)}
        \gets
        \zeta^{(k)}-A_\Omega^\top q^{(k)}$
        \Comment{approximate null-space perturbation}

        \State $\tilde{x}^{(k)}
        \gets
        \tilde{x}^{(k)}+\beta_kv^{(k)}$

    \EndIf

    \If{$k\bmod k_{\mathrm{proj}}=0$}

        \State Solve
        $(A_\Omega A_\Omega^\top
        +\gamma_{\mathrm{proj}}\Sigma)
        \eta^{(k)}
        =
        A_\Omega\tilde{x}^{(k)}-y$
        \Comment{CG, tol.\ $\tau$}

        \State $x^{(k+1)}
        \gets
        \tilde{x}^{(k)}-A_\Omega^\top\eta^{(k)}$
        \Comment{regularized data-consistency correction}

    \Else

        \State $x^{(k+1)}\gets\tilde{x}^{(k)}$

    \EndIf

\EndFor

\State \Return $x^{(T)}$
\end{algorithmic}
\end{algorithm}

\begin{table}[t]
\centering
\caption{Parameters of the geometry-aware reconstruction algorithm.}
\label{tab:hyperparameters}
\renewcommand{\arraystretch}{1.2}
\begin{tabular}{p{2.1cm} p{10.0cm} p{1.6cm}}
\toprule
\textbf{Symbol} & \textbf{Role} & \textbf{Status} \\
\midrule

$\gamma$
&
Regularization of the pullback metric
$G_\gamma=A_\Omega^\top\Sigma^{-1}A_\Omega+\gamma I_n$.
It controls the conditioning and anisotropy of the geometry-preconditioned
guidance and Langevin noise.
&
required
\\

$\gamma_{\mathrm{null}}$
&
Regularization of the approximately null-space-restricted perturbation.
It controls how strongly measurement-sensitive components of the
isotropic perturbation are suppressed.
&
required
\\

$\gamma_{\mathrm{proj}}$
&
Regularization of the data-consistency CG system. Smaller values generally
produce a stronger correction, whereas larger values improve numerical
stability and permit a larger measurement residual.
&
required
\\

$\alpha_k$
&
Step size of the geometry-preconditioned reconstruction update at
iteration $k$.
&
required
\\

$\lambda_k$
&
Weight of the DDPM score in the combined guidance direction.
&
required
\\

$\mu_k$
&
Weight of the data-fidelity gradient in the combined guidance direction.
&
required
\\

$\beta_k$
&
Magnitude of the approximately null-space-restricted perturbation.
Setting $\beta_k=0$ disables this component.
&
optional
\\

$t_k$
&
DDPM timestep assigned to reconstruction iteration $k$ by the diffusion
scheduler.
&
fixed
\\

$\bar{\alpha}_{t_k}$
&
Cumulative DDPM noise-schedule coefficient at timestep $t_k$.
&
fixed
\\

$T$
&
Total number of reconstruction iterations and selected DDPM timesteps.
&
required
\\

$k_{\mathrm{proj}}$
&
Interval between regularized data-consistency corrections. A correction
is applied when $k\bmod k_{\mathrm{proj}}=0$.
&
optional
\\

$\tau$
&
Convergence tolerance used by the conjugate-gradient solves.
&
optional
\\

$K_{\mathrm{CG}}$
&
Maximum number of iterations used by each conjugate-gradient solve.
&
optional
\\

\bottomrule
\end{tabular}
\end{table}
The sequences $\{\alpha_k\}$, $\{\lambda_k\}$, $\{\mu_k\}$, and
$\{\beta_k\}$ may be constant or scheduled over the reconstruction
iterations. In our implementation, each quantity is constant by default; a user-specified schedule can vary its value between the beginning and end of sampling. The exact values and schedules used in each experiment are reported in the implementation details.

%%%%%%%%%%%%%%%%%%%%%%%%%%%%%%%%%%%%%%%%%%%%%%%%%%%%%%%%%%%%
%%%%%% Experiments
%%%%%%%%%%%%%%%%%%%%%%%%%%%%%%%%%%%%%%%%%%%%%%%%%%%%%%%%%%%%
%\FloatBarrier
\section{Experiments}
\label{expreiments}

We evaluate the proposed geometry-aware approximate diffusion posterior sampling framework on sparse-view and limited-angle CT reconstruction. The experiments have three goals. First, we assess reconstruction quality by comparing against the DM4CT benchmark~\cite{DM4CT2024} using the standard distortion
metrics PSNR and SSIM, and additionally provide qualitative
comparisons across a range of acquisition degradations. Second, we evaluate whether the proposed method produces informative empirical sample variability rather than only a single point estimate, analysing the uncertainty structure of our method. Third, we verify the central geometry-aware claim: that posterior variability is suppressed in measurement space and concentrated in directions that are either invisible to the acquisition geometry or corrupted by measurement noise.

Experiments are conducted on the medical CT data from the 2016 Low Dose Grand Challenge, following the data preprocessing and evaluation protocol used in DM4CT. Unless otherwise stated, stochastic reconstructions are generated from the same measured sinogram. For noisy configurations, the measurement noise realization is fixed and only the sampler seed varies between samples, ensuring that observed variability reflects posterior uncertainty rather than changes in the measurement itself. 

%%%%%%%%%%%%%%%%%%%%%%%%%%%%%%%%%%%%%%%%%%%%%%%%%%%%%%%%%%%%%%%%%%%
%%%%%%%%%%% the resonstructed one 
%%%%%%%%%%%%%%%%%%%%%%%%%%%%%%%%%%%%%%%%%%%%%%%%%%%%%%%%%%%%%%%%%%%
\subsection{Reconstruction Performance}
\label{sub:reconstruction_performance}
We evaluate reconstruction performance from two complementary
perspectives. First, we compare against existing methods on the standard DM4CT benchmark, reporting PSNR and SSIM to enable direct quantitative comparison with the published benchmark results (Section~\ref{subsubsec:quantitative_benchmark}).
Second, we provide qualitative comparisons across a controlled set of acquisition degradations, to isolate how different kinds of ill-posedness affect reconstruction quality (Section~\ref{subsec:exp_visual_comparison}).
Together, these two evaluations establish that the proposed method
achieves competitive reconstruction quality across the range of CT acquisition conditions in DM4CT.

\subsubsection{Quantitative Benchmark Comparison}
\label{subsubsec:quantitative_benchmark}
%\paragraph{Experimental Settings}
%\label{Experimental Settings}

We follow the standard DM4CT benchmark configurations to enable direct comparison with existing CT reconstruction methods. The five configurations are: (i) 40 projections without noise, (ii) 20 projection angles with mild noise, (iii) 80 projections with higher noise, (iv) 80 projections with mild noise and ring artifacts, and (v) 40 projections with limited-angle views $[0,\frac{3}{4}\pi]$. All methods are evaluated under the same forward operator, preprocessing, and metric computation protocol as DM4CT for fair comparison.

We compare the proposed method with representative approaches from multiple categories as in the DM4CT paper:
\textbf{Classical methods:} filtered back projection (FBP)~\cite{Kak1988CT}, simultaneous iterative reconstruction technique (SIRT)~\cite{Gilbert1972SIRT}, and total variation (TV)-regularized reconstruction~\cite{Sidky2008TVCT}.
\textbf{Unsupervised methods:} deep image prior (DIP)~\cite{Ulyanov2018DIP} and implicit neural representations (INR)~\cite{Sitzmann2020SIREN}.
\textbf{Supervised methods:} a state-of-the-art supervised reconstruction network trained on paired data~\cite{Liang2021SwinIR}.
\textbf{Diffusion-based methods:} representative posterior sampling and data-consistency strategies, including DDS~\cite{Chung2024DDS}, DPS~\cite{Chung2023DPS}, PGDM~\cite{Song2023PiGDM}, and ReSample~\cite{Song2024ReSample}.

Table~\ref{tab:dm4ct_full} reports PSNR and SSIM for all methods included in the DM4CT benchmark~\cite{DM4CT2024} under all five standard configurations, evaluated on the full test set of 526 images. Baseline results are taken directly from DM4CT; results for the proposed method are produced under the same evaluation protocol using $N=10$ stochastic reconstructions averaged to form the sample-mean reconstruction. 

\begin{table*}[t]
\centering
\small
\setlength{\tabcolsep}{3.5pt}
\captionsetup{skip=7pt}
\caption{Full quantitative comparison (PSNR\,dB\,/\,SSIM) on the DM4CT benchmark test set.
Config~i: 40 views (no noise);
Config~ii: 20 views + mild noise;
Config~iii: 80 views + higher noise;
Config~iv: 80 views + noise + ring artifacts;
Config~v: 40 views, limited angle $[0,\tfrac{3}{4}\pi]$.
Baseline results are adopted from DM4CT.
\textbf{Bold}/\underline{underline}: best/second-best among diffusion-based methods;
$^{\dagger}$: overall best across all methods.}
\label{tab:dm4ct_full}
\begin{tabular}{l*{5}{c}}
\toprule
\textbf{Method}
  & \textbf{Config~i}
  & \textbf{Config~ii}
  & \textbf{Config~iii}
  & \textbf{Config~iv}
  & \textbf{Config~v} \\
\midrule
%--- Classical ---
FBP  & 26.98\,/\,0.69 & 9.89\,/\,0.03 & 12.78\,/\,0.09 & 14.50\,/\,0.13 & 24.14\,/\,0.64 \\
SIRT & 30.40\,/\,0.80 & 26.23\,/\,0.47 & 24.48\,/\,0.32 & 25.86\,/\,0.40 & 26.49\,/\,0.72 \\
\noalign{\hrule height 0.8pt}
%--- Iterative optimisation ---
ADMM-PDTV  & 30.56\,/\,0.79 & 25.12\,/\,0.36 & 18.20\,/\,0.10 & 20.11\,/\,0.15 & 29.69\,/\,0.80 \\
FISTA-SBTV & 30.57\,/\,0.82 & 26.28\,/\,0.70 & 25.39\,/\,0.67 & 27.84\,/\,0.74 & 27.68\,/\,0.76 \\
\noalign{\hrule height 0.8pt}
%--- Implicit neural representation ---
DIP        & 28.58\,/\,0.80 & 24.13\,/\,0.61 & 26.40\,/\,0.66 & 27.89\,/\,0.71 & 27.87\,/\,0.75 \\
INR        & 33.21\,/\,0.86
           & 26.15\,/\,0.76
           & 27.74\,/\,0.80
           & 29.50\,/\,0.74
           & 31.01\,/\,0.81 \\
R2Gaussian & 32.14\,/\,0.81
           & 24.90\,/\,0.70
           & 25.45\,/\,0.86$^{\dagger}$
           & 25.45\,/\,0.74
           & 28.26\,/\,0.78 \\
\noalign{\hrule height 0.8pt}
%--- Supervised ---
SwinIR     & 32.45\,/\,0.88
           & 29.92$^{\dagger}$\,/\,0.83$^{\dagger}$
           & 30.37$^{\dagger}$\,/\,0.84
           & 30.79$^{\dagger}$\,/\,0.85$^{\dagger}$
           & 28.93\,/\,0.81 \\
\noalign{\hrule height 0.8pt}
%--- Diffusion-based ---
MCG           & 30.00\,/\,0.79 & 27.50\,/\,0.68 & \textbf{28.90}\,/\,0.71 & 29.12\,/\,0.74 & 28.32\,/\,0.74 \\
DPS           & 30.75\,/\,0.79 & 27.09\,/\,0.73 & 27.81\,/\,0.74 & 28.28\,/\,0.75 & 27.47\,/\,0.72 \\
PSLD          & 26.03\,/\,0.75 & 25.12\,/\,0.73 & 25.77\,/\,0.74 & 26.03\,/\,0.75 & 25.66\,/\,0.72 \\
PGDM          & 30.26\,/\,0.80 & 27.81\,/\,0.70 & 28.44\,/\,0.66 & 29.11\,/\,0.72 & \underline{29.66}\,/\,\underline{0.77} \\
DDS           & 31.43\,/\,0.84 & 20.12\,/\,0.22 & 19.41\,/\,0.25 & 18.25\,/\,0.20 & 28.58\,/\,\underline{0.77} \\
ReSample      & \underline{32.03}\,/\,\underline{0.85}
              & \underline{27.92}\,/\,0.73
              & 28.67\,/\,0.73
              & \textbf{29.70}\,/\,0.76
              & 27.70\,/\,0.76 \\
DMPlug        & 25.77\,/\,0.71 & 25.78\,/\,0.71 & 25.81\,/\,0.71 & 25.70\,/\,0.68 & 20.70\,/\,0.50 \\
Reddiff       & 28.01\,/\,0.78 & 26.87\,/\,0.67 & 27.66\,/\,0.70 & 27.70\,/\,0.73 & 26.04\,/\,0.73 \\
HybridReg     & 27.63\,/\,0.78 & 26.68\,/\,0.67 & 27.40\,/\,0.71 & 27.44\,/\,0.73 & 25.74\,/\,0.73 \\
DiffStateGrad & 27.46\,/\,0.77
              & 26.97\,/\,\underline{0.76}
              & 27.35\,/\,\underline{0.77}
              & 27.36\,/\,\underline{0.77}
              & 24.29\,/\,0.70 \\
\noalign{\hrule height 0.8pt}
\textbf{Ours}
  & \textbf{34.04}$^{\dagger}$\,/\,\textbf{0.89}$^{\dagger}$
  & \textbf{28.98}\,/\,\textbf{0.81}
  & \underline{28.83}\,/\,\textbf{0.80}
  & \underline{29.42}\,/\,\textbf{0.81}
  & \textbf{33.82}$^{\dagger}$\,/\,\textbf{0.88}$^{\dagger}$ \\
\bottomrule
\end{tabular}
\end{table*}

\vspace{1em}

\paragraph{Evaluation protocol.}
Our method generates multiple samples from the posterior distribution $p(x \mid y)$, while some baseline methods produce a single deterministic reconstruction. For a fair quantitative comparison with deterministic benchmarks, we use the posterior mean reconstruction as the representative estimate. We generate $N$ reconstructions $\{x_j^{(i)}\}_{i=1}^{N}$ for test image $j$ and use their sample mean as the representative reconstruction:
\begin{equation}
    \bar{x}_j
    \;:=\;
    \frac{1}{N}\sum_{i=1}^{N}x_j^{(i)}
    \;\approx\;
    \mathbb{E}[x\mid y_j].
    \label{eq:posterior-mean-evaluation}
\end{equation}
The posterior mean is the minimum mean-squared-error estimator and provides
a single reconstruction for comparison with deterministic methods. PSNR
and SSIM are computed between $\bar{x}_j$ and the corresponding
ground-truth image $x_{\mathrm{gt},j}$. We report the dataset averages
\begin{equation}
    \operatorname{PSNR}_{\mathrm{dataset}}
    \;=\;
    \frac{1}{M}\sum_{j=1}^{M}
    \operatorname{PSNR}(\bar{x}_j,x_{\mathrm{gt},j}),
    \qquad
    \operatorname{SSIM}_{\mathrm{dataset}}
    \;=\;
    \frac{1}{M}\sum_{j=1}^{M}
    \operatorname{SSIM}(\bar{x}_j,x_{\mathrm{gt},j}),
    \label{eq:dataset-metrics}
\end{equation}
where $M=526$ is the number of test images, and $N=10$ is the number of samples.

For the noise-free and limited-angle settings (Configs~i and v), the proposed method achieves the best performance
among all compared methods. In Config~i, it obtains
34.04~dB/0.89, outperforming the strongest unsupervised baseline,
INR, by 0.83~dB, the strongest supervised restoration baseline,
SwinIR, by 1.59~dB, and the strongest diffusion-based posterior
baseline, ReSample, by 2.01~dB. The advantage is even larger in
Config~v, where the proposed method reaches 33.82~dB/0.88,
exceeding the next-best method, INR, by 2.81~dB and 0.07 in SSIM.

This increase in margin from Config~i to Config~v is consistent with the role of geometry-aware guidance. In the sparse-view setting, the null-space is induced by missing views but is less directionally structured, so the advantage of geometry-aware comes primarily from the pullback metric redirecting updates away from measurement-sensitive directions. In the limited-angle setting, the null-space is strongly anisotropic and aligned with the missing angular range. The pullback metric explicitly accounts for this geometry, providing a stronger advantage as the null-space becomes more geometrically structured. This is also reflected in the behavior of ReSample, which drops from 32.03~dB in
Config~i to 27.70~dB in Config~v, indicating that hard projection
alone is insufficient for structured null-spaces.

For the noisy sparse-view settings, the proposed method remains competitive and is particularly strong among posterior-sampling methods. In Config~ii, it achieves 28.98 dB/0.81 SSIM, outperforming DPS, DDS, PGDM, and ReSample. In Config~iii, the acquisition contains more projection views but a higher noise level. The increased angular coverage reduces the geometric
ambiguity and therefore limits the relative advantage of null-space-aware sampling. Reconstruction quality is instead determined mainly by the balance between noise suppression and measurement consistency. In addition, the effective measurement precision is based on a post-log Gaussian approximation that is most reliable at moderate-to-high photon counts. Under the stronger noise in Config~iii, deviations from this approximation may lead to less accurate likelihood weighting and reduce
reconstruction performance. Consequently, the proposed method obtains a slightly lower PSNR than MCG (28.83 versus 28.90~dB). However, it achieves substantially higher SSIM (0.80 versus 0.71), indicating better preservation of structural information.

Under ring artifacts (Config~iv), the proposed method obtains
29.42~dB/0.81, giving one of the strongest structural reconstructions among diffusion-based methods. Resample is slightly higher in PSNR, while the proposed method maintains a higher SSIM. One possible explanation is that the projection-consistency term forces the reconstruction to match the measured sinogram. When the sinogram contains ring artifacts, this can also force the reconstruction to keep some of these artifacts, which may lower the PSNR. This suggests that ring artifacts may require artifact-aware noise modelling or adaptive precision weighting, so that unreliable sinogram components are not treated with the same confidence as clean measurements.

Since posterior-sampling methods are stochastic, their performance can depend on the particular sample drawn from the posterior. To assess this variability, we additionally generate 20 stochastic reconstruction samples for DDS, DPS, PGDM, ReSample, and the proposed method on one representative test slice for each acquisition configuration. We select DDS~\cite{Chung2024DDS}, DPS~\cite{Chung2023DPS}, PGDM~\cite{Song2023PiGDM}, and ReSample~\cite{Song2024ReSample} for this analysis because they achieve competitive performance in the benchmark and can generate multiple stochastic reconstruction samples. They therefore provide appropriate baselines for assessing both reconstruction accuracy and sampling variability. The posterior mean and sample variability are reported in Appendix~\ref{app:single_image_comparison}. This analysis is not
intended to replace the benchmark-style comparison in Table~\ref{tab:dm4ct_full}, but to provide additional insight into the stability of the posterior samplers.

\iffalse

\paragraph{Geometry-aware validation.}
For the range-null decomposition and posterior variability diagnostics, we use three representative settings: 40 projections without noise, 40 projections with Poisson noise, and 20 projections with Poisson noise. For the pullback metric ablation, we use two noise-free settings: 40 and 20 projections with full angular range, chosen to isolate the individual effects of projection sparsity and angular range restriction on the geometric structure of the posterior.
\fi

\subsubsection{Qualitative Comparison across Acquisition Degradations}
\label{subsec:exp_visual_comparison}
We compare visually against four representative diffusion-based methods: 
DDS~\cite{Chung2024DDS}, DPS~\cite{Chung2023DPS}, PGDM~\cite{Song2023PiGDM}, and ReSample~\cite{Song2024ReSample}. These methods represent two broad approaches to data consistency. DDS, DPS, and PGDM use gradient- or optimization-guided updates. ReSample combines a
hard data-consistency correction with resampling. All four are top-performing diffusion-based methods in the DM4CT benchmark, making them strong and representative baselines. FBP is included as an analytical reference to the reconstruction difficulty of each setting rather than as a competitive baseline.

For comparison, we use a controlled configuration set 
that varies one factor at a time:
(i)~40 projections without noise,
(ii)~40 projections with mild Poisson noise,
(iii)~40 projections with limited angular range $[0,3\pi/4]$,
(iv)~40 projections with limited angular range $[0,\pi/2]$, and
(v)~10 projections with mild Poisson noise.
This isolates whether reconstruction degradation is caused by an enlarged null-space, missing directional information, or measurement noise. Figure~\ref{fig:method_comparison} presents the qualitative 
results for all five settings.

\begin{figure}[t]
\centering
\includegraphics[width=0.8\linewidth]{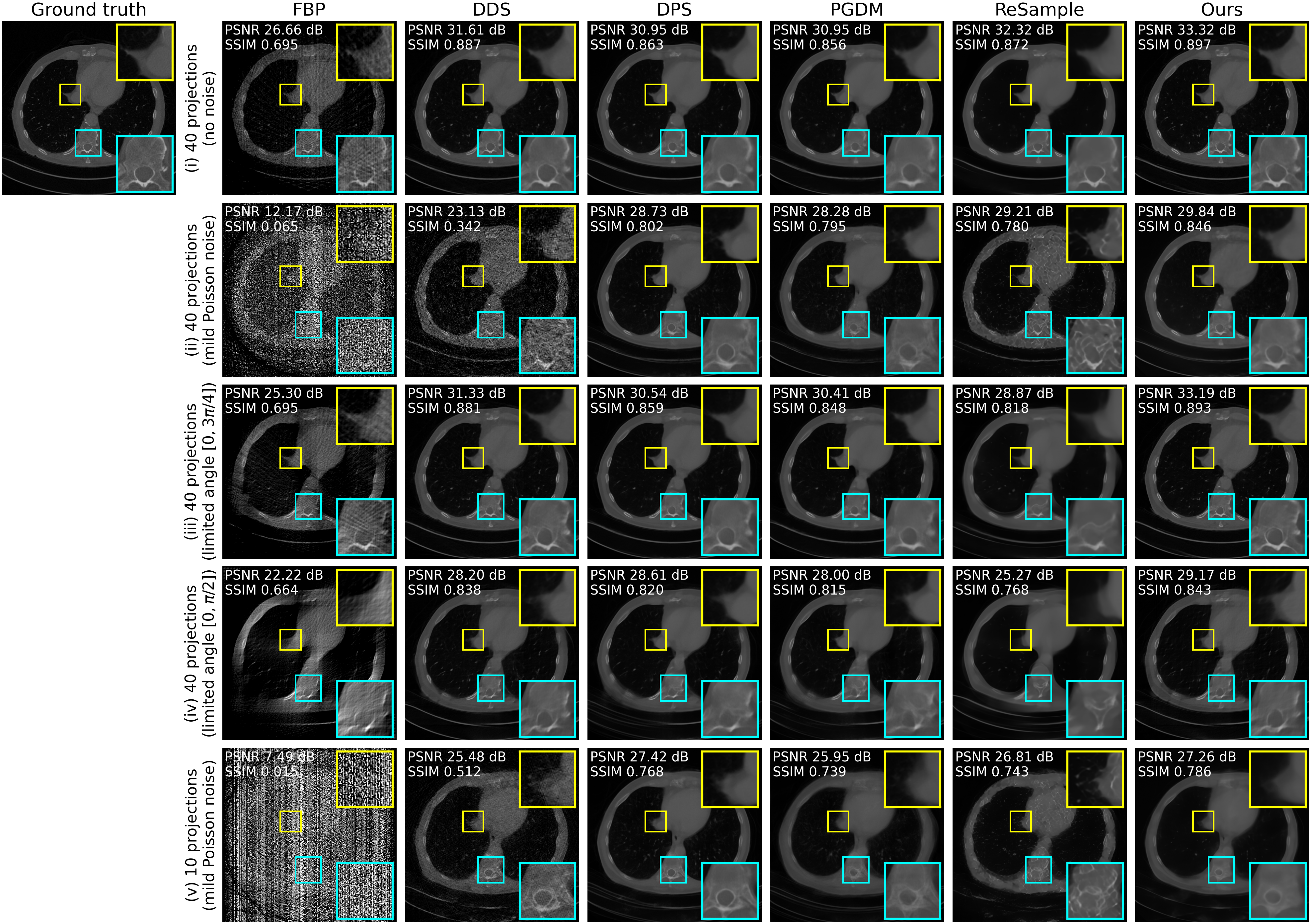}
\caption{Qualitative reconstruction comparison under five controlled 
acquisition-degradation settings. Columns show ground truth, FBP, DDS, DPS, PGDM, ReSample, and the proposed method (Ours). FBP is shown as an analytical reference illustrating acquisition ill-posedness; all diffusion-based methods show the sample mean over $N=20$ stochastic reconstructions. Yellow and cyan insets show zoomed regions of interest. PSNR and SSIM are computed on the posterior mean.}
\label{fig:method_comparison}
\end{figure}

As expected, FBP degrades rapidly as the inverse problem becomes more ill-posed. In the noise-free sparse-view setting~(i), streak artifacts are already visible, and the cyan zoom shows a grid-like interference pattern with reduced cortical bone definition. Under mild Poisson noise~(ii), noise amplification dominates the reconstruction, obscuring both the lung boundary and the vertebral structure. In the limited-angle settings~(iii--iv), FBP exhibits directional blurring and streaking aligned with the missing angular range; this effect becomes stronger as the angular coverage is reduced from $[0,3\pi/4]$ to $[0,\pi/2]$. The most
severe degradation occurs in the highly sparse noisy case~(v), where streak artifacts dominate the image and fine anatomy is largely lost.\footnote{In the benchmark-compatible implementation, the angular grid includes both $0$ and $\pi$. For parallel-beam CT, these two views are redundant up to detector reversal, effectively leaving one fewer distinct projection direction. We keep this geometry unchanged for all methods to ensure a fair comparison.}

The diffusion-based baselines show different behaviour across the five
acquisition settings. DDS performs well when the measurements are noise-free: in setting~(i), it recovers clean anatomy and sharp local structure (31.61~dB, SSIM~0.887), and it remains competitive in the limited-angle settings~(iii--iv). However, it is much less robust to measurement noise. In the mild Poisson case~(ii), its performance drops to 23.13~dB/0.342, with strong noise patterns in the lung region and a clear loss of vertebral structure in the cyan zoom. DPS and PGDM are more stable under noise, achieving 28.73~dB/0.802 and 28.28~dB/0.795 in setting~(ii), respectively. However, their zoomed regions show residual smoothing and less distinct boundary detail, especially in the limited-angle cases. ReSample performs strongly in the sparse-view settings, reaching 32.32~dB/0.872 in setting~(i) and 29.21~dB/0.780 in setting~(ii), but degrades more substantially when the angular range is restricted. In setting~(iv), it drops to 25.27~dB/0.768, and the zoomed regions show clear directional blurring. This suggests that hard projection is effective for sparse-view consistency, but is less sufficient when the null-space is structured by limited angular coverage.

The proposed method gives the strongest overall performance in the first four settings. It achieves 33.32~dB/0.897 in setting~(i), 29.84~dB/0.846 in setting~(ii), 33.19~dB/0.893 in setting~(iii), and 29.17~dB/0.843 in setting~(iv), giving the best PSNR and SSIM in each of these cases. These quantitative gains are consistent with the visual comparison: the proposed method better suppresses noise amplification, reduces directional artifacts, and preserves sharper anatomical boundaries in both zoomed regions. The advantage is particularly clear in the limited-angle setting~(iii), where the proposed method outperforms the strongest competing posterior baseline, DDS, by 1.86~dB and produces sharper local structure in the vertebral zoom. This trend remains visible in the more severe limited-angle setting~(iv). Although the margin is smaller, the proposed method still achieves the best PSNR and SSIM among the posterior methods, reaching 29.17~dB/0.843.

The most challenging setting~(v), with only 10 noisy projections, shows a different trade-off. DPS obtains a slightly higher PSNR than the proposed method (27.42~dB versus 27.26~dB), but the proposed method achieves the highest SSIM among the posterior methods (0.786). Visually, DPS retains slightly more pixelwise contrast, but also shows residual streak artifacts, whereas the proposed method produces a smoother and more coherent reconstruction. This indicates that, in the most severely ill-posed noisy setting, PSNR and SSIM emphasise different aspects of reconstruction quality. In setting~(v), the measurements are both extremely sparse and noisy. Under such conditions, hard or near-hard consistency, as used by ReSample and the proposed method, can preserve noise-related errors from the sinogram. This may limit PSNR, while the posterior mean still improves structural consistency, leading to high SSIM values.

Overall, the visual and quantitative results support the benefit of
geometry-aware sampling. The measurement-consistency step stabilises the range-space component of the reconstruction, while the pullback-metric-guided diffusion update improves the weakly constrained directions where the learned prior is most useful. This combination helps reduce streak artifacts, preserve anatomical boundaries, and produce stable posterior-mean reconstructions across sparse-view, noisy, and limited-angle settings.

\FloatBarrier
\subsection{Empirical Sample-Variability Analysis}
\label{subsec:exp_uncertainty}

A key output of the proposed framework is empirical sample variability
alongside the reconstruction. Figure~\ref{fig:ours_uncertainty} summarises
this behaviour across the five acquisition settings (as in
Section~\ref{subsec:exp_visual_comparison}). For each configuration, we show
the FBP reconstruction, the sample-mean reconstruction, the pixel-wise
sample standard deviation, and the absolute error
$|\bar{x}-x_{\mathrm{gt}}|$. This layout allows us to compare where the
generated reconstructions vary with where the sample mean makes
reconstruction errors.

\begin{figure}[t]
    \centering
    \includegraphics[width=0.85\linewidth]{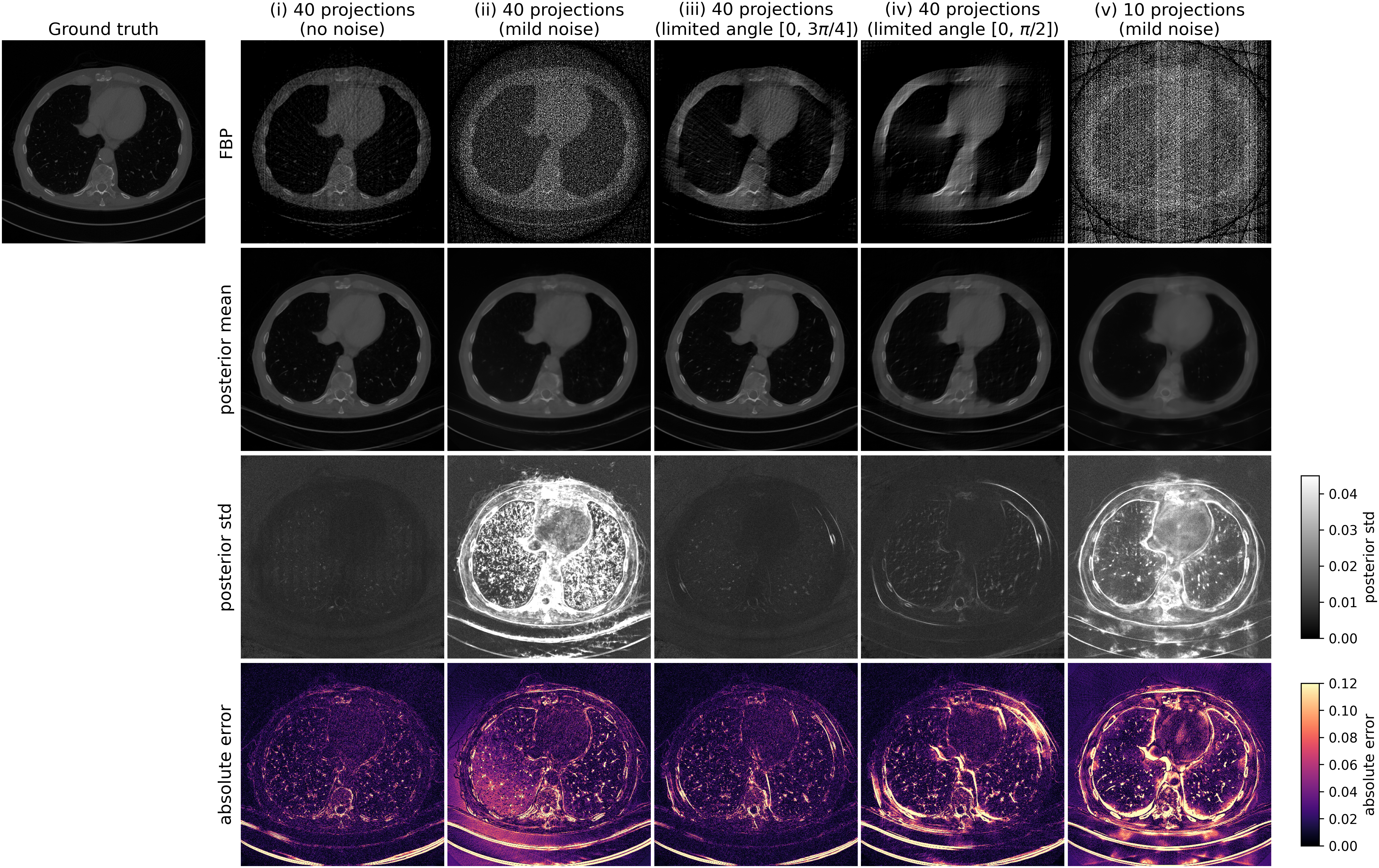}
    \caption{Characterisation of the proposed method across five
    acquisition-degradation settings. From top to bottom: FBP reconstruction
    (illustrating acquisition ill-posedness), sample-mean reconstruction,
    pixel-wise sample standard deviation (darker = lower sample variability),
    and absolute error $|\bar{x} - x_{\mathrm{gt}}|$. The ground-truth column
    is shown for reference.}
    \label{fig:ours_uncertainty}
\end{figure}

In the noise-free sparse-view setting~(i), the sample mean is close to the
ground truth and the pixel-wise sample standard deviation is low across most
of the image. The remaining sample variability is concentrated mainly around
anatomical boundaries and fine structures. The absolute-error map shows a
similar spatial pattern, with small errors in homogeneous regions and larger
errors near interfaces. This suggests that, when the measurements are
reliable, the generated reconstructions vary little in well-constrained
regions and more near ambiguous structures.

Adding Poisson noise in setting~(ii) substantially increases the pixel-wise
sample standard deviation. Unlike the noise-free case, the variability is no
longer confined mainly to edges, but spreads through the lung parenchyma and
soft tissue. This reflects the fact that the measurements themselves are
corrupted, so even geometrically observed directions are less reliable. The
absolute-error map is also more widespread, while the sample mean remains
visually much cleaner than the FBP reconstruction. This indicates that
averaging over stochastic reconstructions suppresses much of the noise, while
their variability still reflects the effect of noisy measurements.

The limited-angle settings show a different variability structure. In
setting~(iii), with angular range $[0,3\pi/4]$, the pixel-wise sample
standard deviation remains relatively low and is mainly visible around
boundaries. In setting~(iv), where the angular range is further reduced to
$[0,\pi/2]$, the variability becomes more pronounced and spatially
structured, especially near the body boundary and along directions affected
by the missing angular range. The corresponding error maps follow the same
trend. In setting~(iii), most errors appear near anatomical boundaries. In
setting~(iv), the errors become stronger and more directional. This suggests
that the sample variability is related to the missing-angle geometry, rather
than becoming uniformly larger across the image.

The most challenging case is setting~(v), which combines only 10 projections
with measurement noise. Here the pixel-wise sample standard deviation is high
and spatially structured, with strong variability around the body boundary,
lung region, and internal anatomical interfaces. The absolute-error map is
also brightest in these regions. Compared with setting~(ii), the variability
is less uniformly diffuse and more strongly shaped by the acquisition
geometry. This is consistent with the large null space induced by extreme
view sparsity. Many image structures are weakly constrained by the
measurements, while the data-consistency correction still promotes agreement
with the measured projection directions.

\begin{table}[t]
\centering
\caption{Mean pixel-wise sample standard deviation across $N=20$ stochastic
reconstructions under five CT acquisition configurations.}
\label{tab:posterior_uncertainty}
\begin{tabular}{lccccc}
\toprule
Method & Config i & Config ii & Config iii & Config iv & Config v \\
\midrule
DDS & 0.0194 & 0.0523 & 0.0209 & 0.0250 & 0.0459 \\
DPS & 0.0245 & 0.0328 & 0.0258 & 0.0418 & 0.0405 \\
PGDM & 0.0228 & 0.0331 & 0.0252 & 0.0355 & 0.0501 \\
ReSample & 0.0117 & 0.0182 & 0.0142 & 0.0215 & 0.0275 \\
Ours & 0.0079 & 0.0260 & 0.0081 & 0.0082 & 0.0205 \\
\bottomrule
\end{tabular}
\end{table}

The values in Table~\ref{tab:posterior_uncertainty} are consistent with these
observations. The proposed method gives the lowest mean pixel-wise sample
standard deviation in the more informative settings: 0.0079 in Config~i,
0.0081 in Config~iii, and 0.0082 in Config~iv. These low values indicate
that the generated reconstructions agree closely when the measurements
provide strong geometric constraints. In the noisy or highly undersampled
settings, the variability increases, reaching 0.0260 in Config~ii and
0.0205 in Config~v. This increase is expected, since measurement noise and
extreme view sparsity both enlarge the set of plausible reconstructions.

The corresponding spatial standard-deviation maps are shown in
Appendix~\ref{app:baseline_uncertainty},
Figure~\ref{fig:baseline_uncertainty}. Compared with the diffusion-based
baselines, the proposed method generally produces lower variability across
reconstructions. DDS, DPS, and PGDM have higher mean pixel-wise standard
deviations in most configurations. ReSample gives the closest competing
values: its variability is higher than that of the proposed method in
Configs~i, iii, iv, and v, but lower in Config~ii
(0.0182 versus 0.0260). Lower variability alone does not necessarily
indicate better uncertainty estimation, since it may also result from
insufficient exploration. In our results, however, the proposed method
combines competitive reconstruction quality with spatial variability that
corresponds to anatomical interfaces, noisy regions, and structures affected
by missing views. This correspondence suggests that the maps capture
measurement-related reconstruction ambiguity rather than only uncontrolled
sampling variation. They remain empirical variability diagnostics rather
than calibrated posterior-uncertainty estimates.

%%%%%%%%%%%%%%%%%%%%%% Geometry-aware testing %%%%%%%%%%%%%%%%%%%%

%\FloatBarrier
\subsection{Geometry-aware analysis}
\label{sub:Geometry-aware analysis}

The previous sections evaluated reconstruction quality using PSNR and SSIM
(Section~\ref{sub:reconstruction_performance}) and characterised empirical
sample variability through standard-deviation maps
(Section~\ref{subsec:exp_uncertainty}). While these results quantify
reconstruction fidelity and show where the generated reconstructions vary,
they do not fully explain how the acquisition geometry shapes the
reconstructions or their variability. To provide this geometric perspective,
we next perform a geometry-aware analysis using regularized range--null
component diagnostics.

First, we validate the regularized range--null component analysis on
representative reconstructions and confirm that the extracted null-like
component is effectively invisible to the measurement operator
(Section~\ref{sub:range_null_decomposition}). Second, we apply the
decomposition to stochastic reconstructions and examine whether their sample
variability is concentrated in approximately measurement-invisible directions,
as predicted by the geometry of the method
(Section~\ref{sub:Geometry-Aware Posterior}). Third, we isolate the role of
pullback-metric preconditioning by comparing the proposed update with a
Euclidean drift baseline, verifying that the geometry-aware update
preferentially moves along weakly constrained directions while maintaining
measurement consistency (Section~\ref{sub:pullback_ablation}). Finally, we
evaluate the effect of signal-dependent precision weighting derived from the
Poisson photon-counting noise model and demonstrate its benefit under
realistic CT noise (Section~\ref{sub:precision_ablation}).

%%%%%%%%%%%%%%%%%%%%%%%%%%%%%%%%%%%%%%%%%%%%%%%%%%%%%%
%%%%%%%%%%%%%%%%%%%%%%%%%%%%%%%%%%%%%%%%%%%%%%%%%%%%%%
\subsubsection{Regularized Range--Null Component Analysis}
\label{sub:range_null_decomposition}

Before analysing variability across stochastic reconstructions, we examine
how the sample-mean reconstruction separates into measurement-sensitive and
weakly constrained components. This analysis is motivated by the exact
orthogonal direct-sum structure in
Proposition~\ref{thm:orthogonal-decomposition}, but avoids explicitly
computing a pseudoinverse.

We define the regularized range-component operator
\begin{equation}
    P_{R,\gamma_{\mathrm{dec}}}
    \;:=\;
    A_\Omega^\top
    \left(
        A_\Omega A_\Omega^\top
        +
        \gamma_{\mathrm{dec}} I_m
    \right)^{-1}
    A_\Omega,
    \qquad
    \gamma_{\mathrm{dec}}>0,
    \label{eq:regularized-range-operator}
\end{equation}
and its complementary null-like component operator
\begin{equation}
    P_{N,\gamma_{\mathrm{dec}}}
    \;:=\;
    I_n-P_{R,\gamma_{\mathrm{dec}}}.
    \label{eq:regularized-null-operator}
\end{equation}
Because the exact range--null decomposition is determined by $A_\Omega$, we
use the unweighted Euclidean operator in this diagnostic. This isolates the
effect of the acquisition geometry from the noise weighting used during
reconstruction.

For a reconstruction $x$, we define
\begin{equation}
    x_{\mathrm{range}}
    \;:=\;
    P_{R,\gamma_{\mathrm{dec}}}x,
    \qquad
    x_{\mathrm{null}}
    \;:=\;
    P_{N,\gamma_{\mathrm{dec}}}x.
    \label{eq:exp_single_decomposition}
\end{equation}
By construction,
\begin{equation}
    x=x_{\mathrm{range}}+x_{\mathrm{null}}.
\end{equation}
Here, $\gamma_{\mathrm{dec}}>0$ is a diagnostic regularization parameter
used only for this post-processing analysis. It stabilizes the
measurement-space solve and controls how strongly weakly observed directions
are assigned to the null-like component. Smaller values produce a closer
approximation to the exact range--null decomposition but lead to a more
ill-conditioned linear system.

The component $x_{\mathrm{range}}$ lies in
$\mathcal R(A_\Omega^\top)$ because it has the form
$A_\Omega^\top q$ for some $q\in\mathbb R^m$. It therefore emphasizes
directions that affect the measured sinogram. The complementary component
$x_{\mathrm{null}}$ emphasizes directions that are weakly constrained or
unobservable under the acquisition geometry.

For finite $\gamma_{\mathrm{dec}}>0$, these operators are regularized
component operators rather than exact orthogonal projectors. Consequently,
$x_{\mathrm{null}}$ need not lie exactly in $\mathcal N(A_\Omega)$, and the
two components need not be orthogonal. As
$\gamma_{\mathrm{dec}}\to0$, the operators approach the orthogonal
projectors onto $\mathcal R(A_\Omega^\top)$ and
$\mathcal N(A_\Omega)$, respectively.

To make the regularization independent of the numerical scaling of
$A_\Omega$, we set
\begin{equation}
    \gamma_{\mathrm{dec}}
    \;=\;
    \eta\,
    \lambda_{\max}
    \left(
        A_\Omega A_\Omega^\top
    \right),
    \qquad
    \eta=10^{-4}.
    \label{eq:decomposition-gamma-scaling}
\end{equation}
The largest eigenvalue is estimated using 20 matrix-free power iterations.
For the 40-projection geometry used here, this gives
$\lambda_{\max}(A_\Omega A_\Omega^\top)\approx1.9606\times10^4$ and
$\gamma_{\mathrm{dec}}\approx1.9606$. The linear systems are solved by
conjugate gradient with relative tolerance $10^{-6}$.

We quantify the relative squared norms of the two components and their cross
term by
\begin{equation}
    r_{\mathrm{range}}
    \;=\;
    \frac{\|x_{\mathrm{range}}\|_2^2}{\|x\|_2^2},
    \qquad
    r_{\mathrm{null}}
    \;=\;
    \frac{\|x_{\mathrm{null}}\|_2^2}{\|x\|_2^2},
    \qquad
    c_{\mathrm{cross}}
    \;=\;
    \frac{
        2\langle x_{\mathrm{range}},x_{\mathrm{null}}\rangle
    }{
        \|x\|_2^2
    }.
    \label{eq:exp_single_energy}
\end{equation}
These quantities satisfy
\begin{equation}
    r_{\mathrm{range}}
    +
    r_{\mathrm{null}}
    +
    c_{\mathrm{cross}}
    =1.
\end{equation}
Because $c_{\mathrm{cross}}$ is generally nonzero, the first two quantities
are relative component-magnitude diagnostics rather than fractions of an
orthogonal energy decomposition.

We further measure the relative sinogram response of the null-like component:
\begin{equation}
    \rho_{\mathrm{null}}
    \;=\;
    \frac{
        \|A_\Omega x_{\mathrm{null}}\|_2
    }{
        \|A_\Omega x\|_2
    }.
    \label{eq:exp_single_null_visibility}
\end{equation}
A small value of $\rho_{\mathrm{null}}$ indicates that the extracted null-like
component only weakly affects the predicted sinogram.

\begin{figure}[htbp]
    \centering
    \includegraphics[width=0.7\linewidth]
    {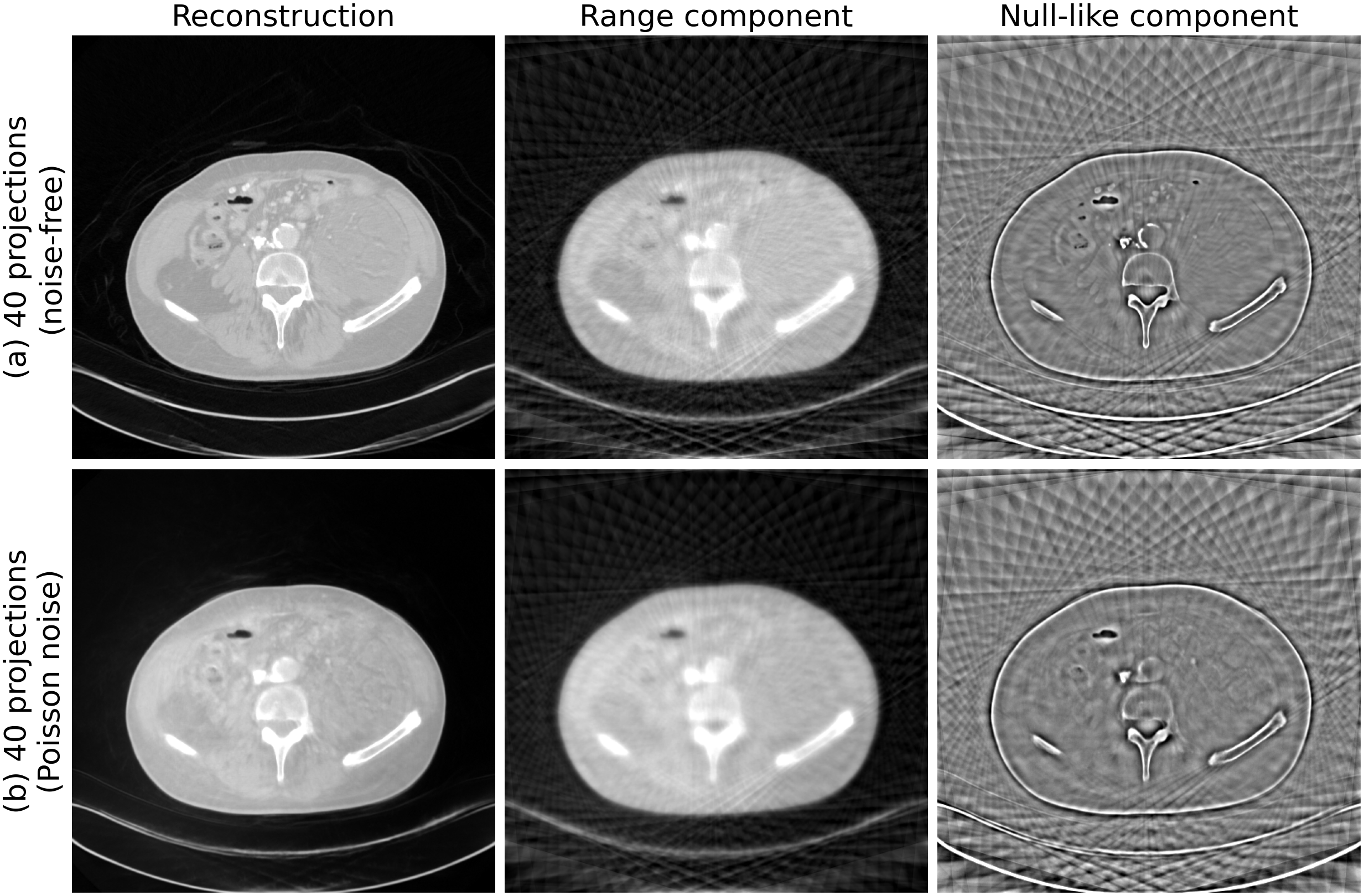}
    \caption{
    Regularized range--null component analysis of sample-mean
    reconstructions for a representative test image under
    (a) 40-projection noise-free and
    (b) 40-projection Poisson-noise conditions.
    Each row shows the sample mean obtained from $N=20$ stochastic
    reconstructions, its regularized range component, and its complementary
    null-like component. The component images are independently
    contrast-normalized for visualization; their displayed intensities
    therefore do not represent their relative magnitudes.
    }
    \label{fig:single_range_null_decomposition}
\end{figure}

\begin{table}[t]
\centering
\caption{
Regularized range--null diagnostics for posterior-mean reconstructions. The quantities $r_{\mathrm{range}}$ and $r_{\mathrm{null}}$ are relative squared norms, $c_{\mathrm{cross}}$ accounts for their nonorthogonality, and $\rho_{\mathrm{null}}$ measures the relative sinogram response of the null-like component.
}
\label{tab:single_range_null}
\begin{tabular}{lcccc}
\toprule
Setting
& $r_{\mathrm{range}}$
& $r_{\mathrm{null}}$
& $c_{\mathrm{cross}}$
& $\rho_{\mathrm{null}}$ \\
\midrule
40 projections (noise-free)
& 0.99631
& 0.00305
& $6.43\times10^{-4}$
& $1.93\times10^{-4}$ \\
40 projections (Poisson noise)
& 0.99729
& 0.00217
& $5.43\times10^{-4}$
& $1.77\times10^{-4}$ \\
\bottomrule
\end{tabular}
\end{table}

Figure~\ref{fig:single_range_null_decomposition} shows that the regularized range component retains most of the overall image magnitude and low-frequency anatomical content, whereas the null-like component emphasizes edge- and streak-like structures that are weakly constrained by the acquisition geometry. Quantitatively, the range-component squared norm is $99.631\%$ of the reconstruction squared norm in the noise-free case and
$99.729\%$ in the Poisson-noise case. The corresponding null-like-component ratios are $0.305\%$ and $0.217\%$, respectively.

In both settings, $\rho_{\mathrm{null}}$ is below
$2\times10^{-4}$, showing that the extracted null-like component has only a small effect on the sinogram. These results support its interpretation as an approximately measurement-invisible component under
the 40-projection geometry. Because this experiment analyses posterior means, the component ratios should not be interpreted as measures of posterior uncertainty; variability across stochastic reconstruction samples is examined
separately.
%%%%%%%%%%%%%%%%%%%%%%%%%%%%%%%%%%%%%%%%%%%%%%%%%%%%%%%%%%%
\FloatBarrier
\subsubsection{Geometry-Aware Posterior Variability Analysis}
\label{sub:Geometry-Aware Posterior}

We next analyse the geometry of the posterior variability produced by the proposed reconstruction method. For each acquisition setting, we first generate $K$ posterior reconstructions
$\{x^{(k)}\}_{k=1}^{K}$ of the same test image using different random seeds and the same measured sinogram. The posterior mean is then computed as $\bar{x}=\frac{1}{K}\sum_{k=1}^{K}x^{(k)}$. All diagnostics in this subsection are computed from these reconstructed stochastic reconstruction samples. Thus, the analysis does not study synthetic perturbations,
but the actual variability of the reconstructions produced by the sampler.

To investigate whether this posterior variability follows the measurement geometry, we perform two complementary analyses. First, we compare the variability of the posterior reconstructions in image space with their variability after applying the CT forward operator. Given the stochastic reconstruction samples $\{x^{(k)}\}_{k=1}^{K}$ and sample mean $\bar{x}$, we compute
\begin{equation}
\begin{aligned}
\sigma_x
&=
\sqrt{
\frac{1}{K}
\sum_{k=1}^{K}
\left(
x^{(k)}-\bar{x}
\right)^2
},
\qquad
\sigma_y
&=
\sqrt{
\frac{1}{K}
\sum_{k=1}^{K}
\left(
A_{\Omega}x^{(k)}
-
A_{\Omega}\bar{x}
\right)^2
}.
\end{aligned}
\label{eq:exp_std_maps}
\end{equation}
Here, $\sigma_x$ measures pixel-wise variability across reconstructed images, whereas $\sigma_y$ measures how much this variability remains visible in the corresponding sinograms.

To compare variability in the two domains, we normalise by the magnitude of the corresponding posterior mean:
%To compare the two domains, we normalize by the magnitude of the corresponding posterior mean,
\begin{equation}
\begin{aligned}
R_{\mathrm{img}}
&=
\frac{\|\sigma_x\|_2}
{\|\bar{x}\|_2},
\qquad
R_{\mathrm{sino}}
&=
\frac{\|\sigma_y\|_2}
{\|A_{\Omega}\bar{x}\|_2},
\qquad
R_{\mathrm{sino/img}}
&=
\frac{R_{\mathrm{sino}}}
{R_{\mathrm{img}}}.
\end{aligned}
\label{eq:exp_relative_variability_ratios}
\end{equation}
A small value of $R_{\mathrm{sino/img}}$ means that differences between
posterior reconstructions are strongly reduced after applying the forward operator. This indicates that much of the image-space variability lies in directions that are weakly visible, or approximately invisible, to the measurements.

Second, we decompose the deviation of each posterior reconstruction from the posterior mean,
\begin{equation}
v^{(k)}=x^{(k)}-\bar{x}
\label{eq:exp_sample_deviation}
\end{equation}
into a measurement-visible component and an approximately
measurement-invisible component:
\begin{equation}
v_{\mathrm{range}}^{(k)}
\;=\;
P_{R,\gamma_{\mathrm{dec}}}v^{(k)},
\qquad
v_{\mathrm{null}}^{(k)}
\;=\;
P_{N,\gamma_{\mathrm{dec}}}v^{(k)}.
\label{eq:exp_range_projection}
\end{equation}
The relative contributions of the two components are quantified by
\begin{equation}
\begin{aligned}
r_{\mathrm{range}}^{(k)}
&\;=\;
\frac{
\|v_{\mathrm{range}}^{(k)}\|_2^2
}{
\|v^{(k)}\|_2^2
},
\qquad
r_{\mathrm{null}}^{(k)}
&\;=\;
\frac{
\|v_{\mathrm{null}}^{(k)}\|_2^2
}{
\|v^{(k)}\|_2^2
}.
\end{aligned}
\label{eq:range_null_energy_ratios}
\end{equation}
The quantities $r_{\mathrm{range}}^{(k)}$ and $r_{\mathrm{null}}^{(k)}$
measure the relative energy of the measurement-visible and approximately
measurement-invisible components of each posterior deviation. Since the
projection is computed with a finite regularisation parameter $\gamma_{\mathrm{dec}}$, these values should be interpreted as diagnostic energy ratios rather than exact orthogonal fractions.

While $R_{\mathrm{sino/img}}$ measures the overall suppression of posterior variability after applying the forward operator, the quantities $r_{\mathrm{range}}$ and $r_{\mathrm{null}}$ describe how the image-space posterior deviations are distributed between measurement-visible and approximately measurement-invisible components. Together, these diagnostics show both how strongly the stochastic reconstruction samples remain constrained by the measurements and where the remaining variability is concentrated.

\begin{table}[t]
\centering
\caption{
Geometry-aware posterior variability diagnostics for three representative acquisition settings. A smaller $R_{\mathrm{sino/img}}$ indicates stronger suppression of posterior variability after applying the forward operator. The mean range--null diagnostic energy ratios, $\bar r_{\mathrm{range}}$ and $\bar r_{\mathrm{null}}$, describe how posterior deviations from the posterior mean are distributed between measurement-visible and approximately measurement-invisible components. These quantities are averaged over stochastic reconstruction samples.
}
\label{tab:geometry_variability_results}
\begin{tabular}{lcccccc}
\toprule
Setting & Samples & $R_{\mathrm{img}}$ & $R_{\mathrm{sino}}$ & $R_{\mathrm{sino/img}}$ & 
$\bar r_{\mathrm{range}}$ &
$\bar r_{\mathrm{null}}$ \\
\midrule
40 projections (noise-free) & 50 & 0.0094 & 0.0004 & 0.0399 & 0.184 & 1.046 \\
40 projections (Poisson noise) & 50 & 0.0386 & 0.0106 & 0.2737 & 0.422 & 0.618 \\
20 projections (Poisson noise) & 50 & 0.0437 & 0.0133 & 0.3043 & 0.313 & 0.697 \\
\bottomrule
\end{tabular}
\end{table}

Table~\ref{tab:geometry_variability_results} summarises the geometry-aware posterior variability diagnostics from two complementary perspectives. First, the ratios $R_{\mathrm{img}}$, $R_{\mathrm{sino}}$, and $R_{\mathrm{sino/img}}$ compare the variability of posterior reconstructions in image space with the variability that remains after applying the CT forward operator. Second, the range--null diagnostic ratios $\bar r_{\mathrm{range}}$ and $\bar r_{\mathrm{null}}$ describe whether the image-space posterior deviations are mainly measurement-visible or approximately measurement-invisible.

From the measurement-space perspective, all three settings show much smaller variability in sinogram space than in image space. This indicates that although the stochastic reconstruction samples differ visibly as reconstructed images, these differences are strongly reduced after applying the measurement operator. The strongest suppression occurs in the 40-projection noise-free setting, where $R_{\mathrm{sino/img}}=0.0399$, meaning that the relative variability in
measurement space is only about $4\%$ of that in image space. With Poisson noise, this ratio increases to $0.2737$ for 40 projections. With both Poisson noise and stronger undersampling, it further increases to $0.3043$ for 20 projections. This trend is consistent with the measurements becoming less informative: noise reduces the reliability of the measured sinogram, and fewer projection angles leave more image directions weakly constrained.

The range--null diagnostics provide a complementary view of the same behaviour. In the 40-projection noise-free setting, the posterior deviations are dominated by the approximately measurement-invisible component ($\bar r_{\mathrm{null}}=1.046$ versus $\bar r_{\mathrm{range}}=0.184$; their sum exceeds one because the cross-term $\bar c_{\mathrm{cross}}\approx-0.23$ is nonzero for finite $\gamma_{\mathrm{dec}}$). This indicates that, once measurement
consistency is enforced, the remaining posterior variability lies mainly in directions that have little effect on the measured sinogram. Under Poisson noise, the measurement-visible contribution increases
($\bar r_{\mathrm{range}}=0.422$), reflecting the fact that uncertainty in the measurements can also produce variability in directions visible to the forward operator. Nevertheless, the approximately null component remains larger ($\bar r_{\mathrm{null}}=0.618$). In the 20-projection noisy case, the null component again dominates ($\bar r_{\mathrm{null}}=0.697$), consistent with the larger null-space induced by stronger angular undersampling.

Overall, the two diagnostics give consistent evidence that the proposed sampler produces geometry-aware posterior variability. The stochastic reconstruction samples differ primarily in image directions that are weakly constrained by the acquisition, while remaining comparatively consistent in the measurement domain.

%%%%%%%%%%%%%%%%%%%% Effect of Pullback-metric %%%%%%%%%%%%%%%%%%%%%%%%
\FloatBarrier
\subsubsection{Effect of Pullback-Metric Preconditioning}
\label{sub:pullback_ablation}

To isolate the effect of pullback-metric preconditioning, we compare the proposed deterministic update with an identical Euclidean update. Within each acquisition geometry, the two variants use the same pretrained prior, data-fidelity term, initialization, reconstruction parameters, and regularized data-consistency schedule. Langevin noise and the additional approximately null-space-restricted perturbation are disabled in both cases. We consider two noise-free configurations with the same number of measurements: 40-view full-angle CT over $[0,\pi]$ and 40-view limited-angle CT over $[0,3\pi/4]$. The endpoint convention matches the DM4CT benchmark geometry.

At reconstruction iteration $k$, let $s^{(k)}$ denote the DDPM score estimate evaluated at the corresponding noised image state. We form the combined guidance direction
\begin{equation}
r^{(k)}
\;=\;
\lambda_k s^{(k)}
-\mu_k A_{\Omega}^{\top}\Sigma^{-1}
\left(A_{\Omega}x^{(k)}-y\right).
\label{eq:ablation_rhs}
\end{equation}
This is the practical annealed guidance direction used by the implementation, rather than the exact score of the clean-image posterior. The Euclidean variant applies
\begin{equation}
u_{\mathrm{Euc}}^{(k)}
\;=\;
r^{(k)},
\label{eq:ablation_euclidean_drift}
\end{equation}
whereas the pullback-preconditioned variant solves
\begin{equation}
\left(
A_{\Omega}^{\top}\Sigma^{-1}A_{\Omega}
+\gamma I_n
\right)
u_{\mathrm{geo}}^{(k)}
\;=\;
r^{(k)}.
\label{eq:ablation_pullback_drift}
\end{equation}
Thus, within each geometry, the variants differ only in whether the combined guidance is preconditioned by $G_{\gamma}=A_{\Omega}^{\top}\Sigma^{-1}A_{\Omega}+\gamma I_n$.

\begin{figure*}[t]
    \centering
    \includegraphics[width=0.7\textwidth]{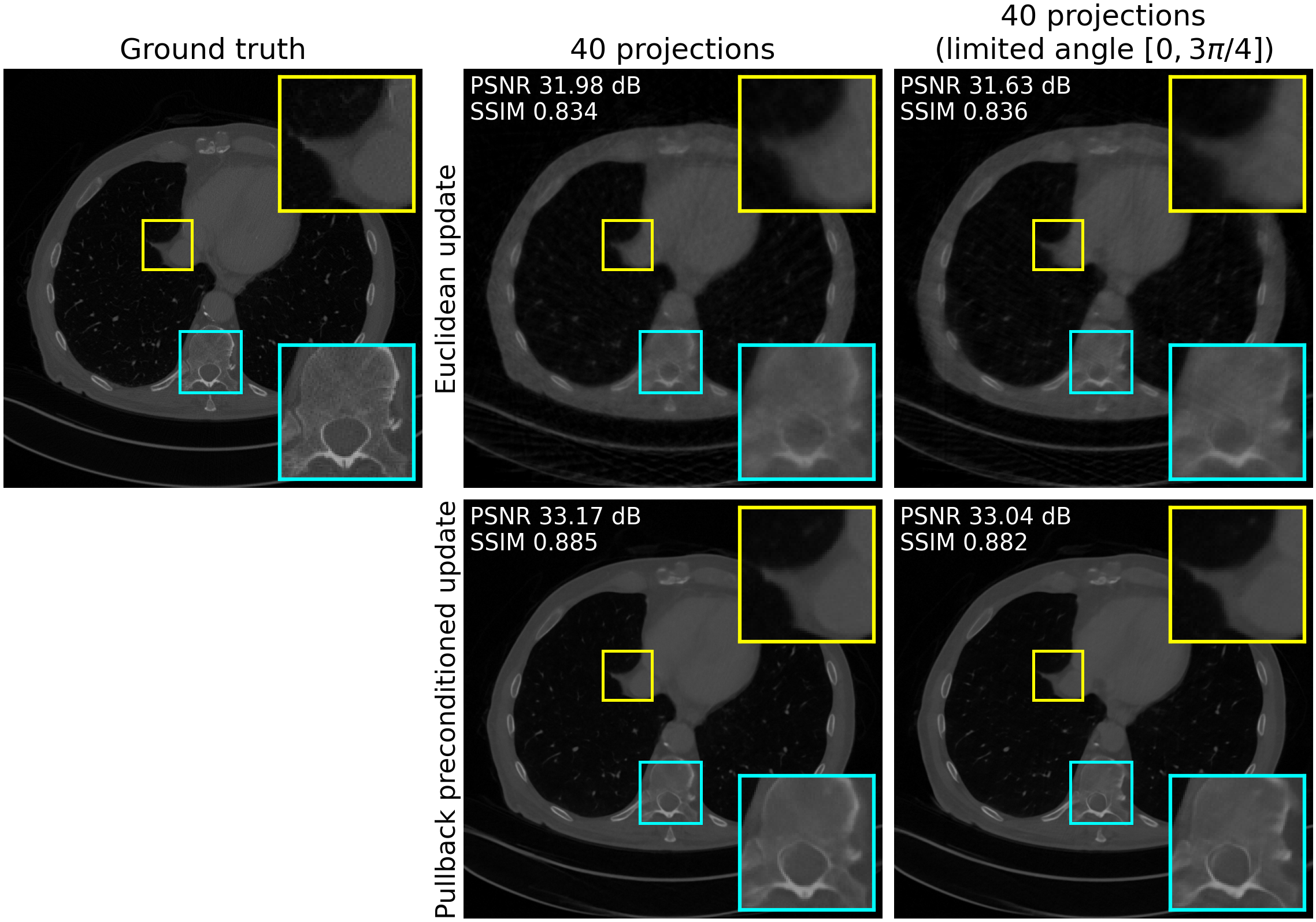}
    \caption{
    Reconstruction comparison between the Euclidean and
    pullback-preconditioned variants under two noise-free acquisition
    geometries. The columns show the ground truth, 40-view full-angle CT over $[0,\pi]$, and 40-view limited-angle CT over $[0,3\pi/4]$. The top and bottom reconstruction rows correspond to the Euclidean and pullback-preconditioned variants, respectively. Yellow and cyan boxes mark the lung--soft-tissue interface and vertebral region, with enlarged views  shown in the corresponding insets. PSNR and SSIM are displayed for each reconstruction. All images use the same display window.
    }
    \label{fig:pullback_recon_panel}
\end{figure*}

Figure~\ref{fig:pullback_recon_panel} shows that pullback preconditioning improves reconstruction quality in both geometries. For full-angle CT, PSNR increases from $31.98$ to $33.17$~dB and SSIM increases from $0.8339$ to $0.8852$. For limited-angle CT, PSNR increases from $31.63$ to $33.04$~dB and SSIM increases from $0.8359$ to $0.8816$. In the yellow region, the pullback-preconditioned reconstructions preserve the lung--soft-tissue interface more clearly. In the cyan region, they recover a sharper and more continuous vertebral boundary. Compared with the Euclidean variant, the
pullback-preconditioned results also exhibit less local blurring and fewer residual streak artifacts. These visual differences are consistent with the improvements in PSNR and SSIM.

We quantify agreement with the measured sinogram using
\begin{equation}
\mathrm{RelErr}_{y}
\;=\;
\frac{\|A_{\Omega}x-y\|_2}{\|y\|_2}.
\label{eq:relative_data_error}
\end{equation}
To examine the geometry of the applied drift, let
$\Delta x_{\mathrm{det}}^{(k)}$ denote the deterministic displacement before Langevin noise, null-space perturbation, and the subsequent data-consistency correction. We define
\begin{equation}
\rho_{\Delta x}^{(k)}
\;=\;
\frac{
\|A_{\Omega}\Delta x_{\mathrm{det}}^{(k)}\|_2
}{
\|\Delta x_{\mathrm{det}}^{(k)}\|_2
},
\label{eq:update_measurement_response}
\end{equation}
and report
\begin{equation}
\bar{\rho}_{\Delta x}
\;=\;
\frac{1}{T}
\sum_{k=0}^{T-1}\rho_{\Delta x}^{(k)}.
\label{eq:mean_update_measurement_response}
\end{equation}
This quantity measures the sinogram change per unit image-space displacement. A smaller value indicates a displacement with lower measurement sensitivity.

\begin{table}[t]
\centering
\small
\renewcommand{\arraystretch}{1.18}
\caption{
Effect of pullback-metric preconditioning under two noise-free 40-view
geometries. Within each geometry, Langevin noise and the additional null-space perturbation are disabled, and both variants share all settings except the drift preconditioner. Bold indicates the better reconstruction metric or the lower measurement error and update response within each geometry.
}
\label{tab:pullback_metric_ablation}
\begin{tabular}{llcccc}
\toprule
Geometry & Method & PSNR & SSIM & RelErr$_y$ & $\bar{\rho}_{\Delta x}$ \\
\midrule
\multirow{2}{*}{Full angle}
& Euclidean & 31.98 & 0.8339 & $\mathbf{1.13\times10^{-4}}$ & $1.04\times10^{2}$ \\
& Pullback  & \textbf{33.17} & \textbf{0.8852} & $2.24\times10^{-4}$ & $\mathbf{3.96\times10^{-1}}$ \\
\midrule
\multirow{2}{*}{Limited angle}
& Euclidean & 31.63 & 0.8359 & $\mathbf{1.21\times10^{-4}}$ & $1.10\times10^{2}$ \\
& Pullback  & \textbf{33.04} & \textbf{0.8816} & $2.08\times10^{-4}$ & $\mathbf{2.89\times10^{-1}}$ \\
\bottomrule
\end{tabular}
\end{table}

As shown in Table~\ref{tab:pullback_metric_ablation}, pullback
preconditioning improves PSNR by $1.19$~dB in the full-angle case and
$1.41$~dB in the limited-angle case. The Euclidean variant gives the smaller relative data error, but all four errors remain below $2.3\times10^{-4}$. Although the pullback variant does not attain the lowest measurement residual, it maintains strong measurement agreement, with $\mathrm{RelErr}_y<2.3\times10^{-4}$, while substantially improving PSNR and SSIM.

The update-response diagnostic changes much more strongly. In the full-angle case, $\bar{\rho}_{\Delta x}$ decreases from $1.04\times10^2$ to
$3.96\times10^{-1}$, a reduction by a factor of approximately $263$. In the limited-angle case, it decreases from $1.10\times10^2$ to
$2.89\times10^{-1}$, a reduction by a factor of approximately $381$. These results show that the pullback metric substantially reduces the measurement response of the applied deterministic displacement. Together with the range--null diagnostic below, this is consistent with shifting the update toward more weakly constrained image directions.

\begin{figure}[t]
    \centering
    \begin{subfigure}{0.496\linewidth}
        \centering
        \includegraphics[width=0.85\linewidth]{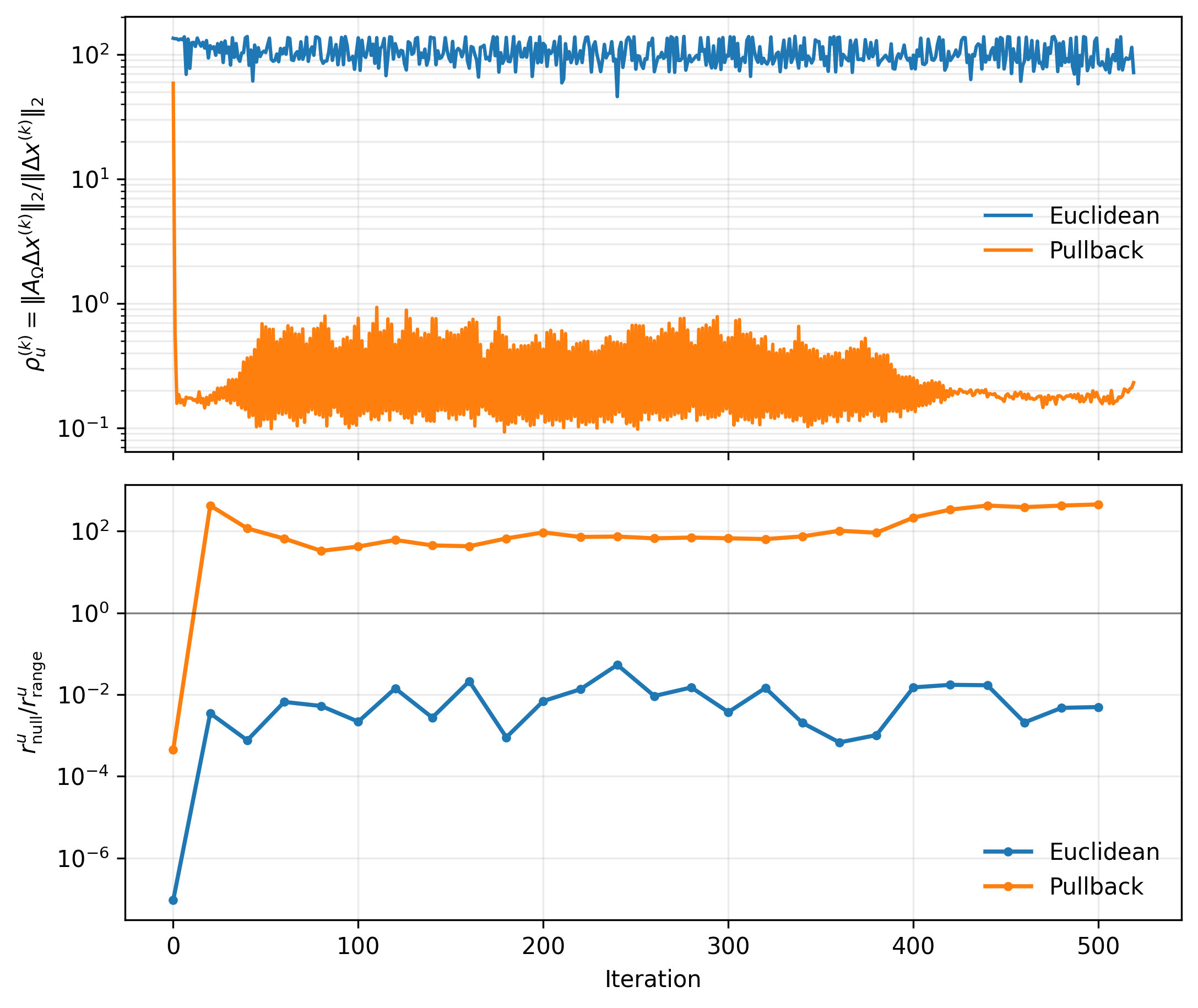}
        \caption{40-view full angle}
        \label{fig:pullback_40views}
    \end{subfigure}
    \hfill
    \begin{subfigure}{0.496\linewidth}
        \centering
        \includegraphics[width=0.85\linewidth]{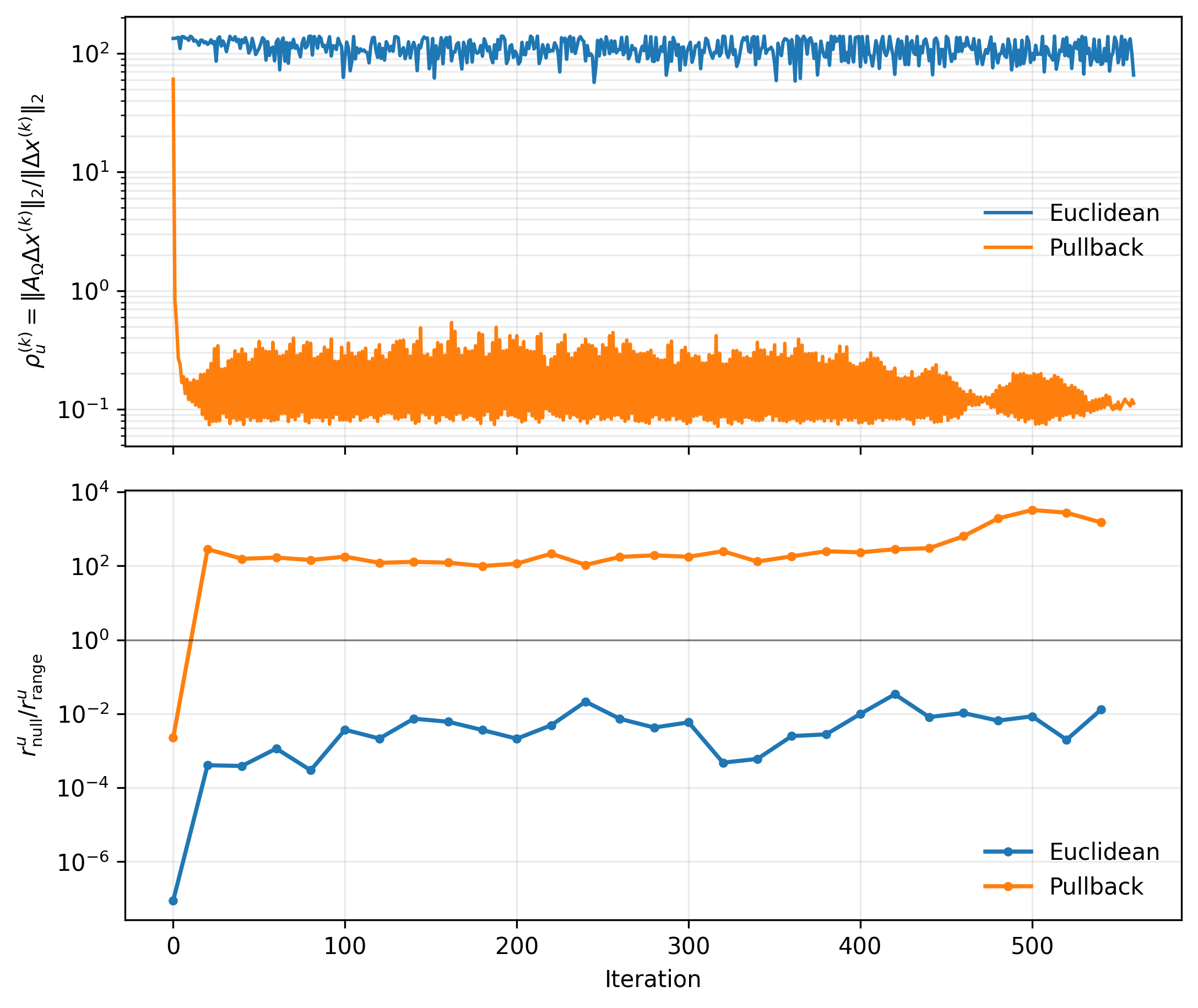}
        \caption{40-view limited angle}
        \label{fig:pullback_limited_40views}
    \end{subfigure}
    \caption{
    Effect of pullback-metric preconditioning on the applied deterministic
    displacement. The top panels show $\rho_{\Delta x}^{(k)}$, and the bottom
    panels show the regularized null-like-to-range component ratio.
    }
    \label{fig:pullback_metric_analysis}
\end{figure}

For the component diagnostic, define
\begin{equation}
B\;:=\;A_{\Omega}^{\top}\Sigma^{-1}A_{\Omega}.
\end{equation}
For each sampled displacement $\Delta x_{\mathrm{det}}^{(k)}$, we compute
\begin{equation}
\Delta x_{R}^{(k)}
\;=\;
\left(B+\gamma_{\mathrm{dec}}I_n\right)^{-1}
B\Delta x_{\mathrm{det}}^{(k)},
\qquad
\Delta x_{N}^{(k)}
\;=\;
\Delta x_{\mathrm{det}}^{(k)}-\Delta x_{R}^{(k)}.
\end{equation}
Here $\gamma_{\mathrm{dec}}>0$ regularizes the matrix-free CG solve. For finite $\gamma_{\mathrm{dec}}$, $\Delta x_N^{(k)}$ is a null-like component rather than an exact element of $\mathcal N(A_{\Omega})$, and the two components need not be orthogonal. We therefore use the squared component-magnitude ratio
\begin{equation}
\frac{r_{N}^{(k)}}{r_{R}^{(k)}}
\;=\;
\frac{\|\Delta x_N^{(k)}\|_2^2}
{\|\Delta x_R^{(k)}\|_2^2}
\end{equation}
as a diagnostic rather than as an exact partition of image energy.

Figure~\ref{fig:pullback_metric_analysis} shows the evolution of the update response and the regularized component ratio. The Euclidean displacement has a substantially larger measurement response, whereas pullback preconditioning keeps this response much smaller over the reconstruction trajectory. The component ratio provides a complementary interpretation: a larger null-like-to-range ratio indicates that relatively more of the applied displacement lies along weakly constrained or unobservable directions. 

Figure~\ref{fig:pullback_metric_analysis} shows that pullback preconditioning changes the geometry of the applied deterministic displacement throughout reconstruction. In both acquisition settings, the Euclidean displacement remains strongly measurement-sensitive, with $\rho_{\Delta x}^{(k)}$ generally of order $10^2$. After the initial iteration, pullback preconditioning reduces this response to approximately $10^{-1}$–$10^0$. The regularized component analysis gives a complementary result. The null-like-to-range ratio remains below one for the Euclidean variant, whereas it rises well above one for the pullback variant and remains large over most of the trajectory. These observations indicate that pullback preconditioning suppresses displacement components that strongly affect the measured sinogram while retaining relatively more movement along weakly constrained and unobservable directions.

Taken together, the reconstruction metrics, measurement residuals, and update diagnostics show that pullback preconditioning changes the deterministic update in a manner consistent with the measurement geometry. It reduces movement that strongly affects the measured sinogram and permits relatively larger movement along weakly constrained directions, where the learned prior can contribute anatomical information. In these experiments, this change is
associated with improved PSNR and SSIM while the final measurement residual remains small.

%%%%%%%%%%%%%%%%%%%%%%%%%%% Noise isolaltion %%%%%%%%%%%%%%%%%%%%%%%
\subsubsection{Effect of Signal-Dependent Precision Weighting}
\label{sub:precision_ablation}

We finally examine the effect of the signal-dependent precision weighting used in the proposed method. In the Poisson photon-counting model of CT, different sinogram entries have different noise levels: rays with lower photon counts are less reliable than rays with higher photon counts. Following the Gaussian approximation described in Section~\ref{sec:algorithm} and Appendix~\ref{app:gaussian-approx}, we model this using a signal-dependent covariance
\begin{equation}
\widehat{\Sigma}^{-1}
\;=\;\operatorname{diag}(w_1,\ldots,w_m),
\end{equation}
where $w_i > 0$ is a stabilized approximation of the estimated photon
count $\widehat{\lambda}_i = I_0 e^{-y_i}$ for the $i$-th sinogram entry;
the precise construction is given in Appendix~\ref{app:gaussian-approx}.

This signal-dependent covariance enters the proposed method in three places. First, it appears in the weighted data-fidelity term
\begin{equation}
\Phi(x)
\;=\;\frac{1}{2}
\left\|
\widehat{\Sigma}^{-1/2}
(A_{\Omega}x-y)
\right\|_2^2 .
\label{eq:precision_weighted_data_fidelity}
\end{equation}
Second, it appears in the pullback metric
\begin{equation}
G_{\gamma}
\;=\;A_{\Omega}^{\top}
\widehat{\Sigma}^{-1}
A_{\Omega}
+\gamma I .
\label{eq:precision_pullback_metric}
\end{equation}
Third, the same metric defines the geometry-aware Langevin perturbation,
\begin{equation}
\nu_G^{(k)}
\sim
\mathcal N\!\left(
0,2\alpha_kG_\gamma^{-1}
\right).
\label{eq:precision_pullback_noise}
\end{equation}
In practice, this noise sampler is implemented in a matrix-free way: $G_{\gamma}$ is never formed explicitly, and only matrix-vector products
\begin{equation}
G_{\gamma}v
\;=\;A_{\Omega}^{\top}
\widehat{\Sigma}^{-1}
A_{\Omega}v
+\gamma v
\label{eq:precision_pullback_matvec}
\end{equation}
are required.

To isolate the contribution of this signal-dependent weighting, we compare the full method against an ablation in which $\widehat{\Sigma}$ is replaced by the identity matrix. This unit-precision variant treats all sinogram bins as equally reliable.

Both variants use the same pretrained diffusion prior, pullback-metric preconditioning, projection step, initialization, and hyperparameters. The only difference is whether the method uses the physically motivated signal-dependent covariance $\widehat{\Sigma}$ or the unit covariance $I$. We evaluate the two variants under three 40-projection settings: no noise, Gaussian noise, and Poisson noise. The Poisson setting is the case where signal-dependent weighting is expected to matter most.

\begin{figure}[t]
    \centering
    \includegraphics[width=0.85\linewidth]{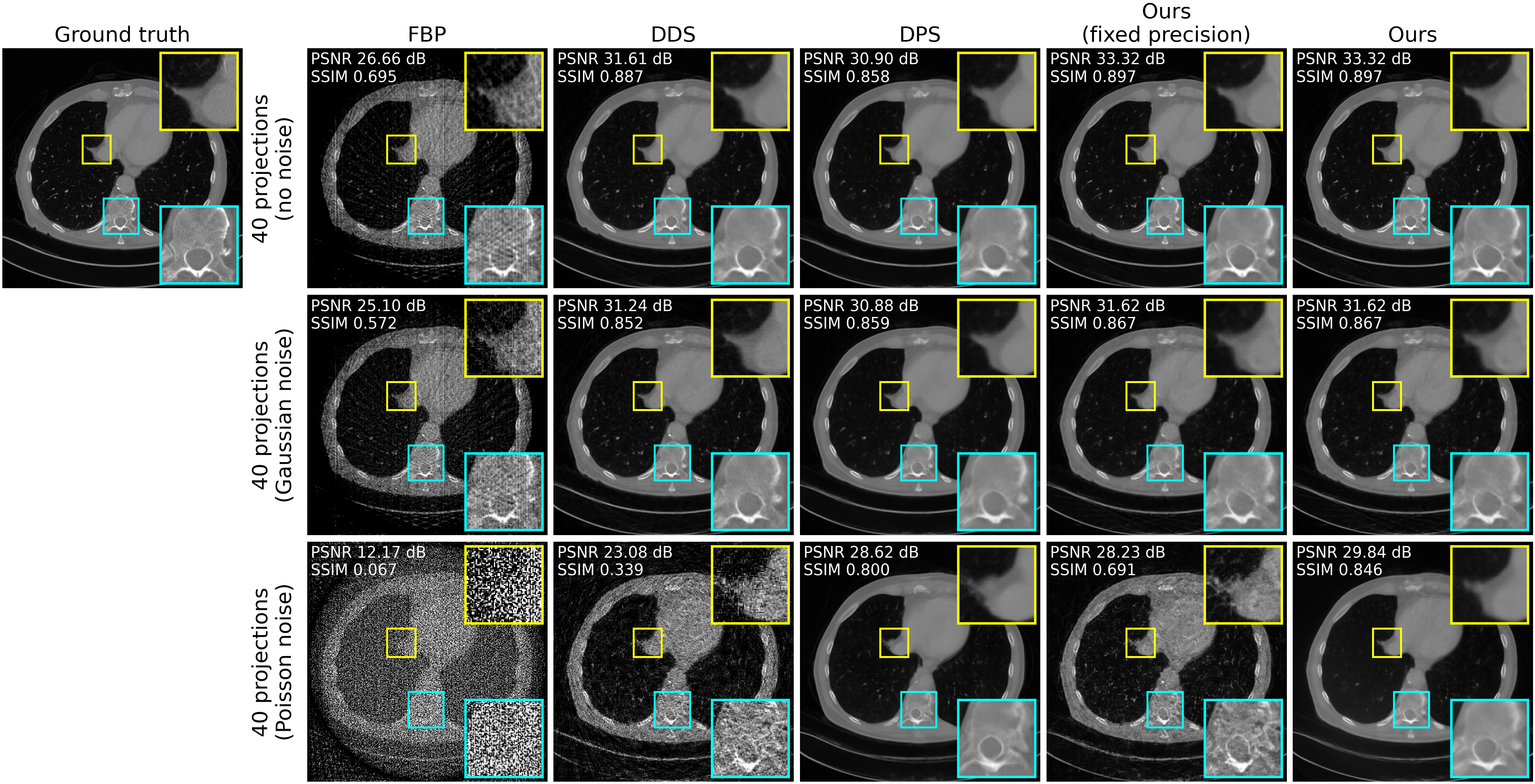}
    \caption{Effect of signal-dependent precision weighting under three noise conditions (40 projections). Columns show FBP, DDS, DPS, Ours with unit precision ($\Sigma = I$), and Ours with effective measurement-dependent precision
$\widehat{\Sigma}^{-1}=\operatorname{diag}(w_i)$. The two variants of our method are identical under noise-free and Gaussian noise settings, but the signal-dependent precision provides a substantial improvement under Poisson noise, particularly in high-attenuation regions (cyan box, vertebral body).}
    \label{fig:precision_ablation}
\end{figure}

\iffalse
\begin{table}[t]
\centering
\caption{Effect of signal-dependent precision weighting (40 projections, 3 noise conditions). PSNR~(dB) / SSIM reported. The two variants are identical under noiseless and Gaussian noise settings; the benefit of signal-dependent precision appears specifically under Poisson noise.}
\label{tab:precision_ablation}
\begin{tabular}{lccc}
\toprule
\textbf{Method} & \textbf{No noise} & \textbf{Gaussian noise} & \textbf{Poisson noise} \\
\midrule
FBP                      & 26.66 / 0.695 & 25.10 / 0.572 & 12.17 / 0.067 \\
DDS~\cite{Chung2024DDS}  & 31.61 / 0.887 & 31.24 / 0.852 & 23.08 / 0.339 \\
DPS~\cite{Chung2023DPS}  & 30.90 / 0.858 & 30.88 / 0.859 & 28.62 / 0.800 \\
\midrule
Ours (unit precision, $\Sigma=I$)    & 33.32 / 0.897 & 31.62 / 0.867 & 28.23 / 0.691 \\
Ours (signal-dependent $\widehat{\Sigma}$) & \textbf{33.32 / 0.897} & \textbf{31.62 / 0.867} & \textbf{29.84 / 0.846} \\
\bottomrule
\end{tabular}
\end{table}
\fi

Figure~\ref{fig:precision_ablation} shows that signal-dependent precision has little effect when the measurement noise is absent or spatially uniform. In the noise-free setting, the two variants of our method are identical, giving 33.32~dB/0.897. The same is true under Gaussian noise, where both variants achieve 31.62~dB/0.867. This is expected: when all sinogram entries have the same noise level, unit precision is an appropriate model.

The situation changes under Poisson noise. FBP is strongly corrupted by noise, and DDS also degrades substantially to 23.08~dB/0.339. DPS is more stable, reaching 28.62~dB/0.800. The unit-precision version of our method performs worse than DPS, with 28.23~dB/0.691. This shows that using the wrong noise model can reduce the benefit of the geometry-aware update: if all sinogram bins are treated as equally reliable, noisy low-photon-count measurements can be over-weighted.

With signal-dependent precision, the proposed method improves to 29.84~dB/0.846. This is a gain of 1.61~dB and 0.155 SSIM over the unit-precision variant, and it also outperforms DPS by 1.22~dB and 0.046 SSIM. The improvement is visible in the cyan zoomed region around the vertebral body. The unit-precision reconstruction contains stronger noise-driven artifacts and a less clear cortical bone boundary, whereas the signal-dependent reconstruction gives a cleaner and sharper structure.

This behaviour is consistent with the Poisson noise model. Rays passing through high-attenuation tissue, such as bone, produce fewer detected photons and therefore have higher measurement variance. The signal-dependent precision down-weights these less reliable measurements in both the data-fidelity term and the pullback metric. As a result, the reconstruction is less dominated by noisy sinogram entries and relies more appropriately on the learned prior in uncertain regions.

These results show that signal-dependent precision is important for realistic Poisson CT noise. At the same time, it is complementary to the geometry-aware part of the method. In the noise-free and Gaussian-noise settings, where signal-dependent weighting provides no additional advantage, the proposed method already performs strongly. Under Poisson noise, the weighting becomes necessary to preserve this advantage by matching the data-consistency term to the physical noise model.

%%%%%%%%%%%%%%%%%%%%%%%%%%%%%%%%%%%%%%%%%%%%%%%%%%%%%%%%%%%%%%%%%%%%% Conclusion and Future %%%%%%%%%%%%%%%%%%%%%%%%%%%%%%%%%%
\FloatBarrier
\section{Conclusion and Future Work}
\label{sec:lConclusion}
%%%%%%%%%%%%%%%%%%%%%%%%%%%%%%%%%%%%%%%%%%%%%%%%%%%%%%%%%%%%%%%%%%%
%%%%%%%%%%%%%%%%%%%%%%%%%%%%%%%%%%%%%%%%%%%%%%%%%%%%%%%%%%%%%%%%%%%

We presented a geometry-aware posterior sampling method for sparse-view and limited-angle CT reconstruction. The method combines a diffusion prior with data consistency and a pullback-metric update. Instead of applying the diffusion update directly in image space, we use the measurement-induced metric $G_\gamma$
% \begin{equation}
% G_\gamma
% =A_\Omega^\top \Sigma^{-1} A_\Omega
% +\gamma I
% \label{eq:conclusion_pullback_metric}
% \end{equation}
to adapt the update direction to the CT acquisition geometry. This reduces updates that would strongly change the measured sinogram, while allowing the learned prior to act in directions that are weakly constrained by the measurements.

The experiments show that this geometry-aware design improves reconstruction quality across sparse-view, noisy, and limited-angle CT settings. The proposed method reduces streak artifacts, suppresses noise amplification, and preserves anatomical boundaries more clearly than the posterior-sampling baselines considered. The pullback update also improves PSNR and SSIM compared with the Euclidean update, while keeping the measurement residual very small. The update-direction analysis further shows that the pullback metric reduces the measurement response of the deterministic update and redirects it toward approximately measurement-invisible directions.

The method also provides posterior uncertainty estimates. The posterior mean gives the final reconstruction, while the empirical sample standard-deviation map shows where the reconstruction is uncertain. In our experiments, Uncertainty is highest near anatomical boundaries, fine structures, and noise-sensitive regions, as well as structures weakly constrained by missing angular coverage in limited-angle CT. These regions often coincide with larger reconstruction errors, suggesting that the maps capture reconstruction ambiguity. By showing which structures are less reliably determined by the measured data, the maps improve the interpretability of learned reconstructions in ambiguous inverse problems.

Several directions remain for future work. First, although we test Poisson-corrupted measurements, the current posterior model uses a Gaussian approximation to the CT noise model. This approximation is reasonable when photon counts are not too low, but it may become inaccurate in very low-dose CT. A natural extension is to use the Poisson likelihood more directly. In that case, the data-fidelity term and the pullback metric would depend on the predicted photon counts at the current reconstruction. This could give a more accurate uncertainty model in strongly noise-dominated settings.

Second, the current method requires iterative linear solves to apply the inverse pullback metric during the reverse process. This makes the method more expensive than a Euclidean update. Future work should therefore develop faster matrix-free methods for applying $G_\gamma^{-1}$.
% \begin{equation}
% G_\gamma^{-1},
% \qquad
% G_\gamma
% =A_\Omega^\top \Sigma^{-1} A_\Omega
% +\gamma I .
% \label{eq:conclusion_inverse_pullback_metric}
% \end{equation}
Possible directions include preconditioned conjugate gradients, warm-starting the solver across reverse iterations, adaptive stopping criteria, multigrid methods, low-rank approximations, and learned approximations of the inverse metric. Faster solvers would make the method easier to scale to larger images and three-dimensional CT volumes. 

Finally, future work will consider latent-space versions of the method, to accelerate reconstruction speed, which would require a latent pullback metric. 

% Finally, future work will consider latent-space versions of the method. Instead of sampling directly in image space, we could write the reconstruction as $x=D(z)$, where $D$ is a learned decoder, and then sample over the latent variable $z$. The corresponding latent pullback metric would be
% \begin{equation}
% G_z
% =J_D(z)^\top
% A_\Omega^\top
% \Sigma^{-1}
% A_\Omega
% J_D(z)
% +\gamma I ,
% \label{eq:conclusion_latent_pullback_metric}
% \end{equation}
% where $J_D(z)$ is the decoder Jacobian. This could reduce the sampling dimension and make the updates more efficient. However, it also introduces new challenges, such as possible information loss in the decoder, limited latent-space expressivity, and the cost of approximating decoder Jacobians.
\FloatBarrier
\bibliographystyle{plain}
\bibliography{references}

%%%%%%%%%%%%%%%%%%%%%%%%%%%%%%%%%%%%%%%%%%%%%%%%%%%%%
%%%%%%%%%%%%%%%%%%%%%%%%%%%%%%%%%%%%%%%%%%%%%%%%%%%%%
%%%%%%%%%%%%%%%%%%%%%%%%%%%%%%%%%%%%%%%%%%%%%%%%%%%%%
%%%%%%%%%%%%%%%%%%%%%%%%%%%%%%%%%%%%%%%%%%%%%%%%%%%%%
%%%%%%%%%%%%%%%%%%%%%%%%%%%%%%%%%%%%%%%%%%%%%%%%%%%%%
%%%%%%%%%%%%%%%%%%%%%%%%%%%%%%%%%%%%%%%%%%%%%%%%%%%%%

\appendix
\section{Theoretical Support}
\label{appendix}

\subsection{Gaussian Approximation of CT Measurements}
\label{app:gaussian-approx}

This subsection motivates the approximate linear--Gaussian model
\begin{equation*}
    y=A_\Omega x+\varepsilon,
    \qquad
    \varepsilon\sim\mathcal N(0,\Sigma),
\end{equation*}
used in the main text. The derivation starts from an idealized Poisson
photon-counting model and explains the diagonal, signal-dependent covariance
of the post-log measurements. It also describes the fixed plug-in precision
used during reconstruction.

\subsubsection{Poisson photon-counting model}

Under a monoenergetic Beer--Lambert model, the expected photon count at
detector bin $i$ is~\cite{Whiting2006Sinogram}
\begin{equation}
    \lambda_i
    \;=\;
    I_0e^{-(A_\Omega x)_i},
    \label{eq:beer-lambert}
\end{equation}
where $I_0$ is the incident photon count, $x\in\mathbb R^n$ is the image of
linear-attenuation coefficients, and $(A_\Omega x)_i$ is the line integral
for ray $i$. The detected count is modeled as
\begin{equation}
    Y_i\sim\operatorname{Poisson}(\lambda_i),
    \qquad
    \mathbb E[Y_i]=\lambda_i,
    \qquad
    \operatorname{Var}(Y_i)=\lambda_i.
\end{equation}
Thus, the variance of the raw count depends on its expected value.

This model is an idealization. Real CT measurements may additionally include
electronic noise, scatter, beam hardening, detector correlations, and effects
introduced by calibration and preprocessing~\cite{Whiting2006Sinogram,xu2009electronic}.

\subsubsection{Gaussian approximation of Poisson counts}

For a sufficiently large expected transmitted count, a Poisson random
variable is well approximated by a Gaussian random variable with the same
mean and variance.
\vspace{0.5em}
\begin{proposition}[Gaussian limit of the Poisson distribution]
\label{prop:poisson-gaussian}
Let $Y_\lambda\sim\operatorname{Poisson}(\lambda)$ and define
\begin{equation*}
    Z_\lambda
    \;:=\;
    \frac{Y_\lambda-\lambda}{\sqrt{\lambda}}.
\end{equation*}
Then $Z_\lambda\xrightarrow{d}\mathcal N(0,1)$ as $\lambda\to\infty$.
\end{proposition}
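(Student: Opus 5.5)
We prove the convergence with characteristic functions and Lévy's continuity theorem. The characteristic function of $Y_\lambda\sim\operatorname{Poisson}(\lambda)$ is
\begin{equation*}
    \varphi_{Y_\lambda}(s)
    \;=\;
    \mathbb E\bigl[e^{isY_\lambda}\bigr]
    \;=\;
    \sum_{k\ge 0} e^{isk}\,e^{-\lambda}\frac{\lambda^k}{k!}
    \;=\;
    \exp\!\bigl(\lambda(e^{is}-1)\bigr),
\end{equation*}
obtained by summing the exponential series. For the standardized variable $Z_\lambda=(Y_\lambda-\lambda)/\sqrt{\lambda}$, the affine transformation rule gives
\begin{equation*}
    \varphi_{Z_\lambda}(s)
    \;=\;
    e^{-is\sqrt{\lambda}}\,\varphi_{Y_\lambda}\!\bigl(s/\sqrt{\lambda}\bigr)
    \;=\;
    \exp\!\Bigl(\lambda\bigl(e^{is/\sqrt{\lambda}}-1\bigr)-is\sqrt{\lambda}\Bigr).
\end{equation*}

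The key step is to show that the exponent converges pointwise to $-s^2/2$. Fix $s\in\mathbb R$ and set $h=s/\sqrt{\lambda}\to0$. The Taylor bound $|e^{ih}-1-ih+h^2/2|\le |h|^3/6$, valid for real $h$, gives
\begin{equation*}
    \lambda\bigl(e^{ih}-1\bigr)-is\sqrt{\lambda}
    \;=\;
    \lambda\Bigl(ih-\tfrac{h^2}{2}+R(h)\Bigr)-is\sqrt{\lambda}
    \;=\;
    -\tfrac{s^2}{2}+\lambda R(h),
\end{equation*}
because $\lambda\cdot ih=is\sqrt{\lambda}$ cancels the centering term and $\lambda h^2/2=s^2/2$. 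The remainder satisfies $|\lambda R(h)|\le \lambda|s|^3/(6\lambda^{3/2})=|s|^3/(6\sqrt{\lambda})\to0$ as $\lambda\to\infty$. Since $\exp$ is continuous on $\mathbb C$, it follows that $\varphi_{Z_\lambda}(s)\to e^{-s^2/2}$ for every $s$.

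The limit $e^{-s^2/2}$ is the characteristic function of $\mathcal N(0,1)$, which is continuous at $0$. Lévy's continuity theorem therefore gives $Z_\lambda\xrightarrow{d}\mathcal N(0,1)$. The theorem applies to an arbitrary real-indexed family $\lambda\to\infty$: the pointwise limit holds along every sequence $\lambda_j\to\infty$, and convergence in distribution along every such sequence is equivalent to convergence of the family.

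The only delicate point is controlling the remainder uniformly enough to pass the limit, and the explicit cubic bound above handles it. A fallback route for integer-valued $\lambda=N$ would write $Y_N$ as a sum of $N$ i.i.d.\ $\operatorname{Poisson}(1)$ variables and apply the classical central limit theorem, then extend to real $\lambda$ by monotone coupling. The characteristic-function argument avoids that extension.
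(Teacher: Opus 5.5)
Your proof is correct and follows essentially the same route as the paper's: compute the transform of the Poisson law, standardize, Taylor-expand the exponent to isolate the quadratic term, and invoke a continuity theorem. The only differences are that you use characteristic functions with L\'evy's theorem and give an explicit cubic remainder bound $|s|^3/(6\sqrt{\lambda})$, whereas the paper works with the moment-generating function $\exp(\lambda(e^{t}-1))$, an $O(\lambda^{-3/2})$ Taylor term, and the continuity theorem for moment-generating functions.
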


\begin{proof}
The moment-generating function of $Y_\lambda$ is
\begin{align*}
    M_{Y_\lambda}(t)
    &\;=\;
    \mathbb E[e^{tY_\lambda}]\\
    &\;=\;
    \sum_{k=0}^{\infty}
    e^{tk}\frac{\lambda^k e^{-\lambda}}{k!}\\
    &\;=\;
    \exp\!\left(\lambda(e^t-1)\right).
\end{align*}
Consequently,
\begin{align*}
    M_{Z_\lambda}(t)
    &\;=\;
    e^{-t\sqrt{\lambda}}
    M_{Y_\lambda}\!\left(\frac{t}{\sqrt{\lambda}}\right)\\
    &\;=\;
    \exp\!\left[
        -t\sqrt{\lambda}
        +\lambda\left(e^{t/\sqrt{\lambda}}-1\right)
    \right].
\end{align*}
For fixed $t$,
\begin{equation*}
    e^{t/\sqrt{\lambda}}
    \;=\;
    1+\frac{t}{\sqrt{\lambda}}
    +\frac{t^2}{2\lambda}
    +O(\lambda^{-3/2}),
\end{equation*}
and therefore
\begin{equation*}
    M_{Z_\lambda}(t)
    \;=\;
    \exp\!\left(
        \frac{t^2}{2}+O(\lambda^{-1/2})
    \right)
    \longrightarrow
    \exp\!\left(\frac{t^2}{2}\right).
\end{equation*}
The limiting function is the moment-generating function of
$\mathcal N(0,1)$. The continuity theorem for moment-generating functions
then gives the convergence in distribution.
\end{proof}

Equivalently, for sufficiently large $\lambda$,
\begin{equation*}
    Y_\lambda
    \approx
    \lambda+\sqrt{\lambda}\,\eta,
    \qquad
    \eta\sim\mathcal N(0,1),
\end{equation*}
or $Y_\lambda\approx\mathcal N(\lambda,\lambda)$. The quality of this
approximation depends on the transmitted count $\lambda$, rather than only
on the incident count $I_0$. A highly attenuated ray can therefore remain in
the low-count regime even when $I_0$ is relatively large
\cite{Whiting2006Sinogram,Wang2017HybridPrePost}.

\subsubsection{Propagation through the logarithmic transform}

Let $\widetilde Y_i>0$ denote the detected count after zero-count
correction, and define the post-log measurement
\begin{equation}
    y_i
    \;:=\;
    -\log\!\left(\frac{\widetilde Y_i}{I_0}\right).
    \label{eq:log-transform}
\end{equation}
For example, zero counts may be replaced using
\begin{equation*}
    \widetilde Y_i
    \;=\;
    \max(Y_i,Y_{\min}),
    \qquad Y_{\min}>0.
\end{equation*}
This correction prevents infinite log-sinogram values. However, it modifies
the count distribution and may introduce additional bias when the detected
photon count is low~\cite{Wang2017HybridPrePost}.

To derive a local Gaussian approximation, we temporarily neglect count
clipping and assume $\widetilde Y_i=Y_i>0$. Define
\begin{equation*}
    f(u)
    \;=\;
    -\log(u/I_0),
    \qquad
    f'(u)
    \;=\;
    -\frac{1}{u},
    \qquad
    f''(u)
    \;=\;
    \frac{1}{u^2}.
\end{equation*}
Since
\begin{equation*}
    f(\lambda_i)
    \;=\;
    -\log\!\left(\frac{\lambda_i}{I_0}\right)
    \;=\;
    (A_\Omega x)_i,
\end{equation*}
a first-order Taylor expansion of $f(Y_i)$ around $\lambda_i$ gives
\begin{align*}
    y_i
    &\approx
    f(\lambda_i)
    +
    f'(\lambda_i)(Y_i-\lambda_i)\\
    &=
    (A_\Omega x)_i
    -
    \frac{Y_i-\lambda_i}{\lambda_i}.
\end{align*}
For sufficiently large $\lambda_i$, Proposition~\ref{prop:poisson-gaussian}
gives
\begin{equation*}
    Y_i-\lambda_i
    \approx
    \sqrt{\lambda_i}\,\eta_i,
    \qquad
    \eta_i\sim\mathcal N(0,1).
\end{equation*}
Therefore,
\begin{equation}
    y_i
    \approx
    (A_\Omega x)_i
    -
    \frac{1}{\sqrt{\lambda_i}}\eta_i.
    \label{eq:post-log-local-gaussian}
\end{equation}
Because $-\eta_i$ has the same distribution as $\eta_i$, the leading
post-log error is approximately Gaussian with zero mean and variance
$\lambda_i^{-1}$.

The first-order approximation does not capture the bias introduced by the
nonlinear logarithm. Retaining the quadratic term gives
\begin{align*}
    \mathbb E[y_i]
    &\approx
    f(\lambda_i)
    +
    \frac{1}{2}f''(\lambda_i)
    \operatorname{Var}(Y_i)\\
    &=
    (A_\Omega x)_i
    +
    \frac{1}{2\lambda_i},
\end{align*}
where $\operatorname{Var}(Y_i)=\lambda_i$. The post-log variance is approximately
\begin{equation*}
    \operatorname{Var}(y_i)
    \;\approx \;
    \bigl(f'(\lambda_i)\bigr)^2\operatorname{Var}(Y_i)
    \;=\;
    \frac{1}{\lambda_i}.
\end{equation*}
The zero-mean Gaussian model used below neglects the leading bias
$1/(2\lambda_i)$. This approximation is most accurate when the transmitted
photon count $\lambda_i$ is sufficiently large.

\subsubsection{Post-log covariance and likelihood}

Under the additional approximation that detector-bin counts are
conditionally independent, the post-log errors are approximately
independent. Stacking the measurements gives
\begin{equation}
    y
    \approx
    A_\Omega x+\varepsilon,
    \qquad
    \varepsilon\sim\mathcal N(0,\Sigma),
    \qquad
    \Sigma
    \;=\;
    \operatorname{diag}\!\left(
        \frac{1}{\lambda_1},\ldots,\frac{1}{\lambda_m}
    \right).
    \label{eq:gaussian-final}
\end{equation}
The term \emph{post-log} indicates that this Gaussian model is applied after
the logarithmic transformation of the photon counts
\cite{Whiting2006Sinogram,Wang2017HybridPrePost}.

Using~\eqref{eq:beer-lambert},
\begin{equation*}
    \operatorname{Var}(\varepsilon_i)
    \;\approx \;
    \frac{1}{\lambda_i}
    \;=\;
    \frac{\exp((A_\Omega x)_i)}{I_0}.
\end{equation*}
Thus, highly attenuated rays receive lower precision. This motivates the
post-log weighted least-squares functional
\cite{Wang2017HybridPrePost,Zeng2017NoiseWeighting}
\begin{equation*}
    \Phi(x)
    \;=\;
    \frac{1}{2}
    \left\|
        \Sigma^{-1/2}(A_\Omega x-y)
    \right\|_2^2.
\end{equation*}

\subsubsection{Fixed effective precision}

The ideal post-log precision depends on the unknown transmitted count
$\lambda_i=I_0\exp[-(A_\Omega x)_i]$. An ideal count-based plug-in estimate
would be
\begin{equation}
    \widehat{\lambda}_i
    \;=\;
    I_0e^{-y_i}.
\end{equation}
In the implementation, we instead use a stabilized measurement-dependent
precision proxy. Zero detected counts are replaced by one count before
applying the logarithm. From the resulting post-log measurement $y_i$, we
first compute
\begin{equation}
    w_i^{\mathrm{raw}}
    \;=\;
    \frac{1}{y_i+\epsilon}.
\end{equation}
The weights are then normalized, scaled, and clipped:
\begin{equation}
    w_i
    \;=\;
    \operatorname{clip}
    \left(
        s\,
        \frac{w_i^{\mathrm{raw}}}
        {\frac{1}{m}\sum_{j=1}^m w_j^{\mathrm{raw}}+\delta},
        w_{\min},
        w_{\max}
    \right),
    \label{eq:precision-plugin}
\end{equation}
where $\epsilon>0$ prevents division by small values, $\delta=10^{-8}$
stabilizes the mean normalization, $s>0$ controls the overall precision
scale, and $w_{\min}$ and $w_{\max}$ limit the dynamic range.

We define the fixed effective precision and covariance by
\begin{equation}
    \widehat{\Sigma}^{-1}
    \;:=\;
    \operatorname{diag}(w_1,\ldots,w_m),
    \qquad
    \widehat{\Sigma}
    \;:=\;
    \operatorname{diag}(w_1^{-1},\ldots,w_m^{-1}).
\end{equation}
Once computed from the measured sinogram, these matrices are held fixed
throughout reconstruction. Unless stated otherwise, subsequent occurrences
of $\Sigma$ denote this fixed effective covariance
$\widehat{\Sigma}$, including the normalization and clipping above.

The resulting weights should be interpreted as a stabilized
measurement-dependent precision approximation rather than an unbiased
estimator of the ideal Poisson precision.

\paragraph{Validity regime.}
The Gaussian approximation is most accurate when all relevant transmitted
counts are sufficiently large. At low counts, the normal approximation
deteriorates, the logarithmic bias becomes appreciable, and zero-count
handling becomes important. A pre-log Poisson model or a hybrid pre-/post-log
model can be preferable in that regime
\cite{Whiting2006Sinogram,Wang2017HybridPrePost}.

For completeness, the Fisher information of the ideal raw-count model with
respect to $x$ is
\begin{equation*}
    \mathcal I(x)
    \;=\;
    A_\Omega^\top
    \operatorname{diag}(\lambda_1(x),\ldots,\lambda_m(x))
    A_\Omega.
\end{equation*}
Indeed,
$\nabla_x\lambda_i(x)=-\lambda_i(x)a_i$, where $a_i^\top$ is row $i$ of
$A_\Omega$, and hence
\begin{equation*}
    \mathcal I(x)
    \;=\;
    \sum_{i=1}^m
    \frac{1}{\lambda_i(x)}
    \nabla_x\lambda_i(x)\nabla_x\lambda_i(x)^\top
    \;=\;
    \sum_{i=1}^m\lambda_i(x)a_i a_i^\top.
\end{equation*}
A metric based directly on this quantity would depend on $x$. Such a
position-dependent metric generally requires additional drift terms in a
Riemannian Langevin construction; these terms vanish for the fixed metric
used in this work~\cite{Girolami2011Riemannian}.

\subsection{Orthogonal Decomposition of the Image Space}
\label{app:orthogonal-decomposition}

This subsection proves the decomposition stated in
Section~\ref{sub:direct_sum}.

\begin{proof}[Proof of the orthogonal direct-sum decomposition]
Let $r\in\mathcal R(A_\Omega^\top)$ and
$v\in\mathcal N(A_\Omega)$. There exists $z\in\mathbb R^m$ such that
$r=A_\Omega^\top z$, and therefore
\begin{equation*}
    r^\top v
    \;=\;
    z^\top A_\Omega v
    \;=\;
    0.
\end{equation*}
Thus,
$\mathcal R(A_\Omega^\top)\perp\mathcal N(A_\Omega)$. Moreover,
\begin{align*}
    \dim\mathcal R(A_\Omega^\top)
    +\dim\mathcal N(A_\Omega)
    &\;=\;
    \operatorname{rank}(A_\Omega)
    +n-\operatorname{rank}(A_\Omega)\\
    &\;=\;n.
\end{align*}
The two orthogonal subspaces therefore span $\mathbb R^n$, giving
\begin{equation*}
    \mathbb R^n
    \;=\;
    \mathcal R(A_\Omega^\top)
    \oplus
    \mathcal N(A_\Omega),
    \qquad
    \mathcal R(A_\Omega^\top)
    \;=\;
    \mathcal N(A_\Omega)^\perp.
\end{equation*}
\end{proof}

\subsection{Posterior Geometry Under Incomplete-View CT}
\label{app:posterior-geometry}

Consider
\begin{equation*}
    y\;=\;A_\Omega x+\varepsilon,
    \qquad
    \varepsilon\sim\mathcal N(0,\Sigma),
    \qquad
    \Sigma\succ0,
\end{equation*}
and let $p_0(x)$ denote the clean-image prior density. The posterior is
\begin{equation}
    p(x\mid y)
    \propto
    p_0(x)
    \exp\!\left[
        -\frac{1}{2}
        \left\|
            \Sigma^{-1/2}(A_\Omega x-y)
        \right\|_2^2
    \right],
\end{equation}
which is the standard Bayesian formulation of a linear inverse problem with
Gaussian noise~\cite{Stuart2010BayesianInverse}.

Assume that $\log p_0(x)$ is twice differentiable in a neighborhood of a posterior mode $x^\star$, and define the local prior curvature by
\begin{equation}
    H_{\mathrm{prior}}(x^\star)
    \;:=\;
    -\left.
    \nabla_x^2\log p_0(x)
    \right|_{x=x^\star}.
    \label{eq:prior-hessian}
\end{equation}
Up to an additive constant, the negative log-posterior is
\begin{equation*}
    -\log p(x\mid y)
    \;=\;
    -\log p_0(x)
    +
    \frac{1}{2}
    (A_\Omega x-y)^\top
    \Sigma^{-1}
    (A_\Omega x-y).
\end{equation*}
Since $A_\Omega$ is linear and $\Sigma$ is fixed, taking two derivatives
with respect to $x$ gives
\begin{equation}
    H_{\mathrm{post}}(x^\star)
    \;=\;
    H_{\mathrm{prior}}(x^\star)
    +
    A_\Omega^\top\Sigma^{-1}A_\Omega,
    \label{eq:posterior-hessian}
\end{equation}
where
\begin{equation*}
    H_{\mathrm{post}}(x^\star)
    \;:=\;
    -\left.
    \nabla_x^2\log p(x\mid y)
    \right|_{x=x^\star}.
\end{equation*}

Assume that $x^\star$ is a nondegenerate local posterior mode, so that
\begin{equation*}
    H_{\mathrm{post}}(x^\star)\succ0.
\end{equation*}
To obtain the local Laplace approximation, define the negative
log-posterior
\begin{equation*}
    \mathcal L(x)
    \;:=\;
    -\log p(x\mid y).
\end{equation*}
A second-order Taylor expansion around $x^\star$ gives
\begin{align*}
    \mathcal L(x)
    &\; \approx \;
    \mathcal L(x^\star)
    +
    \nabla\mathcal L(x^\star)^\top(x-x^\star)\\
    &\quad+
    \frac{1}{2}
    (x-x^\star)^\top
    H_{\mathrm{post}}(x^\star)
    (x-x^\star).
\end{align*}
Because $x^\star$ is a local posterior mode,
$\nabla\mathcal L(x^\star)=0$. Therefore,
\begin{equation*}
    p(x\mid y)
    \propto
    \exp\!\left[
        -\frac{1}{2}
        (x-x^\star)^\top
        H_{\mathrm{post}}(x^\star)
        (x-x^\star)
    \right]
\end{equation*}
locally around $x^\star$. This is the kernel of a Gaussian distribution,
and hence
\begin{equation}
    p(x\mid y)
    \;\approx \;
    \mathcal N\!\left(
        x^\star,
        H_{\mathrm{post}}(x^\star)^{-1}
    \right)
    \label{eq:local-laplace-approximation}
\end{equation}
in a neighborhood of the mode~\cite{TierneyKadane1986Laplace}.

\vspace{0.5em}
\begin{proposition}[Anisotropic posterior curvature]
\label{prop:anisotropic-curvature}
Suppose $\Sigma\succ0$. Then:
\begin{enumerate}
    \item for every $v\in\mathcal N(A_\Omega)$,
    \begin{equation*}
        A_\Omega^\top\Sigma^{-1}A_\Omega v=0;
    \end{equation*}

    \item for every $v\in\mathcal N(A_\Omega)$,
    \begin{equation*}
        v^\top H_{\mathrm{post}}(x^\star)v
        \;=\;
        v^\top H_{\mathrm{prior}}(x^\star)v;
    \end{equation*}

    \item for every nonzero $u\in\mathcal R(A_\Omega^\top)$,
    \begin{equation*}
        u^\top A_\Omega^\top\Sigma^{-1}A_\Omega u>0.
    \end{equation*}
\end{enumerate}
\end{proposition}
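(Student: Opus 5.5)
The plan is to prove the three items in order, since each builds on the previous one. Everything follows from the linear-algebra facts already established: the Hessian identity \eqref{eq:posterior-hessian} and the orthogonal decomposition of Proposition~\ref{thm:orthogonal-decomposition}.

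For item~1, take $v\in\mathcal N(A_\Omega)$. Then $A_\Omega v=0$, and therefore
\begin{equation*}
A_\Omega^\top\Sigma^{-1}A_\Omega v=A_\Omega^\top\Sigma^{-1}0=0.
\end{equation*}
No further argument is needed.

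For item~2, we sandwich the Hessian identity \eqref{eq:posterior-hessian} between $v^\top$ and $v$. This gives
\begin{equation*}
v^\top H_{\mathrm{post}}(x^\star)v=v^\top H_{\mathrm{prior}}(x^\star)v+v^\top A_\Omega^\top\Sigma^{-1}A_\Omega v.
\end{equation*}
By item~1 the last term equals $v^\top 0=0$. The only point to note is that \eqref{eq:posterior-hessian} requires $\log p_0$ to be twice differentiable near $x^\star$. This is exactly the standing assumption made before the proposition, so $H_{\mathrm{prior}}(x^\star)$ is well defined.

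For item~3, the key step is to write the quadratic form as a weighted norm:
\begin{equation*}
u^\top A_\Omega^\top\Sigma^{-1}A_\Omega u=\|\Sigma^{-1/2}A_\Omega u\|_2^2 .
\end{equation*}
Here $\Sigma^{-1/2}$ is the symmetric positive-definite square root, which exists because $\Sigma\succ0$. Since $\Sigma^{-1/2}$ is invertible, this quantity is zero if and only if $A_\Omega u=0$. So it suffices to show that a nonzero $u\in\mathcal R(A_\Omega^\top)$ cannot lie in $\mathcal N(A_\Omega)$.

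This follows from Proposition~\ref{thm:orthogonal-decomposition}, which gives $\mathcal R(A_\Omega^\top)\cap\mathcal N(A_\Omega)=\{0\}$. Directly, if $u=A_\Omega^\top z$ and $A_\Omega u=0$, then
\begin{equation*}
\|u\|_2^2=z^\top A_\Omega u=0,
\end{equation*}
so $u=0$.

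The only mildly delicate point is the use of the square root $\Sigma^{-1/2}$ together with the trivial-intersection argument in item~3. Everything else is direct substitution.
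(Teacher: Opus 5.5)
Your proposal is correct and follows the paper's proof essentially step for step. Items 1 and 2 go by direct substitution into the Hessian identity \eqref{eq:posterior-hessian}, and item 3 writes the form as $\|\Sigma^{-1/2}A_\Omega u\|_2^2$ and uses $\mathcal R(A_\Omega^\top)=\mathcal N(A_\Omega)^\perp$ to get $A_\Omega u\neq0$. Your direct check $\|u\|_2^2=z^\top A_\Omega u=0$ is a valid self-contained substitute for citing that decomposition.
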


\begin{proof}
If $v\in\mathcal N(A_\Omega)$, then $A_\Omega v=0$. The first statement
follows immediately, and substitution into
\eqref{eq:posterior-hessian} proves the second.

For the third statement, let
$0\neq u\in\mathcal R(A_\Omega^\top)$. By
$\mathcal R(A_\Omega^\top)=\mathcal N(A_\Omega)^\perp$, we have
$u\notin\mathcal N(A_\Omega)$ and hence $A_\Omega u\neq0$. Therefore,
\begin{equation*}
    u^\top A_\Omega^\top\Sigma^{-1}A_\Omega u
    \;=\;
    \left\|\Sigma^{-1/2}A_\Omega u\right\|_2^2
    >0.
\end{equation*}
This argument does not require $A_\Omega$ to have full row rank.
\end{proof}

The proposition shows that the likelihood adds no curvature along exact
null-space directions and positive curvature along every nonzero direction
in $\mathcal R(A_\Omega^\top)$. The positive curvature can nevertheless be
very small along directions associated with small singular values of
$\Sigma^{-1/2}A_\Omega$; these are the weakly constrained, or effective
null-space, directions discussed in the main text. The resulting posterior
covariance also depends on the prior curvature and its cross-directional
couplings. It is therefore not generally correct to claim that every
null-space direction has larger posterior variance than every range-space
direction.

\subsection{Projection Operators and Conjugate Gradient Computation}
\label{app:projection-operators}

\begin{proposition}[Exact orthogonal projectors]
\label{prop:projection-operators}
Define
\begin{equation*}
    P_R
    \;:=\;
    A_\Omega^\top
    (A_\Omega A_\Omega^\top)^\dagger
    A_\Omega,
    \qquad
    P_N\;:=\;I_n-P_R,
\end{equation*}
where ${}^\dagger$ denotes the Moore--Penrose pseudoinverse. Then $P_R$
and $P_N$ are the Euclidean orthogonal projectors onto
$\mathcal R(A_\Omega^\top)$ and $\mathcal N(A_\Omega)$, respectively.
Consequently,
\begin{equation*}
    P_R^2\;=\;P_R,
    \quad
    P_N^2\;=\;P_N,
    \quad
    P_R^\top\;=\;P_R,
    \quad
    P_N^\top\;=\;P_N,
    \quad
    P_R P_N\;=\;P_N P_R\;=\;0,
    \quad
    P_R+P_N\;=\;I_n.
\end{equation*}
\end{proposition}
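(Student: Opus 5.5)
The plan is to work with the singular value decomposition of $A_\Omega$, or equivalently with the Penrose identities for the pseudoinverse. Write $A_\Omega = U S V^\top$, where $U\in\mathbb R^{m\times m}$ and $V\in\mathbb R^{n\times n}$ are orthogonal and $S$ carries the $r=\operatorname{rank}(A_\Omega)$ positive singular values $s_1,\dots,s_r$. Then
\begin{equation*}
    A_\Omega A_\Omega^\top = U S S^\top U^\top,
    \qquad
    (A_\Omega A_\Omega^\top)^\dagger = U (SS^\top)^\dagger U^\top,
\end{equation*}
where $(SS^\top)^\dagger$ is diagonal with entries $s_i^{-2}$ for $i\le r$ and zero otherwise. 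Substituting gives
\begin{equation*}
    P_R = V\, S^\top (SS^\top)^\dagger S\, V^\top = V \begin{pmatrix} I_r & 0\\ 0 & 0\end{pmatrix} V^\top .
\end{equation*}
This matrix is the orthogonal projector onto the span of the first $r$ right singular vectors $v_1,\dots,v_r$.

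Next I identify that span with $\mathcal R(A_\Omega^\top)$. Since $A_\Omega^\top = V S^\top U^\top$, its range is spanned by $s_i v_i$ for $i\le r$, which is exactly $\operatorname{span}\{v_1,\dots,v_r\}$. Symmetry, idempotence and the range of $P_R$ can then be read off from the block form above. For $P_N=I_n-P_R$, symmetry and idempotence follow from the expansion $(I-P_R)^2 = I - 2P_R + P_R^2 = I - P_R$. The range of $P_N$ is $\operatorname{span}\{v_{r+1},\dots,v_n\}=\mathcal R(A_\Omega^\top)^\perp$, and by Proposition~\ref{thm:orthogonal-decomposition} this equals $\mathcal N(A_\Omega)$. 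The remaining identities are then immediate: $P_RP_N = P_R - P_R^2 = 0$, $P_NP_R = 0$ similarly, and $P_R+P_N=I_n$ holds by definition.

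As an alternative that avoids the SVD, I could use $M=A_\Omega A_\Omega^\top$ and the Penrose identities $MM^\dagger M=M$ and $(M^\dagger)^\top = M^\dagger$. The second gives symmetry of $P_R$ directly. Idempotence needs the identity $A_\Omega^\top M^\dagger M = A_\Omega^\top$, which holds because $M^\dagger M$ is the orthogonal projector onto $\mathcal R(M^\top)=\mathcal R(A_\Omega)$. Establishing that identity, together with $\mathcal R(P_R)=\mathcal R(A_\Omega^\top)$, is the only nontrivial step, and this is why I expect the SVD route to be the cleanest: it makes the handling of the pseudoinverse of a possibly singular $A_\Omega A_\Omega^\top$ fully explicit. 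Everything else is routine algebra.
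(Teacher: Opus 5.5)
Your proposal is correct and follows essentially the same route as the paper. Both use an SVD of $A_\Omega$ to show that $P_R$ is the orthogonal projector onto the span of the right singular vectors with nonzero singular values; the paper writes this as $V_rV_r^\top$ using the compact SVD, and you write it as $V\operatorname{diag}(I_r,0)V^\top$ using the full one. That span is identified with $\mathcal R(A_\Omega^\top)$ and its orthogonal complement with $\mathcal N(A_\Omega)$, and all the listed identities then follow directly.
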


\begin{proof}
Let
$A_\Omega=U_r S_r V_r^\top$ be a compact singular-value decomposition,
where $S_r$ contains the positive singular values. Then
\begin{equation*}
    (A_\Omega A_\Omega^\top)^\dagger
    \;=\;
    U_r S_r^{-2}U_r^\top,
\end{equation*}
and hence
\begin{equation*}
    P_R
    \;=\;
    V_r V_r^\top.
\end{equation*}
The columns of $V_r$ form an orthonormal basis of
$\mathcal R(A_\Omega^\top)$, so $P_R$ is its Euclidean orthogonal
projector. The orthogonal complement is
$\mathcal N(A_\Omega)$, and therefore $P_N=I_n-P_R$ is the corresponding
null-space projector. All stated identities follow directly.
\end{proof}

\paragraph{Regularized range and null filters.}
For $\gamma_\star>0$, define
\begin{equation}
    P_{R,\gamma_\star}^{\Sigma}
    \;:=\;
    A_\Omega^\top
    (A_\Omega A_\Omega^\top+\gamma_\star\Sigma)^{-1}
    A_\Omega.
    \label{eq:regularized-range-filter}
\end{equation}
Let
$B:=A_\Omega^\top\Sigma^{-1}A_\Omega$. Direct multiplication shows that
\begin{equation*}
    (B+\gamma_\star I_n)^{-1}
    A_\Omega^\top\Sigma^{-1}
    \;=\;
    A_\Omega^\top
    (A_\Omega A_\Omega^\top+\gamma_\star\Sigma)^{-1}.
\end{equation*}
It follows that
\begin{equation*}
    P_{R,\gamma_\star}^{\Sigma}
    \;=\;
    (B+\gamma_\star I_n)^{-1}B,
\end{equation*}
and its complementary filter is
\begin{equation*}
    P_{N,\gamma_\star}^{\Sigma}
    \;:=\;
    I_n-P_{R,\gamma_\star}^{\Sigma}
    \;=\;
    \gamma_\star(B+\gamma_\star I_n)^{-1}.
\end{equation*}
For finite $\gamma_\star$, these matrices are symmetric regularized
filters, not exact projectors: their nontrivial eigenvalues lie strictly
between zero and one. As $\gamma_\star\downarrow0$,
\begin{equation*}
    P_{R,\gamma_\star}^{\Sigma}\longrightarrow P_R,
    \qquad
    P_{N,\gamma_\star}^{\Sigma}\longrightarrow P_N.
\end{equation*}

\subsubsection{Matrix-free computation via conjugate gradient}
\label{app:CG}

The method uses two types of regularized linear systems. Both coefficient
matrices are symmetric positive definite and can be solved by conjugate
gradient~\cite{Hestenes1952CG,Hansen2018AIRToolsII} without forming dense matrices.

\paragraph{Image-space system.}
Pullback-metric preconditioning uses
\begin{equation}
    (A_\Omega^\top\Sigma^{-1}A_\Omega+\gamma I_n)u\;=\;r.
    \label{eq:cg-image-system}
\end{equation}
For $p\in\mathbb R^n$, its matrix--vector product is
\begin{equation}
    (A_\Omega^\top\Sigma^{-1}A_\Omega+\gamma I_n)p
    \;=\;
    A_\Omega^\top[\Sigma^{-1}(A_\Omega p)]+\gamma p.
    \label{eq:cg-image-matvec}
\end{equation}
Thus, each CG iteration uses one forward projection, one application of the
diagonal precision, and one backprojection.

\paragraph{Measurement-space system.}
The approximately null-space-restricted perturbation and the regularized
data-consistency correction use
\begin{equation}
    (A_\Omega A_\Omega^\top+\gamma_\star\Sigma)h=b,
    \qquad
    \gamma_\star>0,
    \label{eq:cg-measurement-system}
\end{equation}
where $\gamma_\star=\gamma_{\mathrm{null}}$ for the null-space operation
and $\gamma_\star=\gamma_{\mathrm{proj}}$ for the data-consistency
correction. For $p\in\mathbb R^m$,
\begin{equation}
    (A_\Omega A_\Omega^\top+\gamma_\star\Sigma)p
    \;=\;
    A_\Omega(A_\Omega^\top p)+\gamma_\star\Sigma p.
    \label{eq:cg-measurement-matvec}
\end{equation}
For every nonzero $p$,
\begin{equation*}
    p^\top(A_\Omega A_\Omega^\top+\gamma_\star\Sigma)p
    \;=\;
    \|A_\Omega^\top p\|_2^2
    +\gamma_\star p^\top\Sigma p
    >0,
\end{equation*}
so the coefficient matrix is symmetric positive definite even when
$A_\Omega$ is rank deficient.

\vspace{0.5em}
\begin{remark}
Premultiplying~\eqref{eq:cg-measurement-system} by $\Sigma^{-1}$ gives the
algebraically equivalent system
\begin{equation*}
    (\Sigma^{-1}A_\Omega A_\Omega^\top+\gamma_\star I_m)h
    \;=\;
    \Sigma^{-1}b.
\end{equation*}
However, its coefficient matrix is generally not symmetric in the Euclidean
inner product when $\Sigma$ is non-scalar. Ordinary conjugate gradient
should therefore be applied to the symmetric system
\eqref{eq:cg-measurement-system}, not directly to this premultiplied form.
\end{remark}

Algorithm~\ref{alg:CG} states the measurement-space recurrence. The same
recurrence applies to~\eqref{eq:cg-image-system} after replacing the
matrix--vector product by~\eqref{eq:cg-image-matvec}.

\begin{algorithm}[t]
\caption{Conjugate Gradient for the Noise-Weighted Measurement-Space System}
\label{alg:CG}
\begin{algorithmic}[1]
\Require Right-hand side $b\in\mathbb R^m$, operator $A_\Omega$,
         covariance $\Sigma$, regularization $\gamma_\star>0$,
         relative tolerance $\tau$, maximum iterations $K$
\Ensure Approximate solution
        $h\approx(A_\Omega A_\Omega^\top+\gamma_\star\Sigma)^{-1}b$

\State $h^{(0)}\gets0$
\State $r^{(0)}\gets b$
\State $p^{(0)}\gets r^{(0)}$

\If{$\|b\|_2=0$}
    \State \Return $h^{(0)}$
\EndIf

\For{$j=0,1,\ldots,K-1$}
    \State $q^{(j)}\gets
    A_\Omega(A_\Omega^\top p^{(j)})
    +\gamma_\star\Sigma p^{(j)}$

    \State $a_j\gets
    \langle r^{(j)},r^{(j)}\rangle/
    \langle p^{(j)},q^{(j)}\rangle$

    \State $h^{(j+1)}\gets h^{(j)}+a_j p^{(j)}$
    \State $r^{(j+1)}\gets r^{(j)}-a_j q^{(j)}$

    \If{$\|r^{(j+1)}\|_2\leq\tau\|b\|_2$}
        \State \Return $h^{(j+1)}$
    \EndIf

    \State $\beta_j\gets
    \langle r^{(j+1)},r^{(j+1)}\rangle/
    \langle r^{(j)},r^{(j)}\rangle$

    \State $p^{(j+1)}\gets r^{(j+1)}+\beta_j p^{(j)}$
\EndFor

\State \Return $h^{(K)}$
\end{algorithmic}
\end{algorithm}

\subsection{Pullback Metric and Preconditioned Langevin Dynamics}
\label{app:fp-proof}

This subsection concerns the ideal continuous-time process with an exact,
time-independent target score. It does not by itself prove that the
finite-step practical algorithm, which uses a DDPM score approximation and
additional scheduled corrections, produces exact stochastic reconstruction samples.

Let
\begin{equation*}
    \pi(x)
    \propto
    p_0(x)e^{-\Phi(x)},
    \qquad
    \Phi(x)
    \;=\;
    \frac{1}{2}
    \|\Sigma^{-1/2}(A_\Omega x-y)\|_2^2,
\end{equation*}
and define
\begin{equation*}
    G_\gamma
    \;=\;
    A_\Omega^\top\Sigma^{-1}A_\Omega+\gamma I_n,
    \qquad
    \gamma>0.
\end{equation*}
Because $G_\gamma$ is constant and positive definite, so is
$M:=G_\gamma^{-1}$. Consider
\begin{equation}
    dX_s
    \;=\;
    M\nabla\log\pi(X_s)\,ds
    +
    \sqrt{2M}\,dW_s,
    \label{eq:fp-sde}
\end{equation}
where $W_s$ is an $n$-dimensional standard Brownian motion and
$\sqrt{2M}$ denotes any matrix $L$ satisfying $LL^\top=2M$. The variable
$s$ denotes continuous Langevin time and is distinct from a DDPM timestep.

Since $A_\Omega$ and the fixed effective covariance $\Sigma$ do not depend
on $x$, the metric is independent of position. The metric-derivative terms
that arise for a position-dependent Riemannian metric therefore vanish
\cite{Girolami2011Riemannian}.

\subsubsection{Fokker--Planck equation and invariance}

Let $\rho(x,s)$ be the density of $X_s$. Under standard regularity and
nonexplosion assumptions, its Fokker--Planck equation is
\begin{equation}
    \frac{\partial\rho}{\partial s}
    \;=\;
    -\nabla\cdot(M\nabla\log\pi\,\rho)
    +
    \nabla\cdot(M\nabla\rho).
    \label{eq:fp-our}
\end{equation}

\begin{lemma}[Diffusion covariance for the fixed drift]
\label{lem:noise-cov-unique}
Consider the constant-covariance family
\begin{equation}
    dX_s
    \;=\;
    M\nabla\log\pi(X_s)\,ds
    +
    L\,dW_s,
    \qquad
    D\;:=\;LL^\top\succ0.
    \label{eq:sde-family}
\end{equation}
The density $\pi$ is invariant for every smooth target in this family if
and only if $D=2M$.
\end{lemma}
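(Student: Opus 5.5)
The plan is to write the stationary Fokker--Planck operator for the family \eqref{eq:sde-family} and reduce invariance to a divergence identity. For constant $L$, the same Fokker--Planck formula used for \eqref{eq:fp-our} gives, at $\rho=\pi$,
\begin{equation*}
    \partial_s\rho\big|_{\rho=\pi}
    \;=\;
    -\nabla\cdot\bigl(M\nabla\pi\bigr)
    +\tfrac12\nabla\cdot\bigl(D\nabla\pi\bigr)
    \;=\;
    \nabla\cdot\Bigl(\bigl(\tfrac12 D-M\bigr)\nabla\pi\Bigr),
\end{equation*}
using $\pi\nabla\log\pi=\nabla\pi$. With $K:=\tfrac12D-M$, a constant symmetric matrix, this equals $\operatorname{tr}(K\nabla^2\pi)$. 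So $\pi$ is invariant if and only if $\operatorname{tr}(K\nabla^2\pi(x))=0$ for all $x$.

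The \emph{if} direction is immediate. If $D=2M$, then $K=0$ and the expression vanishes identically. This is exactly the computation already carried out in Section~\ref{sub:pullbacl}.

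The \emph{only if} direction is where the quantifier ``for every smooth target'' does the work, and this is the step I expect to need care. I need a smooth positive density whose Hessian detects every nonzero symmetric $K$. My first choice is a Gaussian $\pi=\mathcal N(0,C)$ with $C\succ0$. Then
\begin{equation*}
    \nabla^2\pi=\pi\,\bigl(C^{-1}xx^\top C^{-1}-C^{-1}\bigr),
\end{equation*}
so the condition becomes
\begin{equation*}
    x^\top C^{-1}KC^{-1}x=\operatorname{tr}(KC^{-1})\quad\text{for all }x.
\end{equation*}
The left side is a quadratic form in $x$ and the right side is a constant. Setting $x=0$ gives $\operatorname{tr}(KC^{-1})=0$, and then the quadratic form must vanish identically, so $C^{-1}KC^{-1}=0$ because $K$ is symmetric. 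Hence $K=0$, that is $D=2M$.

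A single Gaussian target already suffices, even $C=I$. The only subtlety to flag is to interpret ``invariant for every smooth target'' as requiring stationarity of the Fokker--Planck equation pointwise, under the same regularity and nonexplosion assumptions stated before \eqref{eq:fp-our}. Symmetry of $D$ (as $LL^\top$) and of $M$ ensures that $K$ is symmetric, which is needed to pass from a vanishing quadratic form to $K=0$. Positivity $D\succ0$ is not actually needed beyond guaranteeing that a diffusion exists.
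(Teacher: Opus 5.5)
Your proof is correct and follows the paper's argument. You reduce stationarity to $\nabla\cdot\bigl((\tfrac12 D-M)\nabla\pi\bigr)=0$ via $\pi\nabla\log\pi=\nabla\pi$, then test against a Gaussian target (the paper uses the standard Gaussian, your case $C=I$) to force $x^\top Kx=\operatorname{tr}K$ for all $x$, and conclude $K=0$ by symmetry; evaluating at $x=0$ to remove the trace term first makes that last step slightly more explicit than the paper's one-line conclusion.
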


\begin{proof}
If $D=2M$, substitution of $\rho=\pi$ into the Fokker--Planck equation
gives zero because $\pi\nabla\log\pi=\nabla\pi$.

Conversely, suppose invariance holds for every smooth target, and set
$C:=D/2-M$. Stationarity requires
$\nabla\cdot(C\nabla\pi)=0$ for every such $\pi$. Choose the standard
Gaussian density
$\pi(x)\propto\exp(-\|x\|_2^2/2)$. Then
\begin{equation*}
    \nabla\cdot(C\nabla\pi)
    \;=\;
    (x^\top Cx-\operatorname{tr}C)\pi(x).
\end{equation*}
This expression can vanish for every $x$ only if $C=0$, because $C$ is
symmetric. Hence $D=2M$.
\end{proof}

\begin{proposition}[Invariance and reversibility]
\label{prop:fp-invariance}
Assume that $\pi$ is a smooth, strictly positive probability density,
$\nabla\log\pi$ satisfies conditions ensuring a nonexplosive solution of
\eqref{eq:fp-sde}, and the required boundary terms vanish at infinity.
Then $\pi$ is a reversible invariant density of~\eqref{eq:fp-sde}.
\end{proposition}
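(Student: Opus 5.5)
The plan is to work directly with the Fokker--Planck equation \eqref{eq:fp-our} and to rewrite the generator in divergence (self-adjoint) form with respect to $\pi$. This gives both invariance and reversibility in one computation.

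\textbf{Step 1: probability flux.} Write the right-hand side of \eqref{eq:fp-our} as $-\nabla\cdot J[\rho]$ with
\begin{equation*}
    J[\rho] \;:=\; M\nabla\log\pi\,\rho - M\nabla\rho \;=\; -\,\pi\, M\nabla\!\left(\frac{\rho}{\pi}\right).
\end{equation*}
The second equality follows from $\nabla(\rho/\pi)=\nabla\rho/\pi-\rho\nabla\pi/\pi^2$ and $\nabla\pi/\pi=\nabla\log\pi$. This uses that $\pi>0$ is smooth. It also uses that $M=G_\gamma^{-1}$ is constant, so no metric-derivative (Christoffel-type) terms appear. Setting $\rho=\pi$ gives $J[\pi]=0$ pointwise. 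This is stronger than $\nabla\cdot J[\pi]=0$ and already yields stationarity, as in Lemma~\ref{lem:noise-cov-unique} with $D=2M$.

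\textbf{Step 2: reversibility.} The generator of \eqref{eq:fp-sde} acting on test functions is
\begin{equation*}
    \mathcal L f = (M\nabla\log\pi)\cdot\nabla f + \nabla\cdot(M\nabla f) = \pi^{-1}\nabla\cdot(\pi M\nabla f).
\end{equation*}
The last identity again uses $\nabla\pi=\pi\nabla\log\pi$ and the constancy of $M$. For smooth, suitably decaying $f,g$, integrate by parts:
\begin{equation*}
    \int g\,\mathcal L f\,\pi\,dx = -\int \nabla g^\top M\nabla f\,\pi\,dx.
\end{equation*}
The right-hand side is symmetric in $f,g$ because $M$ is symmetric. Hence $\mathcal L$ is symmetric in $L^2(\pi)$, which is detailed balance. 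Taking $g\equiv1$ gives $\int \mathcal L f\,\pi=0$, which is invariance in the weak sense. This is equivalent to Step 1 via the duality between the generator and the Fokker--Planck operator.

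\textbf{Step 3: rigor and the main obstacle.} The algebra is routine. The main obstacle is justifying passage from the symmetric Dirichlet form on test functions to the statement about the actual process. Concretely, this means dropping boundary terms at infinity and concluding that the law of $X_s$ stays $\pi$ and that time-reversal holds. I would lean on the stated hypotheses. Nonexplosion gives a well-defined Markov semigroup $P_s$. The vanishing of boundary terms justifies the integration by parts on $C_c^\infty$ functions, and hence on a core.

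I would then argue as follows. $\frac{d}{ds}\int P_s f\,\pi = \int \mathcal L P_s f\,\pi = 0$, so $\pi$ is invariant. Symmetry of the generator on a core extends to symmetry of $P_s$ in $L^2(\pi)$. From that, the time-reversed process started from $\pi$ has the same finite-dimensional laws.

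To avoid functional-analytic detail, I would state explicitly that these standard regularity assumptions are being invoked, as the proposition itself does. I would also note that $M\succ0$ constant makes \eqref{eq:fp-sde} uniformly elliptic with locally Lipschitz drift whenever $\nabla\log\pi$ is, so the usual existence theory applies.
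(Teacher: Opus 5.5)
Your proposal is correct and takes essentially the same route as the paper. The paper also substitutes $\rho=\pi$ into the Fokker--Planck equation, uses $\pi\nabla\log\pi=\nabla\pi$ to see that the stationary current $M\nabla\log\pi\,\pi-M\nabla\pi$ vanishes, and concludes detailed balance under the stated boundary assumptions. Your Step~2 (symmetry of $\mathcal L f=\pi^{-1}\nabla\cdot(\pi M\nabla f)$ in $L^2(\pi)$ via integration by parts) and Step~3 make explicit the ``zero current implies reversibility'' step and the regularity that the paper folds into its hypotheses.
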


\begin{proof}
Substituting $\rho=\pi$ into~\eqref{eq:fp-our} gives
\begin{align*}
    \left.\frac{\partial\rho}{\partial s}\right|_{\rho=\pi}
    &\;=\;
    -\nabla\cdot(M\nabla\log\pi\,\pi)
    +\nabla\cdot(M\nabla\pi)\\
    &\;=\;0.
\end{align*}
Thus, $\pi$ is invariant. Moreover, the stationary probability current is
\begin{equation*}
    J(x)
    \;:=\;
    M\nabla\log\pi(x)\,\pi(x)-M\nabla\pi(x)
    \;=\;0.
\end{equation*}
The vanishing current gives detailed balance and hence reversibility under
the stated boundary assumptions.
\end{proof}

\iffalse
\subsubsection{Convergence of the ideal continuous process}

Invariance alone does not guarantee convergence from every initial law.
One sufficient setting is as follows. Suppose $\pi$ satisfies a Poincar\'e
inequality, and let $h_s:=\rho(\cdot,s)/\pi$ with
$h_0-1\in L^2(\pi)$. Uniform positive definiteness of $M$ then yields
exponential $L^2(\pi)$ convergence, and consequently
\begin{equation*}
    \|\rho(\cdot,s)-\pi\|_{\mathrm{TV}}
    \leq
    \frac{1}{2}\|h_s-1\|_{L^2(\pi)}
    \leq
    C e^{-cs}
\end{equation*}
for constants $C,c>0$ determined by the initial density, the Poincar\'e
constant, and $M$; see~\cite{BakryGentilLedoux2014,Pavliotis2014}.

\begin{remark}[Role of $\gamma>0$]
At $\gamma=0$,
$G_0=A_\Omega^\top\Sigma^{-1}A_\Omega$ is singular whenever
$\mathcal N(A_\Omega)$ is nontrivial, so $G_0^{-1}$ is not defined. If one
replaced the inverse by the Moore--Penrose pseudoinverse, both the
preconditioned drift and diffusion would be annihilated on
$\mathcal N(A_\Omega)$, preventing exploration in those directions. Taking
$\gamma>0$ makes $G_\gamma$ positive definite. Along an exact null-space
direction, $G_\gamma^{-1}$ has eigenvalue $1/\gamma$, so decreasing
$\gamma$ increases rather than decreases the scale of the ideal Langevin
update in that direction.
\end{remark}
\fi

\subsubsection{Discrete-time approximation and matrix-free noise}

For an ideal fixed target and a step size $a_k>0$, Euler--Maruyama gives
\begin{equation}
    X^{(k+1)}
    \;=\;
    X^{(k)}
    +a_k G_\gamma^{-1}\nabla\log\pi(X^{(k)})
    +\sqrt{2a_k}\,Z^{(k)},
    \qquad
    Z^{(k)}\sim\mathcal N(0,G_\gamma^{-1}).
    \label{eq:em-discrete}
\end{equation}
The following construction samples $Z^{(k)}$ without forming a matrix square root.

\vspace{0.5em}
\begin{lemma}[Matrix-free noise sampler]
\label{lem:matrix-free-sampler}
Let $\xi\sim\mathcal N(0,I_m)$ and
$c\sim\mathcal N(0,I_n)$ be independent, and define
\begin{equation*}
    w
    \;=\;
    A_\Omega^\top\Sigma^{-1/2}\xi+\sqrt{\gamma}\,c,
    \qquad
    z
    \;=\;
    G_\gamma^{-1}w.
\end{equation*}
Then
$w\sim\mathcal N(0,G_\gamma)$ and
$z\sim\mathcal N(0,G_\gamma^{-1})$.
\end{lemma}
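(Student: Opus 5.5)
The plan is to use two standard facts about Gaussian vectors. First, a linear image of a Gaussian vector is Gaussian. Second, a sum of independent Gaussians is Gaussian, with mean and covariance obtained by adding.

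\textbf{Distribution of $w$.} Write $w=K\begin{pmatrix}\xi\\ c\end{pmatrix}$ with the block matrix $K:=\begin{pmatrix}A_\Omega^\top\Sigma^{-1/2} & \sqrt{\gamma}\,I_n\end{pmatrix}\in\mathbb R^{n\times(m+n)}$. By independence, the stacked vector $(\xi,c)$ is distributed as $\mathcal N(0,I_{m+n})$. Hence $w$ is Gaussian with mean zero and covariance $KK^\top$. Computing the product gives
\begin{equation*}
KK^\top=A_\Omega^\top\Sigma^{-1/2}\Sigma^{-1/2}A_\Omega+\gamma I_n=G_\gamma .
\end{equation*}
This uses that $\Sigma^{-1/2}$ is the symmetric square root of $\Sigma^{-1}$, so $(\Sigma^{-1/2})^\top\Sigma^{-1/2}=\Sigma^{-1}$. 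The cross terms vanish because $\xi$ and $c$ are independent. The diagonal effective covariance of Appendix~\ref{app:gaussian-approx} makes this square root explicit, although it exists for any $\Sigma\succ0$.

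\textbf{Distribution of $z$.} Since $\gamma>0$, $G_\gamma$ is symmetric positive definite, as noted after \eqref{eq:Ggamma}. It is therefore invertible, and $G_\gamma^{-1}$ is also symmetric. The vector $z=G_\gamma^{-1}w$ is a linear image of a Gaussian vector, so it is Gaussian with mean zero. Its covariance is
\begin{equation*}
G_\gamma^{-1}\,G_\gamma\,G_\gamma^{-\top}=G_\gamma^{-1}G_\gamma G_\gamma^{-1}=G_\gamma^{-1}.
\end{equation*}
This proves the second claim.

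\textbf{Main obstacle.} No step is genuinely hard. The only points needing care are the symmetry of $\Sigma^{-1/2}$ and of $G_\gamma^{-1}$, which make the covariance products collapse, and the positive definiteness of $G_\gamma$, which makes $z$ well defined.

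Two further remarks follow from this computation. The scaling $\sqrt{2a_k}\,z$ then has covariance $2a_kG_\gamma^{-1}$, matching \eqref{eq:em-discrete}. With an inexact CG solve, $z$ is replaced by $\hat z=z+e$, where $e=G_\gamma^{-1}(\text{residual})$. This accounts for the approximate covariance mentioned in the main text, but it lies outside the exact statement.
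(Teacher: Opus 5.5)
Your proof is correct and follows the same route as the paper's: you compute $\operatorname{Cov}(w)=A_\Omega^\top\Sigma^{-1}A_\Omega+\gamma I_n=G_\gamma$ as the covariance of a linear image of independent standard Gaussians, and then use the symmetry of $G_\gamma^{-1}$ to get $\operatorname{Cov}(z)=G_\gamma^{-1}G_\gamma G_\gamma^{-1}=G_\gamma^{-1}$. Writing $w$ as the block matrix $K$ applied to the stacked vector $(\xi,c)$ is simply a more explicit way of stating the paper's one-line covariance computation.
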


\begin{proof}
The vector $w$ is zero-mean Gaussian and
\begin{equation*}
    \operatorname{Cov}(w)
    \;=\;
    A_\Omega^\top\Sigma^{-1}A_\Omega+\gamma I_n
    \;=\;
    G_\gamma.
\end{equation*}
Since $G_\gamma^{-1}$ is symmetric,
\begin{equation*}
    \operatorname{Cov}(z)
    \;=\;
    G_\gamma^{-1}G_\gamma G_\gamma^{-1}
    \;=\;
    G_\gamma^{-1}.
\end{equation*}
\end{proof}

Drawing $\xi\sim\mathcal N(0,I_n)$ and using
$G_\gamma^{-1}\xi$ would instead produce covariance $G_\gamma^{-2}$ and
is therefore not equivalent.

Euler--Maruyama has weak order one over a fixed finite time interval under
standard smoothness and moment assumptions~\cite{KloedenPlaten1992}. This
fact alone does not prove an $O(a_k)$ bias for the invariant distribution.
Such stationary-bias or nonasymptotic sampling bounds require additional
assumptions, such as smoothness and strong log-concavity
\cite{Dalalyan2017}. Moreover, the practical method uses a time-dependent
DDPM score approximation, finite CG solves, scheduled data-consistency
corrections, and optional additional perturbations.
Proposition~\ref{prop:fp-invariance} therefore justifies the covariance matching in the
ideal continuous model; it does not establish exact posterior invariance of
the complete finite-step algorithm.

\section{Extra Experimental Results}

\subsection{Single-Image Quantitative Comparison}
\label{app:single_image_comparison}

Table~\ref{tab:single_image} presents a supplementary comparison on a
representative test image. For each stochastic diffusion-based method, we
generate $N=20$ reconstructions and compute PSNR and SSIM from their sample
mean; FBP produces a single deterministic reconstruction. Because these
results are based on only one image, they are intended to illustrate
method-specific behavior rather than establish dataset-level performance.
The overall observations are broadly consistent with the full
526-image evaluation in Table~\ref{tab:dm4ct_full}, although the ranking of
individual methods may vary between configurations.

\begin{table}[t]
\centering
\caption{Single-image PSNR (dB) and SSIM under the five DM4CT benchmark
configurations. Config~i: 40 views without noise; Config~ii: 20 views with
mild noise; Config~iii: 80 views with higher noise; Config~iv: 80 views with
noise and ring artifacts; Config~v: 40 views over the limited angular range
$[0,\tfrac{3}{4}\pi]$. For each stochastic method, the metrics are computed
on the mean of $N=20$ reconstructions. FBP is deterministic. Bold and
underlining indicate the best and second-best values, respectively, among
the diffusion-based methods, evaluated separately for each metric. Tied
values at the displayed precision receive the same formatting.}
\label{tab:single_image}
\begin{tabular}{lccccc}
\toprule
Method & Config~i & Config~ii & Config~iii & Config~iv & Config~v \\
\midrule
FBP
  & 26.70\,/\,0.70
  & 9.52\,/\,0.03
  & 12.45\,/\,0.08
  & 13.88\,/\,0.11
  & 25.33\,/\,0.70 \\
\midrule
DDS
  & 31.64\,/\,\underline{0.89}
  & 25.11\,/\,0.43
  & 18.25\,/\,0.18
  & 20.29\,/\,0.23
  & \underline{31.34}\,/\,\underline{0.88} \\
DPS
  & 30.90\,/\,0.87
  & 27.92\,/\,\textbf{0.78}
  & 28.73\,/\,\underline{0.80}
  & 28.80\,/\,\underline{0.81}
  & 30.60\,/\,0.86 \\
PGDM
  & 30.97\,/\,0.86
  & 27.28\,/\,\underline{0.77}
  & 28.59\,/\,\underline{0.80}
  & 28.70\,/\,\underline{0.81}
  & 30.43\,/\,0.85 \\
ReSample
  & \underline{32.37}\,/\,0.88
  & \underline{28.39}\,/\,\underline{0.77}
  & \underline{29.21}\,/\,\underline{0.80}
  & \textbf{29.55}\,/\,\underline{0.81}
  & 28.89\,/\,0.82 \\
\textbf{Ours}
  & \textbf{33.33}\,/\,\textbf{0.90}
  & \textbf{28.47}\,/\,\textbf{0.78}
  & \textbf{29.24}\,/\,\textbf{0.81}
  & \underline{29.40}\,/\,\textbf{0.83}
  & \textbf{33.18}\,/\,\textbf{0.90} \\
\bottomrule
\end{tabular}
\end{table}

In Config~i, the proposed method obtains 33.33~dB/0.90, compared with 32.37~dB/0.88 for ReSample. On this image, the corresponding improvements are 0.96~dB in PSNR and 0.02 in SSIM. This result is consistent with the
full-dataset trend reported in Table~\ref{tab:dm4ct_full}.

In Config~ii, the proposed method achieves the highest PSNR (28.47~dB), exceeding ReSample by 0.08~dB, and ties with DPS for the highest SSIM at the displayed precision (0.78). In Config~iii, it obtains 29.24~dB/0.81, narrowly exceeding ReSample in PSNR (29.21~dB) and
achieving a higher SSIM than DPS, PGDM, and ReSample (0.80). DDS performs substantially worse in Configs~ii--iv than in Config~i. Although this observation is consistent with sensitivity to measurement noise, a
single-image comparison cannot determine the underlying cause.

In Config~iv, ReSample achieves the highest PSNR (29.55~dB), whereas the proposed method achieves the second-highest PSNR (29.40~dB) and the highest SSIM (0.83). One possible explanation is that a strong
data-consistency operation may retain components associated with the ring artifact. Nevertheless, the different PSNR and SSIM rankings only demonstrate a trade-off between reference-based pixel error and
structural similarity on this image; they do not by themselves establish the underlying mechanism.

In Config~v, the proposed method obtains 33.18~dB/0.90. It exceeds DDS by 1.84~dB in PSNR and 0.02 in SSIM, and ReSample by 4.29~dB and 0.08, respectively. This result is consistent with the intended benefit of adapting the reconstruction update to the directional sensitivity of the limited-angle acquisition geometry. Any causal interpretation should, however, be supported by the geometry ablation and full-dataset evaluation rather than by this single-image comparison alone.

\subsection{Sample-Variability Maps for Baseline Methods}
\label{app:baseline_uncertainty}

Figure~\ref{fig:baseline_uncertainty} compares the pixel-wise sample standard deviation computed from $N=20$ stochastic reconstructions for each method under the five controlled acquisition settings introduced in
Section~\ref{subsec:exp_visual_comparison}. The row corresponding to the proposed method is included for direct visual comparison with the diffusion-based baselines. All maps are displayed using the same intensity scale.

\begin{figure}[htbp]
    \centering
    \includegraphics[width=0.95\linewidth]{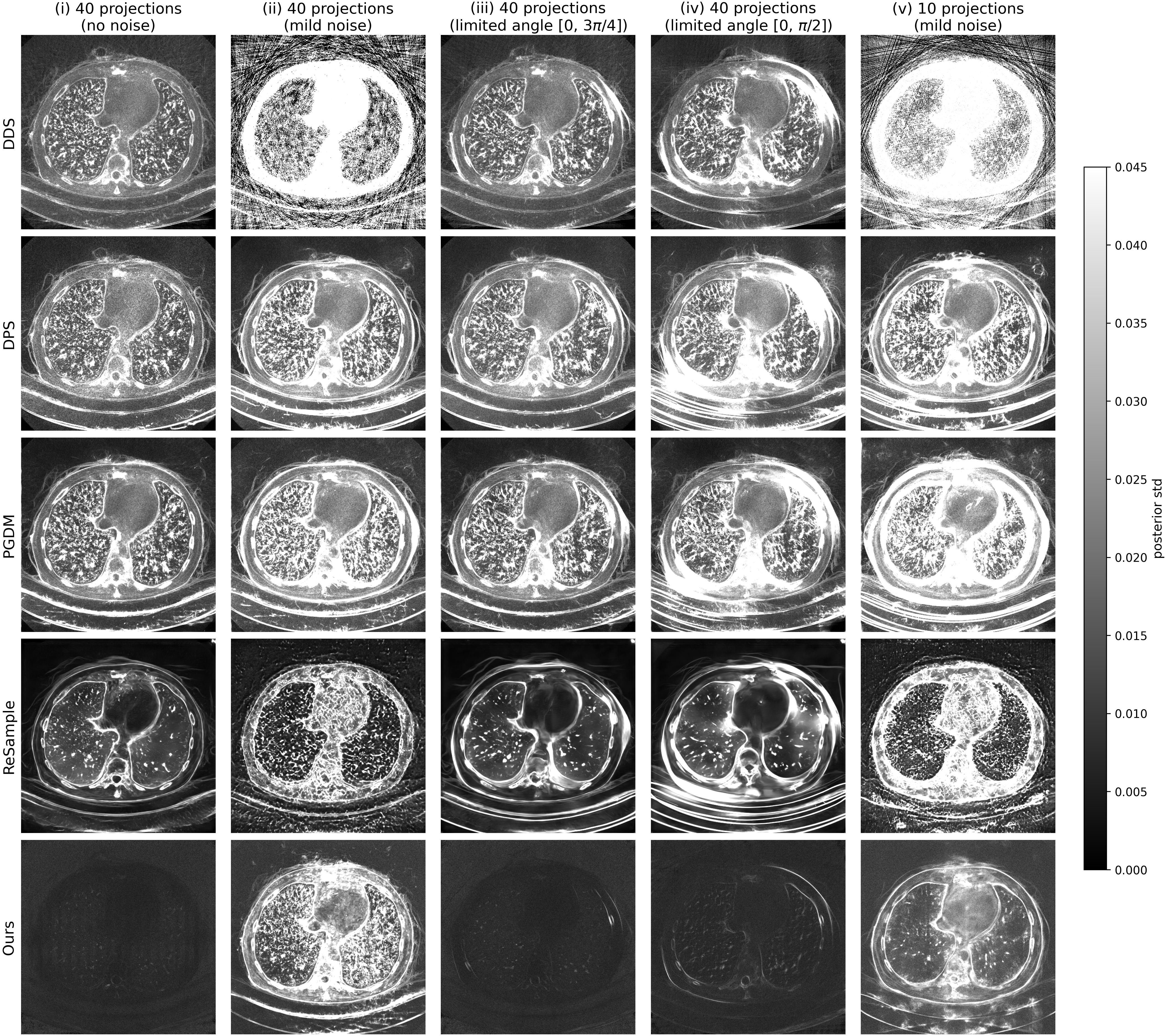}
    \caption{Pixel-wise sample standard deviation over $N=20$
    reconstructions for DDS, DPS, PGDM, and ReSample under the five
    controlled acquisition settings. Rows correspond to methods and columns
    to settings (i--v). Darker shading denotes lower empirical sample
    variability. These maps describe the variability generated by each
    sampler; they do not, by themselves, establish uncertainty calibration.}
    \label{fig:baseline_uncertainty}
\end{figure}

As summarized in Table~\ref{tab:posterior_uncertainty}, the proposed method has a lower mean pixel-wise sample standard deviation than ReSample in Settings~i, iii, iv, and v, whereas ReSample has the lower value in Setting~ii. DDS, DPS, and PGDM have higher mean sample variability than the proposed method in all five settings. Spatially, the baseline maps commonly show increased variability around anatomical boundaries and
undersampling-related streaks. The proposed method produces comparatively concentrated variability in Settings~i, iii, and iv, while its variability
increases under noisy or more severely undersampled acquisition, as seen in Settings~ii and v.

These results should be interpreted together with reconstruction quality and measurement residuals. High sample variability may indicate measurement-induced ambiguity, but it may also arise from sampling
instability. Conversely, low variability may indicate either a concentrated conditional distribution or insufficient exploration by the sampler. Accordingly, the maps should be viewed as empirical sample-variability
diagnostics rather than as evidence of calibrated posterior uncertainty. Their spatial patterns are most informative when considered jointly with the posterior-mean and error maps in Figure~\ref{fig:ours_uncertainty}.

\end{document}